\newcolumntype{C}{>{\centering\arraybackslash}X}
\tikzset{snake it/.style={decorate, decoration=snake}}
\newcommand*{\rom}[1]{\expandafter\@slowromancap\romannumeral #1@}
\numberwithin{equation}{section}
\theoremstyle{plain}
\newtheorem{theorem}{Theorem}
\numberwithin{theorem}{section}
\newtheorem{proposition}[theorem]{Proposition}
\theoremstyle{definition}
\newtheorem{definition}[theorem]{Definition}
\theoremstyle{remark}
\newtheorem{remark}[theorem]{Remark}
\theoremstyle{remark}
\theoremstyle{remark}
\newtheorem{example}[theorem]{Example}
\DeclareMathOperator*{\argmin}{arg\,min}
\newcommand{\smo}{\setminus \left\{\mathbf{0}\right\}}
\newcommand{\partyf}[2]{\frac{\partial #2}{\partial y_{#1}}}
\newcommand{\abs}[1]{\left|#1\right|}                 
\newcommand{\paren}[1]{\left(#1\right)}               
\newcommand{\bparen}[1]{\left[#1\right]}               
\newcommand{\sparen}[1]{\left\{#1\right\}}      
\newcommand{\dd}{\mathrm{d}}  
\newcommand{\Cc}{\mathcal{C}}
\newcommand{\Dc}{\mathcal{D}}
\newcommand{\Ec}{\mathcal{E}}
\newcommand{\Fc}{\mathcal{F}}
\newcommand{\Sc}{\mathcal{S}}
\newcommand{\Tc}{\mathcal{T}}
\newcommand{\WF}{\mathrm{WF}}                         
\newcommand{\wf}{\mathrm{WF}}                         
\newcommand{\vc}{\mathbf{c}}
\newcommand{\vO}{\mathbf{O}}
\newcommand{\vx}{{\mathbf{x}}}
\newcommand{\vy}{{\mathbf{y}}}
\newcommand{\vxi}{{\boldsymbol{\xi}}}
\newcommand{\vxio}{{\boldsymbol{\xi}_0}}
\newcommand{\vs}{{\boldsymbol{\sigma}}} 
\newcommand{\xo}{x_0}
\newcommand{\xoj}{(x_0)_j} 
\newcommand{\xoo}{(x_0)_1} 
\newcommand{\xot}{(x_0)_2} 
\newcommand{\rr}{{{\mathbb R}}}
\newcommand{\rtwo}{{{\mathbb R}^2}}
\newcommand{\rn}{{{\mathbb R}^n}}
\newcommand{\st}{\hskip 0.3mm : \hskip 0.3mm}
\newcommand{\be}{\begin{equation}}
\newcommand{\ee}{\end{equation}}
\newcommand{\bea}{\begin{eqnarray}}
\newcommand{\eea}{\end{eqnarray}}
\newcommand{\bean}{\begin{eqnarray*}}
\newcommand{\eean}{\end{eqnarray*}}
\newcommand{\bel}[1]{\begin{equation}\label{#1}}
\newcommand{\eel}[1]{{\label{#1}\end{equation}}}
\newcommand{\Lot}{\Lambda_{12}}
\newcommand{\Lto}{\Lambda_{21}}
\newcommand{\Lij}{\Lambda_{ij}}
\newcommand{\lot}{\lambda_{12}}
\newcommand{\lto}{\lambda_{21}}
\newcommand{\lij}{\lambda_{ij}}
\newcommand{\xxi}{(\vx,\vxi)}
\newcommand{\yeta}{(\vy,\boldsymbol{\eta})}
\renewcommand{\emph}[1]{\textit{#1}}
\title[A joint reconstruction and lambda tomography regularization technique]{A joint reconstruction and lambda tomography regularization technique for energy-resolved X-ray imaging
\\{\footnotesize\ddmmyyyydate\today~\currenttime}}
\author{James Webber}
\address{Department of Electrical and Computer
Engineering, Tufts University, Medford, MA USA}
\email{James.Webber@tufts.edu}
\author{Eric Todd Quinto}
\address{Department
of Mathematics, Tufts University, Medford, MA USA}
\email{Todd.Quinto@tufts.edu}
\author{Eric L. Miller}
\address{Department of Electrical and Computer
Engineering, Tufts University, Medford, MA USA}
\email{elmiller@ece.tufts.edu}
\begin{document}
\begin{abstract}
We present new joint reconstruction and regularization techniques
inspired by ideas in microlocal analysis and lambda tomography, for
the simultaneous reconstruction of the attenuation coefficient and
electron density from X-ray transmission (i.e., X-ray CT) and
backscattered data (assumed to be primarily Compton scattered). To
demonstrate our theory and reconstruction methods, we consider the
``parallel line segment" acquisition geometry of
\cite{webber2019compton}, which is motivated by system architectures
currently under development for airport security screening. We first
present a novel microlocal analysis of the parallel line geometry
which explains the nature of image artefacts when the attenuation
coefficient and electron density are reconstructed separately. We next
introduce a new joint reconstruction scheme for low effective $Z$
(atomic number) imaging ($Z<20$) characterized by a regularization
strategy whose structure is derived from lambda tomography principles
and motivated directly by the microlocal analytic results.  Finally we
show the effectiveness of our method in combating noise and image
artefacts on simulated phantoms.
\end{abstract}

\maketitle

\section{Introduction} In this paper we introduce new joint
reconstruction and regularization techniques based on ideas in
microlocal analysis and lambda tomography \cite{FFRS, FRS} (see also
\cite{Lo2008} for related work). We consider the simultaneous
reconstruction of the attenuation coefficient $\mu_E$ and electron
density $n_e$ from joint X-ray CT (transmission) and Compton scattered
data, with particular focus on the parallel line segment X-ray scanner
displayed in figures \ref{fig1} and \ref{fig1.1}. The acquisition
geometry in question is based on a new airport baggage scanner
currently in development, and has the ability to measure X-ray CT and
Compton data simultaneously.  The line segment geometry was first
considered in \cite{webber2019compton}, where injectivity results are
derived in Compton Scattering Tomography (CST). We provide a stability
analysis of the CST problem of \cite{webber2019compton} here, from a
microlocal perspective. The scanner depicted in figure \ref{fig1}
consists of a row of fixed, switched, monochromatic fan beam sources
($S$), a row of detectors ($D_A$) to measure the transmitted photons,
and a second (slightly out of plane) row of detectors ($D_C$) to
measure Compton scatter. The detectors are assumed to energy-resolved,
a common assumption in CST
\cite{palamodov2011analytic,truong2019compton,RigaudComptonSIIMS2017,
rigaud20183d, webber2020microlocal}, and the sources are fan-beam (in
the plane) with opening angle $\pi$ (so there is no restriction due to
cropped fan-beams). 
\begin{figure}[!h]
\centering
\begin{tikzpicture}[scale=5]
\draw [very thick] (-0.75,0)--(0.75,0)node[right] {$D_A$ at $\{z=2-r_m\}$};
\draw [dashed] (-0.75,1)--(0.75,1)node[right] {$D_C$};
\draw [very thick] (-0.75,1.5)--(0.75,1.5)node[right] {$S$};
\draw [->,line width=1pt] (0,0.75)--(0,1.7)node[right] {$x_2$};
\draw [->,line width=1pt] (0,0.75)--(0.9,0.75)node[below] {$x_1$};
\node at (0,0.47+0.23) {$O$};
\draw [->] (-0.75,1.5)--(0.75,0);
\draw [->] (0.75,1.5)--(-0.75,0);
\draw [->] (0.3,1.5)--(-0.1,0);
\draw [->] (-0.3,1.5)--(0.5,0);
\draw [->] (-0.7,1.5)--(-0.4,0);
\node at (-0.6,0.75) {$L$};
\draw [->] (0.5,1.5)--(0.65,0);
\coordinate (origo1) at (0,0.75);
\coordinate (pivot1) at (-0.75,0);
\coordinate (bob1) at (-0.75,1.5);
\draw pic[fill=orange, <->,"$C_R$", angle eccentricity=1.7] {angle = bob1--origo1--pivot1};
\draw [<->] (0,1.55)--(0.75,1.55);
\node at (0.375,1.58) {$a$};
\end{tikzpicture}
\caption{Parallel line X-ray CT geometry. Here $S$, $D_C$ and $D_A$ denote the source and detector rows. The length of the detector (and source) array is $2a$. A cone $C_R\subset S^1$ is highlighted in orange. We will refer to $C_R$ later for visualisation in section \ref{microsec2}.}
\label{fig1}
\end{figure}
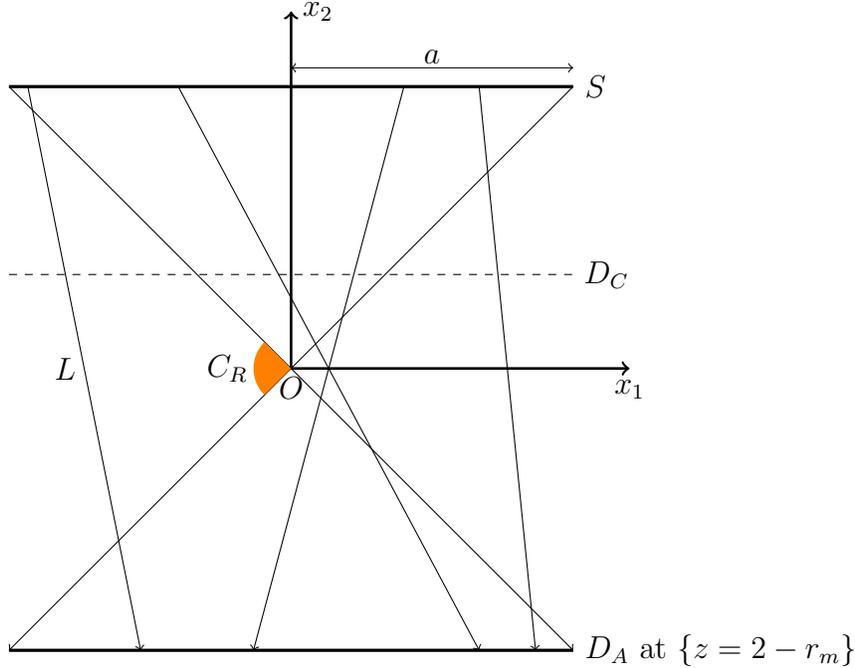

%

The attenuation coefficient relates to the X-ray transmission data by
the Beer-Lambert law,
$\log\left(\frac{I_0}{I_A}\right)=\int_L\mu_E\mathrm{d}l$ \cite[page
2]{krishnan2014microlocal} where $I_A$ is the photon intensity
measured at the detector, $I_0$ is the initial source intensity and
$\mu_E$ is the attenuation coefficient at energy $E$. Here $L$ is a
line through $S$ and $D_A$, with arc measure $\mathrm{d}l$. Hence the
transmission data determines a set of integrals of $\mu_E$ over lines,
and the problem of reconstructing $\mu_E$ is equivalent to inversion
of the line Radon transform with limited data (e.g.,
\cite{krishnan2014microlocal, natterer}). Note that we need not
account for the energy dependence of $\mu_E$ in this case as the
detectors are energy-resolved, and hence there are no issues due to
beam-hardening. See figure \ref{fig1}.

When the attenuation of the incoming and scattered rays is ignored,
the Compton scattered intensity in two-dimensions can be modelled as
integrals of $n_e$ over toric sections
\cite{palamodov2011analytic,truong2019compton,webber2020microlocal}
\begin{equation}
I_C=\int_{T}n_e\mathrm{d}t,
\end{equation}
where $I_C$ is the Compton scattered intensity measured at a point
on $D_C$. A toric section
$T=C_1\cup C_2$ is the union of two intersecting circles of the same
radii (as displayed in figure \ref{fig1.1}), and $\mathrm{d}t$ is the arc measure on
$T$. The recovery of $n_e$ is equivalent to inversion of the
toric section Radon transform
\cite{norton1994compton, palamodov2011analytic, truong2019compton,
webber2020microlocal}.  See figure \ref{fig1.1}. See also
\cite{RigaudComptonSIIMS2017, rigaud20183d} for alternative
reconstruction methods. We now discuss the approximation made above to
neglect the attenuative effects from the CST model. When
the attenuation effects are included, the inverse scattering problem
becomes nonlinear \cite{rigaud20193d}. 
We choose to focus on the analysis of the
idealised linear case here, as this allows us to apply the well
established theory on linear Fourier Integral Operators (FIO) and microlocal
analysis to derive expression for the image artefacts.
Such analysis will likely give valuable insight into the expected
artefacts in the nonlinear case. The nonlinear models and their inversion properties are left for future work.
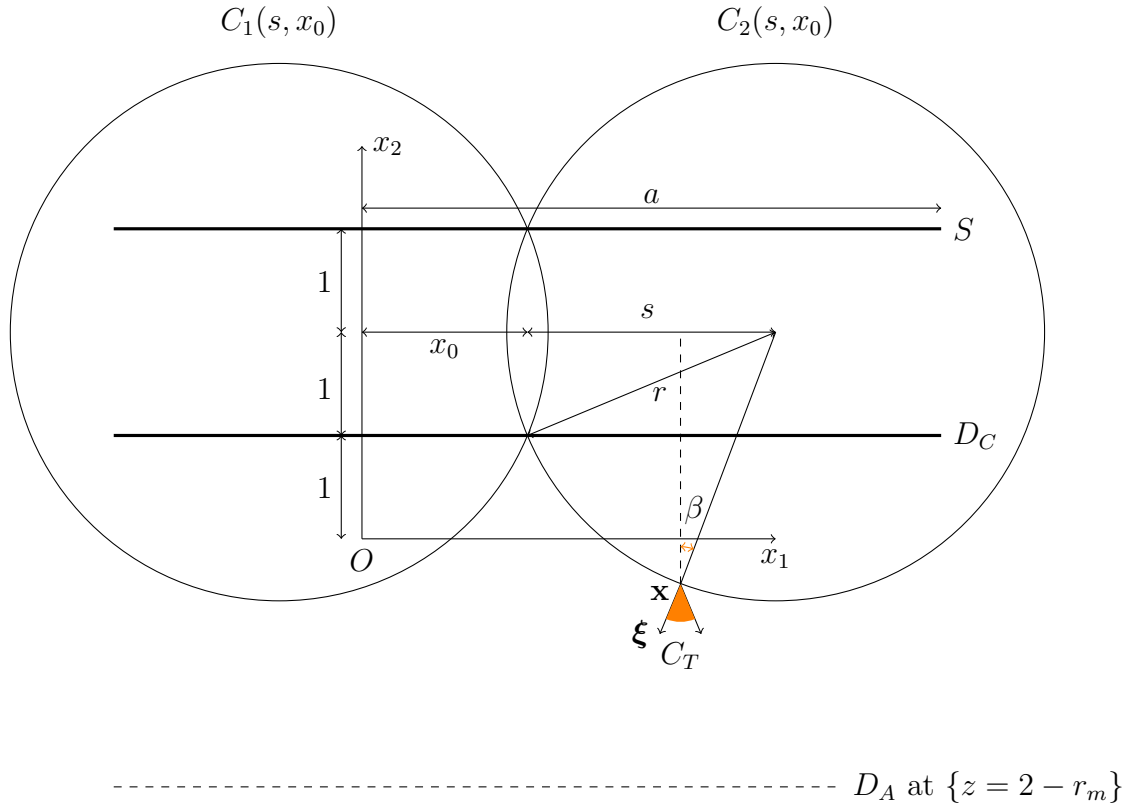
\begin{figure}[!h]
\centering
\begin{tikzpicture}[scale=5.5]
\draw [dashed] (-1,0.15)--(0.76,0.15)node[right] {$D_A$ at $\{z=2-r_m\}$};
\draw [very thick] (-1,1)--(1,1)node[right] {$D_C$};
\draw [<->] (-0.1-0.35,1)--(-0.1-0.35,1.25);
\node at (-0.14-0.35,1.1125) {$1$};
\draw [<->] (-0.1-0.35,1.25)--(-0.1-0.35,1.5);
\node at (-0.14-0.35,1.3725) {$1$};
\draw [<->] (-0.1-0.35,0.75)--(-0.1-0.35,1);
\node at (-0.14-0.35,0.8725) {$1$};
\draw [very thick] (-1,1.5)--(1,1.5)node[right] {$S$};
\draw [<->] (-0.4,1.55)--(1,1.55);
\node at (0.3,1.58) {$a$};
\draw [->] (-0.4,0.75)--(-0.4,1.7)node[right] {$x_2$};
\draw [->] (-0.4,0.75)node[below] {$O$}--(1-0.4,0.75)node[below] {$x_1$};
\draw [<->] (-0.4,1.25)--(0,1.25);
\node at (-0.2,1.21) {$x_0$};
\draw (0.6,1.25) circle (0.65);
\draw (-0.6,1.25) circle (0.65);
\draw [<->] (0.5+0.1,1.25)--(0,1);
\node at (0.24+0.08,1.1) {$r$};
\draw [<->] (0.5+0.1,1.25)--(0,1.25);
\node at (0.24+0.05,1.3) {$s$};
\node at (0.32,0.62) {$\vx$};
\draw [->]  (0.37,0.64)--(0.32,0.52)node[left] {$\vxi$};
\draw [->]  (0.37,0.64)--(0.42,0.52);
\coordinate (origo1) at (0.37,0.64);
\coordinate (pivot1) at (0.42,0.52);
\coordinate (bob1) at (0.32,0.52);
\draw pic[fill=orange, <->,"$C_T$", angle eccentricity=1.9] {angle = bob1--origo1--pivot1};
\draw  (0.37,0.64)--(0.6,1.25);
\node at (-0.6,2) {$C_1(s,x_0)$};
\node at (0.6,2) {$C_2(s,x_0)$};
\draw  [dashed] (0.37,0.64)--(0.37,1.25);
\coordinate (origo) at (0.37,0.64);
\coordinate (pivot) at (0.37,1.25);
\coordinate (bob) at (0.6,1.25);
\draw pic[draw=orange, <->,"$\beta$", angle eccentricity=2] {angle = bob--origo--pivot};
\end{tikzpicture}
\caption{Parallel line CST geometry. $S$, $D_C$ and $D_A$ denote the source and detector rows. The remaining labels are referenced in the main text. A cone $C_T\subset S^1$ is highlighted in orange. We will refer to $C_T$ later for visualisation in section \ref{microsec1}. Note that we have cropped out part of the left side (left of $O$) of the scanner of figure \ref{fig1} in this picture.}
\label{fig1.1}
\end{figure}

The line and circular arc Radon transforms with full data, are known
\cite{palamodov2011analytic,kalender2006x} to have
inverses that are continuous in some range of
Sobolev norms. Hence with adequate regularization we can reconstruct
an image free of artefacts. With limited data however
\cite{krishnan2014microlocal,truong2019compton}, the solution is
unstable and the image wavefront set (see Definition \ref{WF}) is not
recovered stably in all directions. We will see later in section
\ref{microsec1} through simulation that such data limitations in the
parallel line geometry cause a blurring artefact over a cone in the
reconstruction. There may also be nonlocal artefacts specific to the
geometry (as in \cite{webber2020microlocal}), which we shall discover
later in section \ref{microsec1} in the geometry of figure
\ref{fig1.1}. 



The main goal of this paper is to combine limited datasets in X-ray CT
and CST with new lambda tomography regularization techniques, to
recover the image edges stably in all directions. We focus
particularly on the geometry of figure \ref{fig1}. In lambda
tomography the image reconstruction is carried out by filtered
backprojection of the Radon projections, where the filter is chosen to
emphasize boundaries.  This means that the jump singularities
in the lambda reconstruction have the same location and direction to
those of the target function, but the smooth parts are undetermined. A
common choice of filter is a second derivative in the linear variable
\cite{denisyuk1994inversion, rigaud20183d}. The application of the
derivative filter emphasizes the singularities in the Radon
projections, and this is a key idea behind lambda tomography
\cite{FFRS, FRS,VKK}, and the microlocal view on lambda CT (e.g.,
\cite{denisyuk1994inversion, QO2008:siap,rigaud20183d}). The
regularization penalty we propose aims to minimize the difference
$\|\frac{\mathrm{d}^m}{\mathrm{d}s^m}R(\mu_E-n_e)\|_{L^2(\mathbb{R}\times
S^1)}$ for some $m\geq 1$, where $R$ denotes the Radon line
transform. Therefore, with a full set of Radon projections, the lambda
penalties enforce a similarity in the locations and direction of the
image singularities (edges) of $\mu_E$ and $n_e$. Further
$\frac{\mathrm{d}^m}{\mathrm{d}s^m}R$ for $m\geq 1$ is equivalent to
taking $m-1/2$ derivatives of the object (this operation is continuous
of positive order $m-1/2$ in Sobolev scales), and hence its inverse is
a smoothing operation, which we expect to be of aid in combatting the
measurement noise. In addition, the regularized inverse problem we
propose is linear (similarly to the Tikhonov regularized inverse
\cite[page 99]{hansen2005rank}), which (among other benefits of
linearity) allows for the fast application of iterative least squares
solvers in the solution.

The literature considers joint image reconstruction and
regularization in for example,
\cite{aghasi2013geometric,guven2012multi,rezaee2017fusion,semerci2014tensor,semerci2012parametric,bubba2019learning,tracey2015stabilizing,JR1,JR2,JR4,JR5,JR6,JR7,JR8}.
See also the special issue \cite{JR3} for a more general review of
joint reconstruction techniques. In \cite{JR1} the authors consider
the joint reconstruction from Positron Emission Tomography (PET) and
Magnetic Resonance Imaging (MRI) data and use a Parallel Level Set
(PLS) prior for the joint regularization. The PLS approach (first
introduced in \cite{JR6}) imposes soft constraints on the equality of
the image gradient location and direction, thus enforcing structural
similarity in the image wavefront sets. This follows a similar
intuition to the ``Nambu" functionals of \cite{JR5} and the
``cross-gradient" methods of
\cite{gallardo2004joint,gallardo2007joint} in seismic imaging, the
latter of which specify hard constraints that the gradient cross
products are zero (i.e. parallel image gradients). The methods of
\cite{JR1} use linear and quadratic formulations of PLS, denoted by
Linear PLS (LPLS) and Quadratic PLS (QPLS). The LPLS method will be a
point of comparison with the proposed method. We choose to compare
with LPLS as it is shown to offer greater performance than QPLS in the
experiments conducted in \cite{JR1}. 

In \cite{JR2} the authors consider a class of techniques in joint
reconstruction and regularization, including inversion through
correspondence mapping, mutual information and Joint Total Variation
(JTV). In addition to LPLS, we will compare against JTV as the
intuition of JTV is similar to that of lambda regularization (and
LPLS), in the sense that a structural similarity is enforced in the
image wavefronts. Similar to standard Total Variation (TV), which
favours sparsity in the (single) image gradient, the JTV penalties
(first introduced in \cite{JR7} for colour imaging) favour sparsity in
the joint gradient. Thus the image gradients are more likely to occur
in the same location and direction upon minimization of JTV. The JTV
penalties also have generalizations in colour imaging and
vector-valued imaging \cite{JR8}.

In \cite{webber2019compton} the authors introduce a new toric section
transform $\Tc$ in the geometry of figure \ref{fig1.1}. Here explicit
inversion formulae are derived, but the stability analysis is lacking.
We aim to address the stability of $\Tc$ in this work from a
microlocal perspective. Through an analysis of the canonical relations
of $\Tc$, we discover the existence of nonlocal artefacts in the
inversion, similarly to \cite{webber2020microlocal}.  In
\cite{rezaee2017fusion} the joint reconstruction of $\mu_E$ and $n_e$
is considered in a pencil beam scanner geometry. Here gradient descent
solvers are applied to nonlinear objectives, derived from the
physical models, and a weighted, iterative Tikhonov type penalty is
applied. The works of \cite{bubba2019learning} improve the wavefront
set recovery in limited angle CT using a partially learned, hybrid
reconstruction scheme, which adopts ideas in microlocal analysis and
neural networks. The fusion with Compton data is not considered
however. In our work we assume an equality in the wavefront sets of
$n_e$ and $\mu_E$ (in a similar vein to \cite{semerci2012parametric}),
and we investigate the microlocal advantages of combining Compton and
transmission data, as such an analysis is lacking in the literature.


The remainder of this paper is organized as follows.  In section
\ref{microsec}, we recall some definitions and theorems from
microlocal analysis. In section \ref{microsec1} we present a
microlocal analysis of $\mathcal{T}$ and explain the image artefacts
in the $n_e$ reconstruction. Here we prove our main theorem (Theorem
\ref{mainthm}), where we show that the canonical relation $\Cc$ of
$\mathcal{T}$ is 2--1. This implies the existence of nonlocal image
artefacts in a reconstruction from toric section integral data.
Further we find explicit expressions for the nonlocal artefacts and
simulate these by applying the normal operations
$\mathcal{T}^*\mathcal{T}$ to a delta function.  In section
\ref{microsec2} we consider the microlocal artefacts from X-ray
(transmission) data. This yields a limited dataset for the Radon
transform, whereby we have knowledge the line integrals for all $L$
which intersect $S$ and $D_A$ (see figure \ref{fig1}).  We use
the results in \cite{borg2018analyzing} to describe the resulting
artifacts in the X-ray CT reconstruction. In section \ref{results},
we detail our joint reconstruction method for the simultaneous
reconstruction of $\mu_E$ and $n_e$. Later in section \ref{RnD} we
present simulated reconstructions of $\mu_E$ and $n_e$ using the
proposed methods and compare against JTV \cite{JR2} and LPLS
\cite{JR1} from the literature. We also give a comparison to a
separate reconstruction using TV.


\section{Microlocal definitions}\label{microsec} We next provide some
notation and definitions.  Let $X$ and $Y$ be open subsets of
$\rn$.  Let $\Dc(X)$ be the space of smooth functions compactly
supported on $X$ with the standard topology and let $\mathcal{D}'(X)$
denote its dual space, the vector space of distributions on $X$.  Let
$\Ec(X)$ be the space of all smooth functions on $X$ with the standard
topology and let $\mathcal{E}'(X)$ denote its dual space, the vector
space of distributions with compact support contained in $X$. Finally,
let $\Sc(\rn)$ be the space of Schwartz functions, that are rapidly
decreasing at $\infty$ along with all derivatives. See \cite{Rudin:FA}
for more information.

\begin{definition}[{\cite[Definition 7.1.1]{hormanderI}}]
For a function $f$ in the Schwartz space $\Sc(\mathbb{R}^n)$, we define
the Fourier transform and its inverse 
as
\begin{equation}
\mathcal{F}f(\vxi)=\int_{\mathbb{R}^n}e^{-i\vx\cdot\vxi}f(\vx)\mathrm{d}\vx,
\qquad\mathcal{F}^{-1}f(\vx)=(2\pi)^{-n}\int_{\mathbb{R}^n}e^{i\vx\cdot\vxi}f(\vxi)\mathrm{d}\vxi.
\end{equation}
\end{definition}

We use the standard multi-index notation: if
$\alpha=(\alpha_1,\alpha_2,\dots,\alpha_n)\in \sparen{0,1,2,\dots}^n$
is a multi-index and $f$ is a function on $\rn$, then
\[\partial^\alpha f=\paren{\frac{\partial}{\partial
x_1}}^{\alpha_1}\paren{\frac{\partial}{\partial
x_2}}^{\alpha_2}\cdots\paren{\frac{\partial}{\partial x_n}}^{\alpha_n}
f.\] If $f$ is a function of $(\vy,\vx,\vs)$ then $\partial^\alpha_\vy
f$ and $\partial^\alpha_\vs f$ are defined similarly.

  We identify cotangent
spaces on Euclidean spaces with the underlying Euclidean spaces, so we
identify $T^*(X)$ with $X\times \rn$.

If $\phi$ is a function of $(\vy,\vx,\vs)\in Y\times X\times \rr^N$
then we define $\dd_{\vy} \phi = \paren{\partyf{1}{\phi},
\partyf{2}{\phi}, \cdots, \partyf{n}{\phi} }$, and $\dd_\vx\phi$ and $
\dd_\vs \phi $ are defined similarly. We let $\dd\phi =
\paren{\dd_{\vy} \phi, \dd_{\vx} \phi,\dd_\vs \phi}$.


The singularities of a function and the directions in which they occur
are described by the wavefront set \cite[page
16]{duistermaat1996fourier}:
\begin{definition}
\label{WF} Let $X$ Let an open subset of $\rn$ and let $f$ be a
distribution in $\mathcal{D}'(X)$.  Let $(\vx_0,\vxi_0)\in X\times
(\mathbb{R}^n\smo)$.  Then $f$ is \emph{smooth at $\vx_0$ in
direction $\vxio$} if   exists a
neighbourhood $U$ of $\vx_0$ and $V$ of $\vxi_0$ such that for every
$\phi\in \Dc(U)$ and $N\in\mathbb{R}$ there exists a constant
$C_N$ such that
\begin{equation}
\left|\Fc(\phi f)(\lambda\vxi)\right|\leq C_N(1+\abs{\lambda})^{-N}.
\end{equation}
The pair $(\vx_0,\vxio)$ is in the \emph{wavefront set,} $\wf(f)$, if
$f$ is not smooth at $\vx_0$ in direction $\vxio$.
\end{definition}
 This definition follows the intuitive idea that the elements of
$\WF(f)$ are the point--normal vector pairs above points of $X$ where
$f$ has singularities. For example, if $f$ is the characteristic
function of the unit ball in $\mathbb{R}^3$, then its wavefront set is
$\WF(f)=\{(\vx,t\vx): \vx\in S^{2}, t\neq 0\}$, the set of points on a
sphere paired with the corresponding normal vectors to the sphere.



The wavefront set of a distribution on $X$ is normally defined as a
subset the cotangent bundle $T^*(X)$ so it is invariant under
diffeomorphisms, but we will continue to identify $T^*(X) = X
\times \rn$ and consider $\WF(f)$ as a subset of $X\times
\rn\smo$.


 \begin{definition}[{\cite[Definition 7.8.1]{hormanderI}}] We define
$S^m(Y\times X\times \mathbb{R}^N)$ to be the set of $a\in \Ec(Y\times
X\times \mathbb{R}^N)$ such that for every compact set $K\subset
Y\times X$ and all multi--indices $\alpha, \beta, \gamma$ the bound
\[
\left|\partial^{\gamma}_{\vy}\partial^{\beta}_{\vx}\partial^{\alpha}_{\vs}a(\vy,\vx,\vs)\right|\leq
C_{K,\alpha,\beta}(1+|\vs|)^{m-|\alpha|},\ \ \ (\vy,\vx)\in K,\
\vs\in\mathbb{R}^N,
\]
holds for some constant $C_{K,\alpha,\beta}>0$. The elements of $S^m$
are called \emph{symbols} of order $m$.
\end{definition}

Note that these symbols are sometimes denoted $S^m_{1,0}$.

\begin{definition}[{\cite[Definition
        21.2.15]{hormanderIII}}] \label{phasedef}
A function $\phi=\phi(\vy,\vx,\vs)\in
\Ec(Y\times X\times\mathbb{R}^N\smo)$ is a \emph{phase
function} if $\phi(\vy,\vx,\lambda\vs)=\lambda\phi(\vy,\vx,\vs)$, $\forall
\lambda>0$ and $\mathrm{d}\phi$ is nowhere zero. A phase function is
\emph{clean} if the critical set $\Sigma_\phi = \{ (\vy,\vx,\vs) \ : \
\mathrm{d}_\vs \phi(\vy,\vx,\vs) = 0 \}$ is a smooth manifold with tangent
space defined by $\mathrm{d} \paren{\mathrm{d}_\vs \phi}= 0$.
\end{definition}
\noindent By the implicit function theorem the requirement for a phase
function to be clean is satisfied if $\mathrm{d}\paren{\mathrm{d}_\vs
\phi}$ has constant rank.

\begin{definition}[{\cite[Definition 21.2.15]{hormanderIII} and
      \cite[Section 25.2]{hormander}}]\label{def:canon} Let $X$ and
$Y$ be open subsets of $\rn$. Let $\phi\in \Ec\paren{Y \times X \times
{\rr}^N}$ be a clean phase function.  In addition, we assume that
$\phi$ is \emph{nondegenerate} in the following sense:
\[\text{$\dd_{\vy,\vs}\phi$ and $\dd_{\vx,\vs}\phi$ are never zero.}\]
  The
\emph{critical set of $\phi$} is
\[\Sigma_\phi=\{(\vy,\vx,\vs)\in Y\times X\times\mathbb{R}^N\smo
: \dd_{\vs}\phi=0\}.\]  The
\emph{canonical relation parametrised by $\phi$} is defined as
\bel{def:Cgenl} \begin{aligned} \Cc=&\sparen{
\paren{\paren{\vy,\dd_{\vy}\phi(\vy,\vx,\vs)};\paren{\vx,-\dd_{\vx}\phi(\vy,\vx,\vs)}}:(\vy,\vx,\vs)\in
\Sigma_{\phi}},
\end{aligned}
\end{equation}
\end{definition}

\begin{definition}\label{FIOdef}
Let $X$ and $Y$ be open subsets of $\rn$. A \emph{Fourier integral operator
(FIO)} of order $m + N/2 - n/2$ is an operator $A:\Dc(X)\to
\mathcal{D}'(Y)$ with Schwartz kernel given by an oscillatory integral
of the form
\begin{equation} \label{oscint}
K_A(\vy,\vx)=\int_{\mathbb{R}^N} e^{i\phi(\vy,\vx,\vs)}a(\vy,\vx,\vs) \mathrm{d}\vs,
\end{equation}
where $\phi$ is a clean nondegenerate phase function and $a \in S^m(Y
\times X \times \mathbb{R}^N)$ is a symbol. The \emph{canonical
relation of $A$} is the canonical relation of $\phi$ defined in
\eqref{def:Cgenl}.
\end{definition}

This is a simplified version of the definition of FIO in \cite[Section
2.4]{duist} or \cite[Section 25.2]{hormander} that is suitable for our
purposes since our phase functions are global.  For general
information about FIOs see \cite{duist, hormander, hormanderIII}.

\begin{definition}
\label{defproj} Let $\Cc\subset T^*(Y\times X)$ be the canonical
relation associated to the FIO $A:\mathcal{E}'(X)\to \mathcal{D}'(Y)$.
Then we let $\pi_L$ and $\pi_R$ denote the natural left- and
right-projections of $\Cc$, $\pi_L:\Cc\to T^*(Y)$ and $\pi_R : \Cc\to
T^*(X)$.
\end{definition}

Because $\phi$ is nondegenerate, the projections do not map to the
zero section.  
%
%
If a FIO $\Fc$ satisfies our next definition, then $\Fc^* \Fc$ (or
$\Fc^* \phi \Fc$ if $\Fc$ does not map to $\Ec'(Y)$) is a
pseudodifferential operator \cite{GS1977, quinto}.

\begin{definition}\label{def:bolker} Let
$\Fc:\Ec'(X)\to \Dc'(Y)$ be a FIO with canonical relation $\Cc$ then
$\Fc$ (or $\Cc$) satisfies the \emph{semi-global Bolker Assumption} if
the natural projection $\pi_Y:\Cc\to T^*(Y)$ is an embedding
(injective immersion).\end{definition}

\section{Microlocal properties of translational Compton transforms}
\label{microsec1} Here we present a microlocal analysis of the toric
section transform in the translational (parallel line) scanning
geometry. Through an analysis of two separate limited data problems
for the circle transform (where the integrals over circles with
centres on a straight line are known) and using microlocal analysis,
we show that the canonical relation of the toric section transform is
2--1. The analysis follows in a similar way to the work of
\cite{webber2020microlocal}. We discuss the nonlocal artefacts inherent to the toric section inversion in section \ref{nonloc}, and then go on to explain the artefacts due to limited data in section \ref{sect:artifactsTc}.

 We
first define our geometry and formulate the toric section transform of
\cite{webber2019compton} in terms of $\delta$ functions, before
proving our main microlocal theory.

Let $r_m>1$ and define the set of points to be scanned as
\[X:=\{(x_1,x_2)\in \rtwo\st 2-r_m<x_2<1\}.\] Note that $r_m$ controls the
depth of the scanning tunnel as in figures \ref{fig1} and
\ref{fig1.1}.  Let \[ Y:=(0,\infty)\times \mathbb{R}\] then for
$j=1,2$, and $(s,\xo)\in Y$, we define the circles $C_j$ and their
centers $\vc_j$ \bel{def:Cs}
\begin{gathered}
r=\sqrt{s^2+1}, \quad \vc_j(s,x_0)=((-1)^js+x_0,2) \\
C_j(s,x_0)=\{\vx\in\mathbb{R}^2 : |\vx-\vc_j(s,x_0)|^2-s^2-1=0\}.
\end{gathered}\ee
Note that $r=\sqrt{s^2+1}$ is the radius of the circle $C_j$.  The
union of the reflected circles $C_1\cup C_2$ is called a \emph{toric
section}. Let $f\in L^2_0(X)$ be the electron charge density. To define
the toric section transform we first introduce two \emph{circle
transforms} \bel{def:Tj} \mathcal{T}_1
f(s,x_0)=\int_{C_1}f\mathrm{d}s,\ \ \ \ \ \ \ \
\mathcal{T}_2f(s,x_0)=\int_{C_2}f\mathrm{d}s.
\end{equation}
  Now we have the definition of the \emph{toric section transform}
\cite{webber2019compton} \bel{def:T} \mathcal{T}f(s,x_0)=\int_{C_1\cup
C_2}f\mathrm{d}s=\Tc_1(f)(s,x_0)+\Tc_2(f)(s,x_0) \ee where
$\mathrm{d}s$ denotes the arc element on a circle and
$(s,\xo)\in Y$.


We express $\mathcal{T}$ in terms of delta functions as is done for
the generalized Funk-Radon transforms studied by Palamodov.  
\cite{palamodov2012uniform}
\begin{equation}
\begin{split}
\mathcal{T}f(s,x_0)&=\mathcal{T}_1f(s,x_0)+\mathcal{T}_2f(s,x_0)\\
&=\frac{1}{2r}\sum_{j=1}^2\int_{\mathbb{R}^2}\delta(|\vx-\vc_j(s,x_0)|^2-s^2-1)f(\vx)\mathrm{d}\vx\\
&=\frac{1}{2r}\sum_{j=1}^2\int_{-\infty}^{\infty}\int_{\mathbb{R}^2}e^{-i\sigma(|\vx-((-1)^js+x_0,2)|^2-s^2-1)}f(\vx)\mathrm{d}\vx\mathrm{d}\sigma.
\end{split}
\end{equation}
Note that the factor in front of the integrals comes about using the
change of variables formula and that $\Tc_j f = \int
\delta\paren{|\vx-\vc_j(s,x_0)|-\sqrt{s^2+1}} f(x)\,\dd\vx$. So the
toric section transform is the sum of two FIO's with phase functions
$$\phi_j(s,x_0,\vx,\sigma)=\sigma(|\vx-((-1)^js+x_0,2)|^2-s^2-1)$$ for
$j=1,2$.  Our distributions $f$ are supported away from the
intersection points of $C_1$ and $C_2$, and hence we can consider the
microlocal properties of $\mathcal{T}_1$ and $\mathcal{T}_2$
separately to describe the microlocal properties of $\mathcal{T}$.

\begin{proposition}
For $j=1,2$, the circle transform $\mathcal{T}_j$ is an FIO or order
$-1/2$ with canonical relation
\begin{equation}\label{def:C1C2}
\begin{aligned}
\Cc_j=\Big\{\big(&\paren{s,x_0,(-1)^{j-1}\sigma
(x_1-x_0),-\sigma((-1)^{j-1}s+x_1-x_0)} ;
\paren{\vx,-\sigma(\vx-\vc_j(s,x_0))}\big) : \\
&\quad (s,x_0)\in Y, \sigma\in \mathbb{R}\smo, \vx\in C_j(s,x_0)\cap
\{x_2<1\}\Big\}.\\
\end{aligned}
\end{equation}
Furthermore $\Cc_j$ satisfies the semi-global Bolker assumption for
$j=1,2$.
\end{proposition}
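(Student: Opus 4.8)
The plan is to read $\Tc_j$ off its $\delta$-function representation as a Fourier integral operator with a single phase variable $\sigma$ (so $N=1$, while $n=2$), amplitude $a=\tfrac{1}{2r}=\tfrac{1}{2\sqrt{s^2+1}}$, and phase $\phi_j(s,x_0,\vx,\sigma)=\sigma\paren{\abs{\vx-\vc_j(s,x_0)}^2-s^2-1}$, and then to check in turn the hypotheses of Definitions \ref{phasedef}, \ref{def:canon}, \ref{FIOdef} and \ref{def:bolker}. First I would verify that $\phi_j$ is a phase function: it is homogeneous of degree one in $\sigma$ because it is linear in $\sigma$, and $\dd\phi_j$ never vanishes since its $x_2$-component is $2\sigma(x_2-2)$, which is nonzero on $X$ (where $x_2<1<2$) whenever $\sigma\neq 0$. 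Because $a$ does not depend on $\sigma$ and is smooth and bounded together with all its derivatives on compact sets in $(s,x_0)$, we have $a\in S^0$, so $m=0$ and the order is $m+N/2-n/2=0+\tfrac12-1=-\tfrac12$, as claimed.

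Next I would settle cleanliness and nondegeneracy. The critical set is $\Sigma_{\phi_j}=\{\dd_\sigma\phi_j=0\}=\{\abs{\vx-\vc_j}^2=s^2+1\}$, i.e.\ $\vx\in C_j(s,x_0)$; since $\dd(\dd_\sigma\phi_j)$ has nonvanishing $x_2$-component $2(x_2-2)$ it has constant rank one, so by the remark after Definition \ref{phasedef} the phase is clean and $\Sigma_{\phi_j}$ is a smooth manifold. For nondegeneracy I would note that $\dd_{\vx,\sigma}\phi_j$ is never zero (again by the $x_2$-component), and that $\dd_{(s,x_0),\sigma}\phi_j$ is never zero because its $s$- and $x_0$-components $2\sigma(-1)^{j-1}(x_1-x_0)$ and $-2\sigma\paren{(-1)^{j-1}s+x_1-x_0}$ cannot vanish simultaneously when $\sigma\neq0$ and $s>0$ (if $x_1=x_0$ the $x_0$-component is $\mp2\sigma s\neq0$). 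A direct differentiation then gives $\dd_s\phi_j$, $\dd_{x_0}\phi_j$ and $\dd_\vx\phi_j=2\sigma(\vx-\vc_j)$; substituting into \eqref{def:Cgenl} and absorbing the common factor $2$ by the harmless reparametrisation $\sigma\mapsto\sigma/2$ of the fibre variable (legitimate since $\sigma$ ranges over $\rr\smo$ and the whole covector scales homogeneously) reproduces $\Cc_j$ exactly as in \eqref{def:C1C2}, the constraint $x_2<1$ being inherited from $X$.

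The main step is the semi-global Bolker assumption, that $\pi_L:\Cc_j\to T^*(Y)$ is an injective immersion. I would coordinatise $\Cc_j$ by $(s,x_0,\sigma,x_1)$, recovering $x_2=2-\sqrt{s^2+1-\paren{x_1-(-1)^js-x_0}^2}$ from the lower branch of the circle (smooth since $x_2<1$ forces $(x_2-2)^2>1>0$); thus $\Cc_j$ is a smooth $4$-manifold and $T^*(Y)$ is $4$-dimensional. Writing $\pi_L(\,\cdot\,)=(s,x_0,\eta_1,\eta_2)$ with $\eta_1=\sigma(-1)^{j-1}(x_1-x_0)$ and $\eta_2=-\sigma\paren{(-1)^{j-1}s+x_1-x_0}$, I would invert this system explicitly: because $s>0$, eliminating $x_1-x_0$ gives $\sigma=-\paren{\eta_1+(-1)^{j-1}\eta_2}/s$, then $x_1-x_0=(-1)^{j-1}\eta_1/\sigma$, and finally $x_2$ from the formula above. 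These expressions are smooth on $\pi_L(\Cc_j)$, so $\pi_L$ admits a smooth left inverse and is therefore an injective immersion (indeed a diffeomorphism onto its image), which is the Bolker assumption for $j=1,2$.

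The step I expect to be the main obstacle is this last inversion, and in particular recognising that the positivity $s>0$ built into $Y=(0,\infty)\times\rr$ is exactly what makes the linear system for $(\sigma,x_1)$ solvable: at $s=0$ the two centre families $\vc_1,\vc_2$ coincide and $\pi_L$ ceases to be injective. Care is also needed to confirm that the recovered $\sigma$ is nonzero on $\pi_L(\Cc_j)$ and that the lower-branch expression for $x_2$ remains smooth throughout $\{x_2<1\}$, so that the left inverse is genuinely defined and smooth on a neighbourhood of the image.
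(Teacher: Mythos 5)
Your proposal is correct and follows essentially the same route as the paper: read $\Tc_j$ off its $\delta$-function representation as an FIO with single phase variable $\sigma$ (order $0+\tfrac12-1=-\tfrac12$), compute the canonical relation from Definition \ref{def:canon} absorbing the factor of $2$ into $\sigma$, coordinatise $\Cc_j$ by $(s,x_0,x_1,\sigma)$ with $x_2$ on the lower branch of the circle, and verify Bolker by explicitly solving the linear system for $(\sigma,x_1)$ from $(\tau_1,\tau_2)$ using $s>0$. Your recovered formulas for $\sigma$ and $x_1-x_0$ agree with \eqref{def:sigmaY}; you merely spell out the phase-function, cleanliness and nondegeneracy checks that the paper leaves as ``one can check.''
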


\begin{proof} First, one can check that $\phi_j$ and $\Tc_j$ both satisfy
  the restrictions in Definition \ref{FIOdef} so $\Tc_j$ is a FIO.
Using this definition again and the fact that its symbol is order zero
\cite{quinto}, one sees that it has order $-1/2$.


A straightforward calculation using Definition \ref{def:canon} shows
that the canonical relation of $\mathcal{T}_j$ is as given in
\eqref{def:C1C2}.  Note that we have absorbed a factor of $2$ into
$\sigma$ in this calculation.  Global coordinates on $\Cc_j$ are given
by \bel{def:coords}\begin{aligned}(s,x_0,x_1,\sigma)\mapsto&
\big(s,x_0,(-1)^{j-1}\sigma (x_1-x_0),-\sigma((-1)^{j-1}s+x_1-x_0) ;\\
&\qquad(x_1,x_2),-\sigma((x_1,x_2)-\vc_j(s,x_0))\big)\\
&\qquad\qquad\text{where $x_2 =
2-\sqrt{s^2+1-(x_1-(x_0+(-1)^js))^2}$}\end{aligned}\ee because
$x_2<1$.  Recall that $\vc_j$ is given in \eqref{def:Cs}.

We now show that $\Cc_j$ satisfies the semiglobal Bolker assumption by
finding a smooth inverse in these coordinates to the projection
$\Pi_L:\Cc_j\to T^*(Y)$.  Let $\lambda= (s,x_0, \tau_1,\tau_2)\in
\Pi_L\paren{\Cc_j}$.  We solve for $x_1$ and $\sigma$ in the equation
$\Pi_L(s,x_0,x_1,\sigma)=\lambda$.  Then, $s$ and $x_0$ are known as
are
\bel{tau1 tau2}
\tau_1 = (-1)^{j-1}\sigma(x_1-x_0)
\qquad
\tau_2 = -\sigma((-1)^{j-1}s+x_1-x_0).
\ee A straightforward linear algebra exercise shows that the unique
solutions for $\sigma$ and $x_1$ are \bel{def:sigmaY}\sigma =
\frac{(-1)^{j}\tau_2-\tau_1}{s}, \qquad x_1 =
\frac{s\tau_1}{(-1)^j\tau_1-\tau_2} + x_0\end{equation} This gives a
smooth inverse to $\Pi_L$ on the image $\Pi_L\paren{\Cc_j}$ and
finishes the proof.
\end{proof}

Because $\Cc_j$ satisfies the Bolker Assumption, the composition
$\Cc^*_j\circ\Cc_j\subset \Delta$, where $\Delta$ is the diagonal in
$T^* (X)$.  Hence in a reconstruction from circular integral data with
centres on a line we would not expect to see image artefacts for
functions supported in $x-2>0$ unless one uses a sharp cutoff on the
data.

The canonical relation $\Cc$ of $\mathcal{T}$ can be written as the
disjoint union $\Cc=\Cc_1\cup\Cc_2$ since $(C_1(s,\xo)\cap C_2(s,\xo))\cap
\text{supp}(f)=\emptyset$ for any $(s,\xo)\in Y$.


For convenience, we will sometimes label the coordinate $x_0$ in
\eqref{def:coords} as $\xoo$ it is associated with $\Cc_1$ and $\xot$
if it is associated with $\Cc_2$.

\begin{theorem}
  \label{mainthm} For $j=1,2$, the projection $\pi_R:\Cc_j\to T^*(X)$
is bijective onto the set \bel{def:D} D=\sparen{(\vx,\boldsymbol{\xi})\in T^*(X)\st
\xi_2\neq 0}.\ee In addition, $\pi_R:\Cc\to T^*(X)$ is two-to one onto
$D$.
\end{theorem}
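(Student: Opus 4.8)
The plan is to work directly from the explicit global coordinates on $\Cc_j$ given in \eqref{def:coords} and analyze the right projection $\pi_R:\Cc_j\to T^*(X)$. Given a point $(\vx,\vxi)=\paren{(x_1,x_2),(\xi_1,\xi_2)}\in D$, I want to show there is a unique preimage $(s,x_0,x_1,\sigma)$ under $\pi_R$ for each fixed $j$. Reading off \eqref{def:C1C2}, the right projection sends the global coordinate $(s,x_0,x_1,\sigma)$ to $\paren{\vx,-\sigma(\vx-\vc_j(s,x_0))}$, so I must solve the system $\xi_1=-\sigma(x_1-((-1)^js+x_0))$ and $\xi_2=-\sigma(x_2-2)$ for the unknowns $s,x_0,\sigma$ with $\vx$ given.

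First I would use the second equation. Since $\vx\in X$ forces $x_2<1$, we have $x_2-2<0$, hence $\xi_2\neq 0$ (the defining condition of $D$) is exactly equivalent to $\sigma\neq 0$, and $\sigma=-\xi_2/(x_2-2)$ is uniquely determined. This is the step where the hypothesis $\xi_2\neq0$ enters, and it also shows why $D$ is the correct target set: directions with $\xi_2=0$ are precisely the conormals that cannot be detected, corresponding to $\sigma\to\infty$ or degeneracy. Next, with $\sigma$ known, the first equation determines $(-1)^js+x_0 = x_1+\xi_1/\sigma$, giving one linear relation between $s$ and $x_0$. The remaining constraint is that $\vx$ must actually lie on the circle $C_j(s,x_0)$, i.e. $(x_1-((-1)^js+x_0))^2+(x_2-2)^2=s^2+1$. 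Substituting the already-determined value of $(-1)^js+x_0$ and of $x_2-2$ turns this into a single scalar equation for $s$ alone, which I expect to solve explicitly and uniquely (using $s>0$ from $Y=(0,\infty)\times\mathbb{R}$ to discard a spurious sign); then $x_0$ follows from the linear relation. This establishes that $\pi_R|_{\Cc_j}$ is a bijection onto $D$.

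For the final assertion, I would observe that $\Cc=\Cc_1\sqcup\Cc_2$ (established just before the theorem, since the distributions are supported away from $C_1\cap C_2$). Since each $\pi_R|_{\Cc_j}$ is a bijection onto the same set $D$, the projection $\pi_R:\Cc\to T^*(X)$ is exactly two-to-one onto $D$, with the two preimages of a given $(\vx,\vxi)$ lying one in $\Cc_1$ and one in $\Cc_2$. The geometric content is that a point $\vx$ with a given conormal direction lies on exactly one circle of each family (left-reflected and right-reflected) realizing that codirection, and these two circles are genuinely distinct, producing the nonlocal two-to-one structure responsible for the artefacts.

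The main obstacle I anticipate is verifying uniqueness of the solution for $s$ in the circle-membership equation: after substitution this is quadratic-type in $s$, so I must check carefully that exactly one root satisfies $s>0$ and yields $x_2<1$, and that the resulting map is smooth. I would also want to confirm injectivity cleanly by checking that distinct preimages within a single $\Cc_j$ cannot map to the same $(\vx,\vxi)$; the explicit formulas above should make this transparent once the sign of $s$ is pinned down.
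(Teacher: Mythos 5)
Your proposal is correct and follows essentially the same route as the paper: both arguments explicitly invert $\pi_R$ on each $\Cc_j$ by solving for $(\sigma,s,x_0)$ from the covector data in the global coordinates, and then obtain the two-to-one statement from the disjoint decomposition $\Cc=\Cc_1\cup\Cc_2$ with one preimage in each component. The one step you flag as a potential obstacle resolves immediately: the circle-membership equation reduces to $s^2=r^2-1$ with $r=(2-x_2)\abs{\boldsymbol{\xi}}/\abs{\xi_2}\geq 2-x_2>1$ (since $x_2<1$ on $X$), so there is a unique root with $s>0$ and it depends smoothly on $(\vx,\boldsymbol{\xi})$.
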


\begin{proof}
  Let $\mu= (\vx;\boldsymbol{\xi})\in T^*(X)\smo$ and let $\vx = (x_1,x_2)$ and
$\boldsymbol{\xi}=(\xi_1,\xi_2)$.  If $\mu\in \pi_R(\Cc_j)$ for either $j=1$ or
$j=2$, then $\xi_2\neq 0$ by \eqref{def:C1C2} since $x_2<2$.  For the
rest of the proof, assume $\mu$ is in the set $D$ given by
\eqref{def:D}

We will now describe the preimage of $\mu$ in $\Cc_j$.  The covector
$\mu$ is conormal to a unique circle centered on $x_2=2$, and its
center is on the line through $\vx$ and parallel $\boldsymbol{\xi}$.  If the center
has coordinates $(c,2)$, then a calculation shows that $c$ is given by
\bel{def:center} c=c(\vx,\boldsymbol{\xi})=x_1-\frac{\xi_1(x_2-2)}{\xi_2}.\ee Using
this calculation, one sees that the radius of the circle and
coordinate $s$ are given by \bel{def:rs} r=r(\vx,\boldsymbol{\xi}) =
\frac{(2-x_2)\abs{\boldsymbol{\xi}}}{\abs{\xi_2}},\quad s= s(\vx,\boldsymbol{\xi}) =
\sqrt{r^2-1}\ee and the coordinate $\xoj$ is given by
\bel{def:xoj}\xoj=\xoj(\vx,\boldsymbol{\xi})= x_1+\frac{\xi_1(2-x_2)}{\xi_2}
+(-1)^{j-1}s\ \ \text{for $j=1,2$}. \ee A straightforward calculation
shows that \bel{def:sigmaX}\sigma=\sigma(\vx,\boldsymbol{\xi}) =
\frac{-\xi_2}{2-x_2}.\ee This gives the coordinates \eqref{def:coords}
on $\Cc_j$ and shows that $\pi_R:\Cc_j\to D$ is injective with smooth
inverse.

Now, we consider the projection from $\Cc$.  Given $(\vx,\boldsymbol{\xi})\in D$,
our calculations show that the preimage in $\Cc$, in coordinates
\eqref{def:coords} is given by two \emph{distinct} points
\[(s(\vx,\boldsymbol{\xi}),\xoj\xxi, x_1,\sigma\xxi)\  \text{ for $j=1,2$ }.\]  The coordinates are
given by \eqref{def:rs}, \eqref{def:xoj} and \eqref{def:sigmaX}
respectively.
\end{proof}

The abstract adjoint $\Tc_j^t$ cannot be composed with $\Tc_i$
for $i=1,2$, because the support of $\Tc_i f$ can be unbounded in $r$,
even for $f\in \Ec'(X)$ and $\Tc_j^t$ is not defined for such
distributions.  Therefore, we introduce a smooth cutoff function.
Choose $r_M>2$ and let $\psi(s)$ be a smooth compactly supported function
equal to one for $s\in \left[1,\sqrt{1-r^2_M}\right]$ and define \bel{def:Tcstar} \Tc_j^* g =
\Tc^t_j(\psi g)\ee for all $g\in \Dc'(Y)$ because our bound on $r$
introduces a bound on $x_0$ so the integral is over a bounded set for
each $\vx\in X$.

\subsection{The nonlocal artefacts} \label{nonloc}
Now, we can state our next theorem, which describes the artifacts that
can be added to the reconstruction using the normal operator,
$\Tc^*\Tc$.

\begin{theorem}\label{mainthm2}If $f\in \Ec'(X)$ then 
  \bel{wfT*T} \wf\paren{\Tc^*\Tc f} \subset \paren{\wf(f)\cap D} \cup
\Lambda_{12}(f)\cup \Lambda_{21}(f)\ee where $D$ is given by
\eqref{def:D}, and the sets $\Lambda_{ij}$ are given for $\xxi\in D$
by \bel{def:Lambdaij} \Lij(f) = \sparen{\lij\xxi\st \xxi\in \wf(f)\cap
D}\ee where the functions $\lot$ and $\lto$ are given by
\eqref{def:lot} and \eqref{def:lto} respectively.  Note that the
functions $\lij$ are  defined for only some  $(\vx,\boldsymbol{\xi})\in D$ and
singularities at other points do not generate artifacts.
\end{theorem}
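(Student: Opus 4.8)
The plan is to expand the normal operator and analyze each piece through the composition calculus for FIOs. Using the cutoff-regularized adjoint from \eqref{def:Tcstar}, I would write
\begin{equation}
\Tc^*\Tc=\sum_{i,j=1}^2\Tc_i^*\Tc_j=\paren{\Tc_1^*\Tc_1+\Tc_2^*\Tc_2}+\paren{\Tc_1^*\Tc_2+\Tc_2^*\Tc_1}.
\end{equation}
Since $f$ is supported away from the intersection points of $C_1$ and $C_2$, the pieces $\Tc_1,\Tc_2$ may be treated separately, and $\wf(\Tc^*\Tc f)$ is contained in the union of the wavefront sets produced by the four summands. It therefore suffices to bound each $\wf(\Tc_i^*\Tc_j f)$ and collect the contributions into \eqref{wfT*T}.

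For the diagonal terms $i=j$, the preceding proposition establishes that $\Cc_j$ satisfies the semi-global Bolker assumption, so $\Tc_j^*\Tc_j$ is a pseudodifferential operator. By pseudolocality this gives $\wf(\Tc_j^*\Tc_j f)\subset\wf(f)$, and because $\pi_R:\Cc_j\to D$ has image exactly $D$ by Theorem \ref{mainthm}, the operator is microlocally supported over $D$. Its contribution is thus $\wf(f)\cap D$, which accounts for the first term on the right of \eqref{wfT*T}.

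For the off-diagonal terms $i\neq j$, the operator $\Tc_i^*\Tc_j$ is an FIO with canonical relation $\Cc_i^*\circ\Cc_j$, and the H\"ormander--Sato lemma gives $\wf(\Tc_i^*\Tc_j f)\subset\paren{\Cc_i^*\circ\Cc_j}\circ\wf(f)$. I would then compute this composition explicitly. Because each $\pi_L$ is an embedding (Bolker) and each $\pi_R:\Cc_j\to D$ is a bijection by Theorem \ref{mainthm}, the fiber product defining $\Cc_i^*\circ\Cc_j$ is graph-like over $D$: a covector $\xxi\in D$ lies in the domain exactly when the base point $(s,x_0)$ and covector $(\tau_1,\tau_2)$ attached to the unique preimage in $\Cc_j$ also occur in $\pi_L(\Cc_i)$, in which case the image is
\begin{equation}
\lij=\pi_R\big|_{\Cc_i}\circ\paren{\pi_L\big|_{\Cc_i}}^{-1}\circ\pi_L\big|_{\Cc_j}\circ\paren{\pi_R\big|_{\Cc_j}}^{-1}.
\end{equation}
Concretely, I would push $\xxi$ through \eqref{def:rs}, \eqref{def:xoj}, \eqref{def:sigmaX} and \eqref{tau1 tau2} to land at $(s,x_0,\tau_1,\tau_2)$, then invert $\pi_L$ on $\Cc_i$ via \eqref{def:sigmaY} (with index $i$) to recover the new $x_1$ and $\sigma$, and finally read off $\lij\xxi$ from the coordinates \eqref{def:coords}. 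This produces the explicit forms of $\lot$ and $\lto$, shows that $\Cc_i^*\circ\Cc_j$ is the graph of $\lij$, and hence that $\wf(\Tc_i^*\Tc_j f)\subset\Lij(f)$ with $\Lij$ as in \eqref{def:Lambdaij}.

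The main obstacle is the composition step: one must check that $\Cc_i^*\circ\Cc_j$ is a clean (indeed transverse) composition so that the H\"ormander--Sato lemma yields an honest FIO whose canonical relation is the graph of the smooth map $\lij$, and one must pin down the precise domain of $\lij$. The latter requires verifying that the matched data $(s,x_0,\tau_1,\tau_2)$ arising from $\Cc_j$ genuinely lies in $\pi_L(\Cc_i)$ and that the resulting point $\lij\xxi$ satisfies the constraint $x_2'<1$ placing it back in $T^*(X)$; at points of $\wf(f)\cap D$ where either condition fails, no artefact is generated, which is exactly the caveat in the statement. Combining the diagonal and off-diagonal bounds then yields \eqref{wfT*T}.
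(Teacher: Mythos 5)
Your proposal is correct and follows essentially the same route as the paper: expand $\Tc^*\Tc$ into the four compositions, use the H\"ormander--Sato lemma, identify the diagonal terms with $\wf(f)\cap D$ via the Bolker assumption, and compute the cross terms $\Cc_i^*\circ\Cc_j$ explicitly as graphs of $\lij=\pi_R\big|_{\Cc_i}\circ\paren{\pi_L\big|_{\Cc_i}}^{-1}\circ\pi_L\big|_{\Cc_j}\circ\paren{\pi_R\big|_{\Cc_j}}^{-1}$ using the coordinate formulas from Theorem \ref{mainthm}, with the caveat about the partial domain of $\lij$. The only cosmetic difference is that the paper carries out the composition purely set-theoretically on wavefront sets (so cleanness of the composition is not needed for the inclusion \eqref{wfT*T} itself), whereas you flag it as an obstacle; that verification is only required for the order statements in Remark \ref{rem:strength}, not for this theorem.
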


Therefore, $\Tc^*\Tc$ recovers most singularities of $f$, as indicated
in the first term in \eqref{wfT*T}, but it adds two sets of
nonlocal singularities, as given by $\Lot(f)$ and $\Lto(f)$.  Note that, even
if $\Tc_j^*$ and $\Tc_j$ are both elliptic above a covector
$(\vx,\boldsymbol{\xi})$, artifacts caused by other points could mask singularities
of $f$ that ``should'' be visible in $\Tc^*\Tc f$.

\begin{proof}
Let $f\in \Ec'(X)$.  By the H\"ormander-Sato Lemma \cite[Theorem
8.2.13]{hormanderI} We have the expansion
\begin{equation}\label{compositions}
\begin{aligned}
\wf(\Tc^*\Tc(f))\subset &\paren{\Cc^*\circ \Cc}\circ \wf(f)\\
&\quad=\bparen{(\Cc_1^*\circ\Cc_1) \cup
(\Cc_2^*\circ\Cc_2)}\circ\wf(f)\\
&\quad\quad \cup \ \paren{\Cc_2^*\circ\Cc_1}\circ\wf(f)
\ \cup \ \paren{\Cc_1^*\circ\Cc_2}\circ \wf(f)
\end{aligned}
\end{equation}
The first term in brackets in \eqref{compositions} is
$\sparen{(\vx,\boldsymbol{\xi};\vx,\boldsymbol{\xi})\st (\vx,\boldsymbol{\xi})\in D}\circ\wf(f) = \wf(f)\cap
D$.  This proves the first part of the inclusion \eqref{wfT*T}.

We now analyze the other two terms to define the functions $\lij$ and
finish the proof. Let $\xxi\in \wf(f)\cap D$.  First, consider
$\lot(\vx,\boldsymbol{\xi})=\Cc_2^*\circ \Cc_1\circ(\vx,\boldsymbol{\xi})$.\footnote{For convenience,
we will abbreviate the set theoretic composition $\Cc_i\circ
\sparen{\xxi}$ by $\Cc_i\circ \xxi$.} Using the calculations in the
proof of Theorem \ref{mainthm} one sees that $\Cc_1\circ \xxi$ is
given by \bel{C1 xxi} \begin{gathered}(s,\xo,\tau_1,\tau_2)\ \text{
where }\ s=s\xxi = \sqrt{\frac{(2-x_2)^2 \abs{\boldsymbol{\xi}}^2}{\xi_2^2} -1}\\
\xo = \xoo(\vx,\boldsymbol{\xi})= x_1+\frac{\xi_1(2-x_2)}{\xi_2} +s
\qquad \sigma=\sigma(\vx,\boldsymbol{\xi})= \frac{-\xi_2}{2-x_2}\\
\tau_1 = \sigma(x_1-x_0)\qquad
\tau_2 = -\sigma(s+x_1-x_0)
\end{gathered}
\ee where we have taken these from the proof of Theorem \ref{mainthm}.
To find $\Cc_2^*\circ \Cc_1\circ\xxi$ we calculate the composition of
the covector described in \eqref{C1 xxi} with $\Cc_2^*$.  Note that
the values of $\xo$ and $s$ are the same in both calculations and are
given by \eqref{C1 xxi}.  After using \eqref{def:sigmaY} and that
$\frac{\xi_1(2-x_2)}{\xi_2}=x_0-s-x_1$, one sees that
\bel{def:lot}\begin{aligned} \lot\xxi& = \paren{(y_1,y_2),\boldsymbol{\eta}} \
\text{ where }\\
y_1=y_1\xxi&=\frac{s(x_1-\xo)}{2(x_1-\xo)+s}+\xo\\
y_2 =y_2\xxi &= 2-\sqrt{\frac{(2-x_2)^2\abs{\boldsymbol{\xi}}^2}{\xi_2^2} -
(y_1-(\xo+s))^2}
\\
\boldsymbol{\eta}& = \paren{-2\xi_1 - \frac{s\xi_2}{2-x_2}}(\vy-\vc_2(s\xxi\xoo\xxi))
\end{aligned}
\ee where $\xo=\xoo\xxi$ and $s=s\xxi$ are given in \eqref{C1 xxi} and
$\boldsymbol{\eta}$ is calculated using the expression \eqref{def:sigmaY} with
$j=2$.

Note that the function $\lot$ is  defined for only some
$(\vx,\boldsymbol{\xi})\in D$; for example if the argument for the square root
defining $y_2\xxi$ is negative, then $y_2\xxi$ is not defined and the
point $\xxi$ will not generate artifacts in $\Lambda_{12}$.

A similar calculation shows for $\yeta\in D$ that
\bel{def:lto}\begin{aligned} \lto\yeta& = ((x_1,x_2),\boldsymbol{\xi}) \ \text{
where }\\
x_1=x_1\yeta&=\frac{s(y_1-\xo)}{-2(y_1-\xo)+s}+\xo\\
x_2 =x_2\yeta &= 2-\sqrt{s^2\yeta+1 -(x_1-(\xo-s))^2}\\
\boldsymbol{\xi}& = \paren{2\eta_1 -
\frac{s\eta_2}{2-y_2}}(\vx-\vc_1(s\yeta,\xot\yeta)) \end{aligned}
\ee
where 
\[
s=s\yeta = \sqrt{\frac{(2-y_2)^2 \abs{\boldsymbol{\eta}}^2}{\eta_2^2} -1}
\qquad 
\xo = \xot\yeta= y_1+\frac{\eta_1(2-y_2)}{\eta_2}-s.
\]
Note that the function $\lto$ is not defined for all $\yeta\in D$,
and other points $\yeta$ do not generate artifacts.  This is for the
same reason as for $\lot$.
\end{proof}

\begin{remark}\label{rem:strength} The artefacts caused by a
singularity of $f$ are as strong as the reconstruction of that
singularity.  To see this, first note that each $\Tc_j^*\Tc_i$
smooths of order one in Sobolev scale since it an FIO of order $-1$
\cite[Theorem 4.3.1]{Ho1971}. 

  The visible singularities in the reconstruction come from the
compositions $\Tc_1^*\Tc_1$ and $\Tc_2^*\Tc_2$ since these are
pseudodifferential operators of order $-1$. The artefacts come from
the ``cross'' compositions $\Tc_2^* \Tc_1$ and $\Tc_1^* \Tc_2$, and
they are FIO of order $-1$.  Therefore, since the terms that preserve
the real singularities of $f$, $\Tc_i^*\Tc_i$, $i=1,2$, are also of
order $-1$, $\Tc^*\Tc$ smooths each singularity of $f$ by one order in
Sobolev scale \emph{and} the composition $\Tc^*_2\Tc_1$ (corresponding
to the artifact $\lot$, if defined at this covector) can create
an artefact from that singularity that are also one order smoother than
that singularity, and similarly with the composition $\Tc^*_1\Tc_2$.  

   Second, our results are valid, not only for the normal operator
$\Tc^*\Tc$ but for any filtered backprojection method $\Tc^*P
\Tc$ where $P$ is a pseudodifferential operator.  This is true since
pseudodifferential operators have canonical relation $\Delta$ and they
do not move singularities, so our microlocal calculations are the
same.  If $P$ has order $k$, then $\Tc^*P\Tc$ decreases the Sobolev
order of each singularity of $f$ by order $(k-1)$ in Sobolev norm and
can create an artefact from that singularity of the same order.
\end{remark}

\subsection{Artifacts for $\Tc^*\Tc$ due to limited data}\label{sect:artifactsTc}
In practice we do not have access to $\mathcal{T}f(s,x_0)$ for all
$s\in (0,\infty)$ (or $r\in(1,\infty)$) and $x_0\in\mathbb{R}$, and
will have knowledge of $x_0\in(-a,a)$ and $r\in (1,r_M)$ for some
$a>0$ (see figures \ref{fig1} and \ref{fig1.1}) and maximum radius
$r_M>1$. 

We now evaluate which wavefront directions $\xxi$ will be visible from
this limited data.  Let us consider the pair $(\vx,\boldsymbol{\xi})\in
C_2(s,x_0)\times S^1$ and let $\beta$ be the angle of $\boldsymbol{\xi}$ from the
vertical as depicted in figure \ref{fig1.1}. Then
$\vc_2(s,x_0)=((2-x_2)\tan\beta+x_1,2)$ and
$$|\vx-\vc_2(s,x_0)|^2=r^2\implies (1+\tan^2\beta)(2-x_2)^2=r^2\implies \tan\beta=\sqrt{\frac{r^2}{(2-x_2)^2}-1}.$$
Let $\beta_m=\beta_m(\vx)\in (0,\pi/2)$ be defined by \bel{def:betam}
\tan\beta_m=\sqrt{\frac{r_M^2}{(2-x_2)^2}-1}\ee (noting that we only
consider $\vx$ such that $1>x_2>2-r_M$). Then the maximum directional
coverage of the singularities (wavefront set) at a given $\vx\in X$
which are resolved by the Compton data are described by the open cone
of $\boldsymbol{\xi}\in S^1$ \bel{def:C_T}C_T=\{\pm(\sin\beta,-\cos\beta) :
-\beta_m<\beta<\beta_m\},\ee and the opening angle of the cone depends
on the depth of $\vx$ (i.e. $x_2$). See figure \ref{fig1.1}. The cone
$C_T$ illustrated corresponds to the case when $\beta=\beta_m$.

In all of our numerical experiments, we
set the tunnel height as $r_m-1=6$ and the detector line width is
$2a=8$. We let $r_M>r_m$ be large enough to penetrate the entire
scanning tunnel (up to the line $\{x_2=2-r_m\}$ as highlighted in
figures \ref{fig1} and \ref{fig1.1}), so as to imply a unique
reconstruction \cite{webber2020microlocal}. Specifically we set the
maximum radius $r_M=9$ and simulate $\Tc(r,x_0)$ for $r\in \{1+0.02j :
1\leq j\leq 400\}$ and $x_0\in \{-4+0.04j : 1\leq j\leq 200\}$.
Further the densities considered are represented on
$[-2,2]\times[-3,1]$ ($200\times 200$ pixel grid) in the reconstructions shown. The machine design considered is such that for any $\vx\in[-2,2]\times
[-1.5,1]$ we have the maximal directional coverage in $C_T$ allowed
for the limited $r<r_M$ (see figure \ref{fig:directions}). With the
exception of the horizontal bar phantom depicted in figure \ref{BF1}, all objects considered for reconstruction are
approximately in this region.
\begin{figure}[!h]
\begin{subfigure}{0.32\textwidth}
\includegraphics[width=0.9\linewidth, height=4cm]{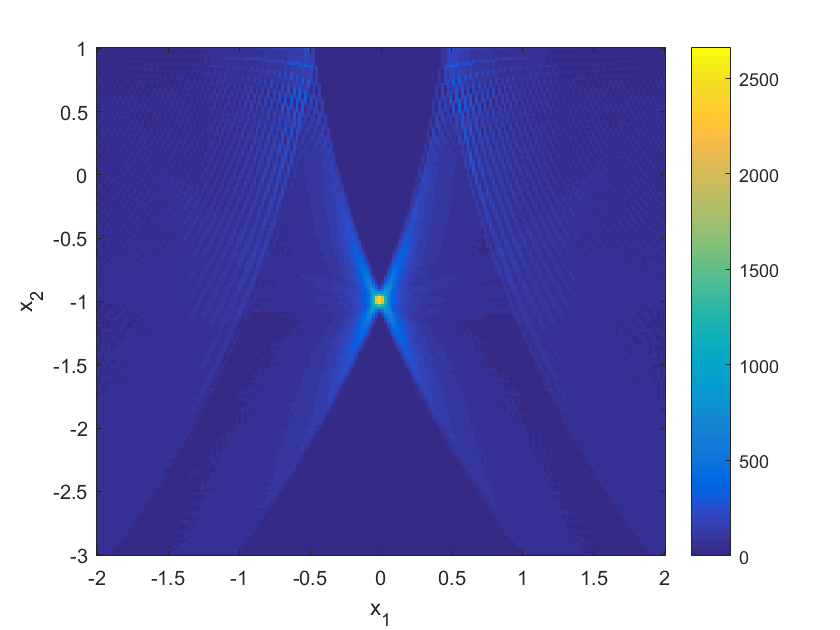}
\subcaption{$\Tc^*\Tc\delta$.}\label{FC1:A}
\end{subfigure}
\begin{subfigure}{0.32\textwidth}
\includegraphics[width=0.9\linewidth, height=4cm]{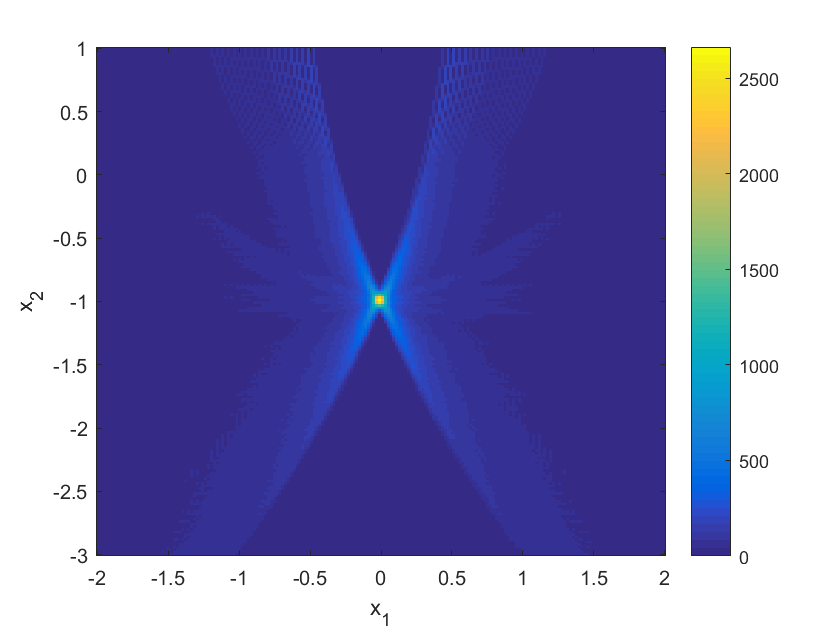}
\subcaption{$(\Tc_1^*\Tc_1+\Tc_2^*\Tc_2)\delta$.}\label{FC1:B}
\end{subfigure}
\begin{subfigure}{0.32\textwidth}
\includegraphics[width=0.9\linewidth, height=4cm]{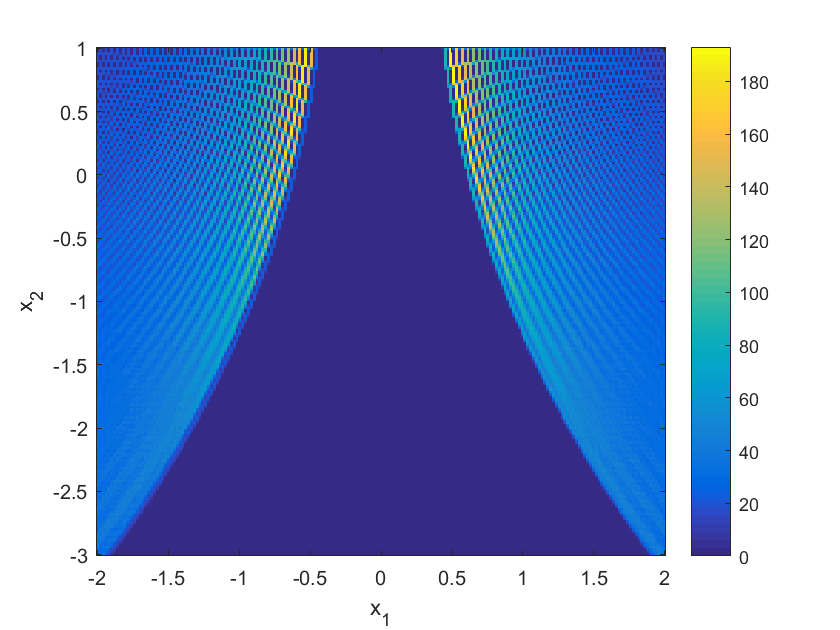}
\subcaption{$(\Tc_1^*\Tc_2+\Tc_2^*\Tc_1)\delta$.}\label{FC1:C}
\end{subfigure}
\begin{subfigure}{0.32\textwidth}
\includegraphics[width=0.9\linewidth, height=4cm]{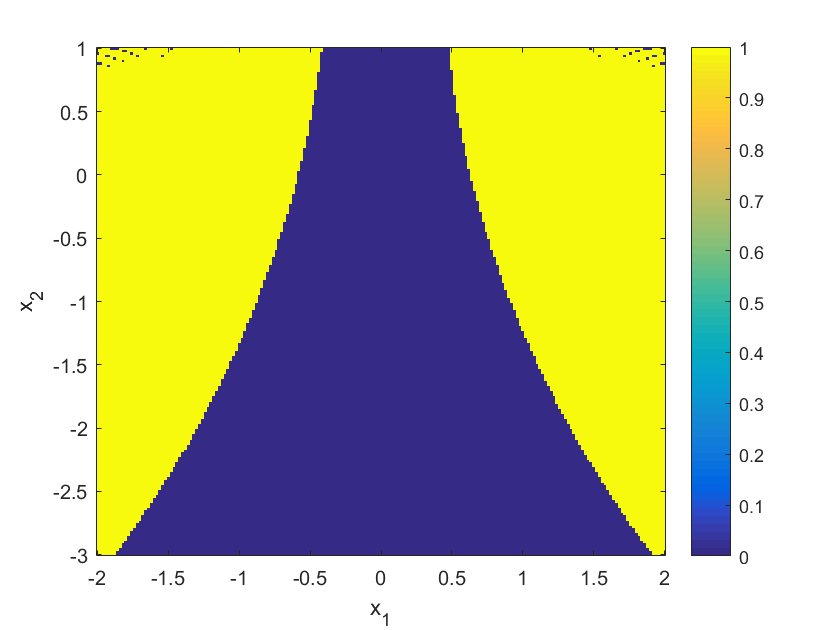}
\subcaption{$\chi_{S_{12}\cup S_{21}}$.}\label{FC1:D}
\end{subfigure}
\begin{subfigure}{0.32\textwidth}
\includegraphics[width=0.9\linewidth, height=4cm]{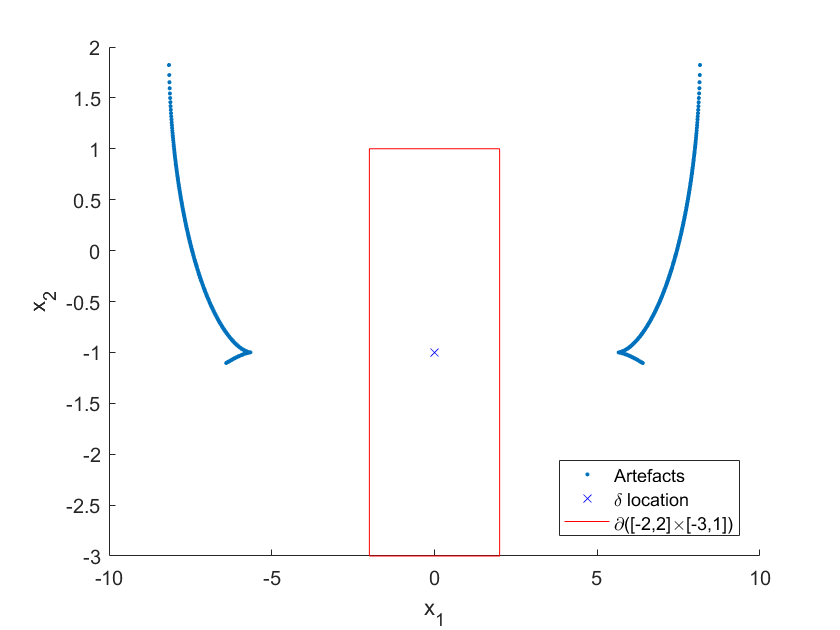}
\subcaption{$\Lambda_{ij}$ artefacts.}\label{FC1:E}
\end{subfigure}
\begin{subfigure}{0.32\textwidth}
\includegraphics[width=0.9\linewidth, height=4cm]{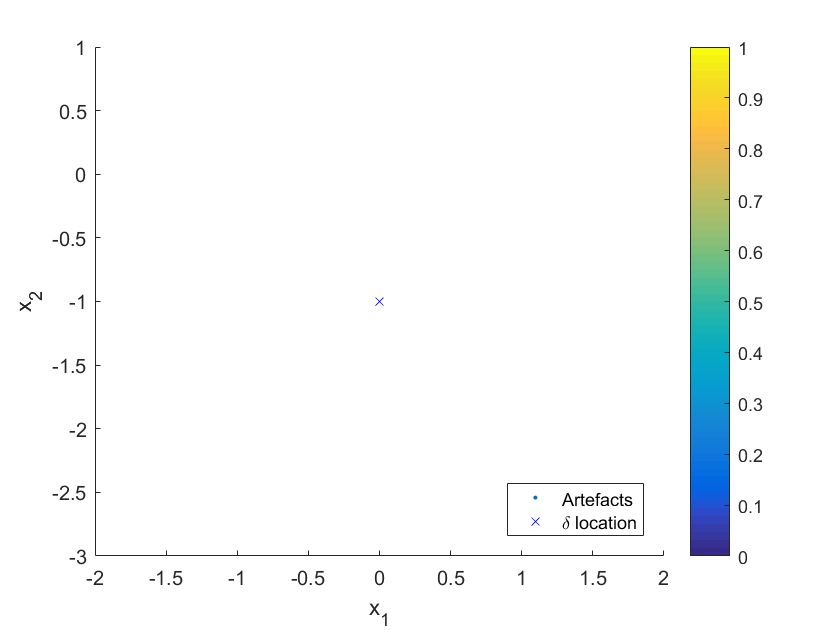}
\subcaption{$\Lambda_{ij}$ artefacts on $[-2,2]\times[-3,1]$.}\label{FC1:F}
\end{subfigure}
\caption{$\Tc^*\Tc\delta$ (the $\delta$ function is centered at
$\vO=(0,-1)$) images with the predicted artefacts due to the limited
data backprojection (on $S_{12}\cup S_{21}$) and those induced by
$\Lambda_{12}$ and $\Lambda_{21}$.} \label{FC1}
\end{figure}

To demonstrate the
artefacts, we apply a discrete form of $\mathcal{T}^*\mathcal{T}$ to a
delta function. We have the expansion
\begin{equation}
\mathcal{T}^*\mathcal{T}\ =\ (\Tc_1+\Tc_2)^*(\Tc_1+\Tc_2)
\ =\ \Tc_1^*\Tc_1+\Tc_2^*\Tc_2+\Tc_1^*\Tc_2+\Tc_2^*\Tc_1.
\end{equation}
Using equations \eqref{C1 xxi}, \eqref{def:lot}, and \eqref{def:lto}
one can show for $g\in L^2(Y)$ that the backprojection operators $\Tc_j^*$, $j=1,2$
can be written
\begin{equation}\label{Tj*}
\Tc_j^*g(\vx)=\int_{-\beta_m}^{\beta_m}g\left(
\sqrt{r^2-1},x_1+(-1)^{j-1}\sqrt{r^2-1}+r\sin\beta\right)\bigg|_{r=\frac{(2-x_2)}{\cos\beta}}\mathrm{d}\beta.
\end{equation}
Note that we are not restricting $\xo$ to $[-a,a]$  but we are
restricting $s$ to $\paren{0,\sqrt{r_M^2-1}}$, and hence the cutoff function $\psi$ of equation \ref{def:Tcstar} is equal to one on the bounds of integration.


Now, let $f$ be a delta function at $\vy$. We calculate the artifacts 
\begin{equation}
\Tc_1^*\Tc_2f(\vx)\neq 0 \iff \exists \beta\in[-\beta_m,\beta_m]\ \
\text{s.t.}\ \ |\vy-\vc_2(s,x_0)|=r,
\end{equation}
where $r=\frac{2-x_2}{\cos\beta}$, $s=\sqrt{r^2-1}$ and
$x_0=x_1+s+r\sin\beta$. Similarly
\begin{equation}
\Tc_2^*\Tc_1f(\vx)\neq 0 \iff \exists \beta\in[-\beta_m,\beta_m]\ \
\text{s.t.}\ \ |\vy-\vc_1(s,x_0)|=r,
\end{equation}
where $r=\frac{2-x_2}{\cos\beta}$, $s=\sqrt{r^2-1}$ and
$x_0=x_1-s+r\sin\beta$. Hence the only contributions to the
  backprojection from $\Tc_1^*\Tc_2$ and $\Tc_2^*\Tc_1$   are on the
  following sets:
\begin{equation}
S_{12}=\{\vx : \exists \beta\in[-\beta_m,\beta_m]\ \ \text{s.t.}\ \
|\vy-\vc_2(s,x_0)|=r\}
\end{equation}
where $r=\frac{2-x_2}{\cos\beta}$ and $x_0 = x_1+s+r\sin\beta$
and
\begin{equation}
S_{21}=\{\vx : \exists \beta\in[-\beta_m,\beta_m]\ \ \text{s.t.}\ \
|\vy-\vc_1(s,x_0)|=r\}.
\end{equation}
where $r=\frac{2-x_2}{\cos\beta}$ and $x_0=x_1-s+r\sin\beta$.  This
means that all $\Lij$ artifacts will be in these sets.  Note that
besides the $\Lij$ artifacts shown in figure \ref{FC2:E} and
\ref{FC2:F} there are limited data artifacts caused by circles meeting
$\vy$ of radius $r_M$ (figures \ref{FC2:A}-\ref{FC2:C}) and these are
of higher strength in Sobolev norm.

\begin{figure}[!h]
\begin{subfigure}{0.32\textwidth}
\includegraphics[width=0.9\linewidth, height=4cm]{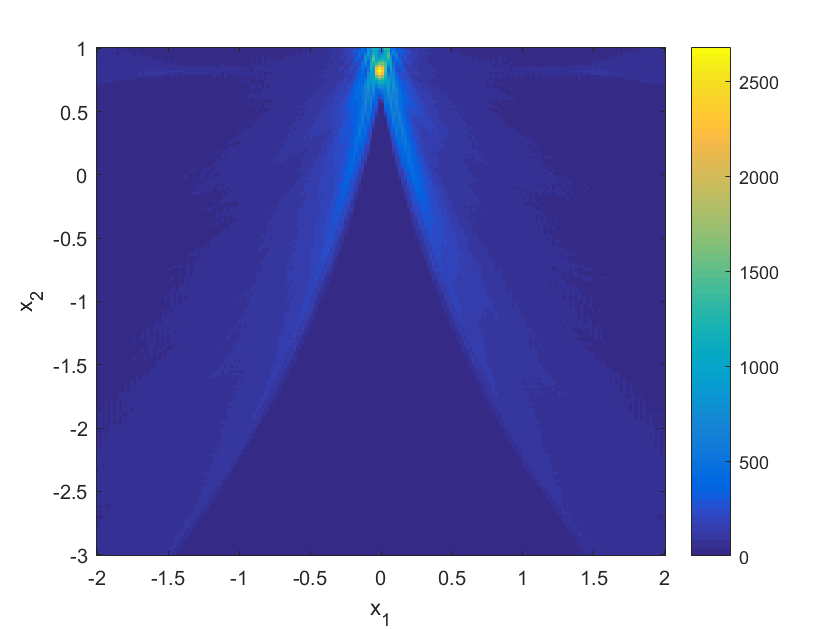}
\subcaption{$\Tc^*\Tc\delta$.}\label{FC2:A}
\end{subfigure}
\begin{subfigure}{0.32\textwidth}
\includegraphics[width=0.9\linewidth, height=4cm]{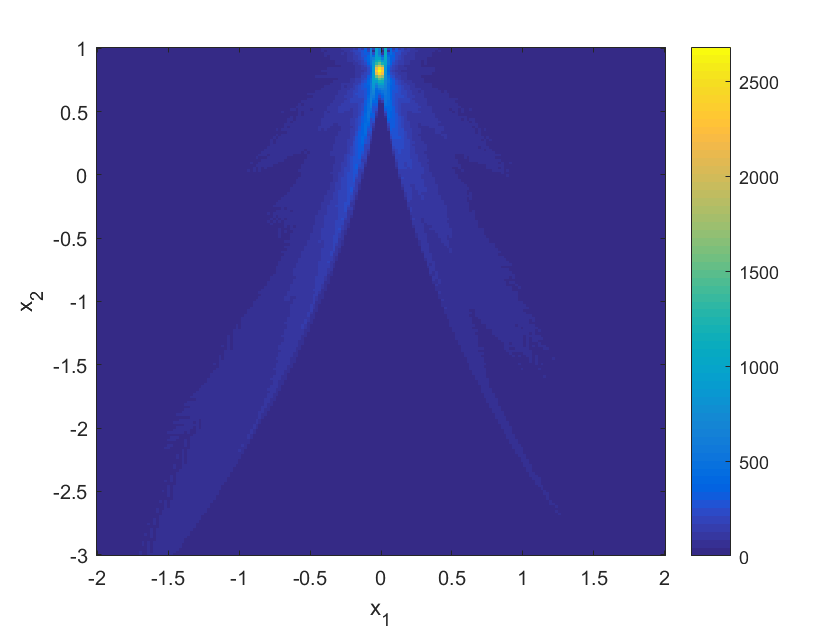}
\subcaption{$(\Tc_1^*\Tc_1+\Tc_2^*\Tc_2)\delta$.}\label{FC2:B}
\end{subfigure}
\begin{subfigure}{0.32\textwidth}
\includegraphics[width=0.9\linewidth, height=4cm]{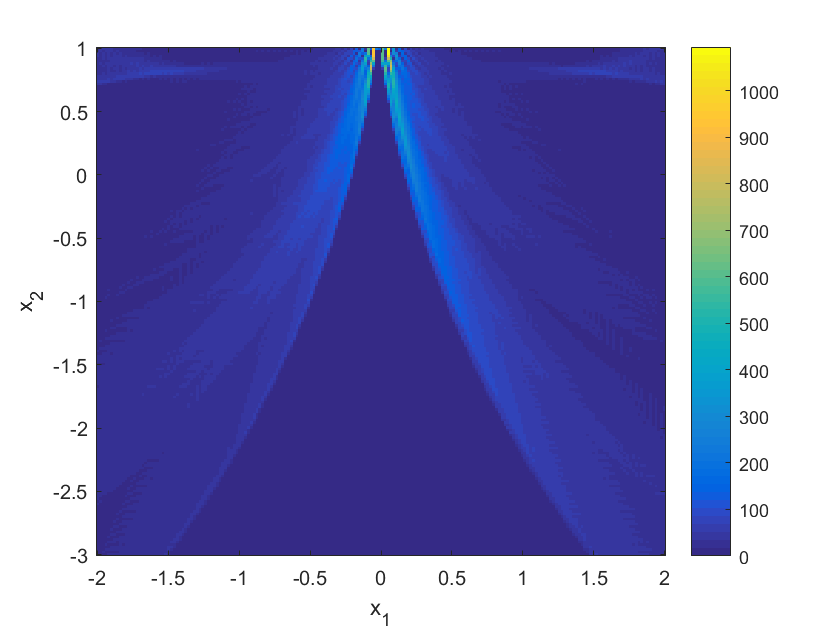}
\subcaption{$(\Tc_1^*\Tc_2+\Tc_2^*\Tc_1)\delta$.}\label{FC2:C}
\end{subfigure}
\begin{subfigure}{0.32\textwidth}
\includegraphics[width=0.9\linewidth, height=4cm]{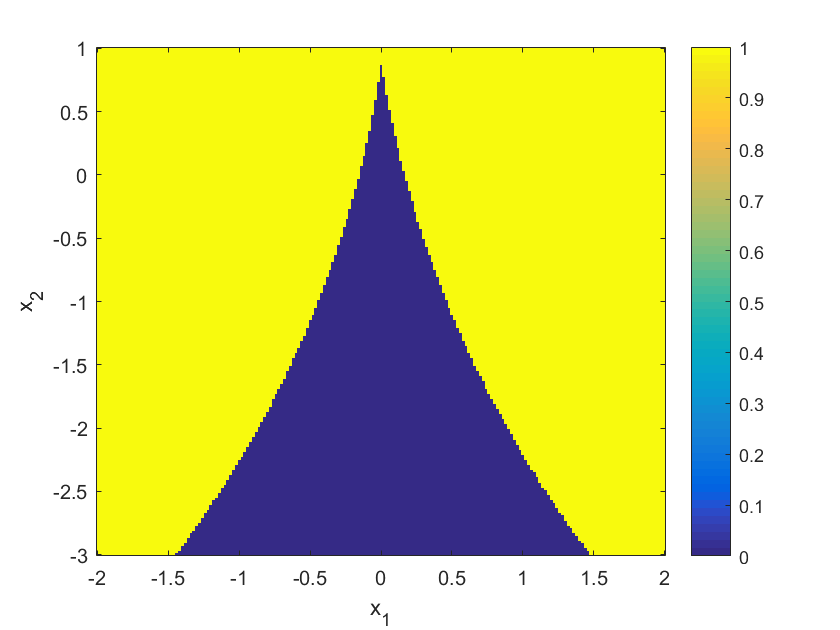}
\subcaption{$\chi_{S_{12}\cup S_{21}}$.}\label{FC2:D}
\end{subfigure}
\begin{subfigure}{0.32\textwidth}
\includegraphics[width=0.9\linewidth, height=4cm]{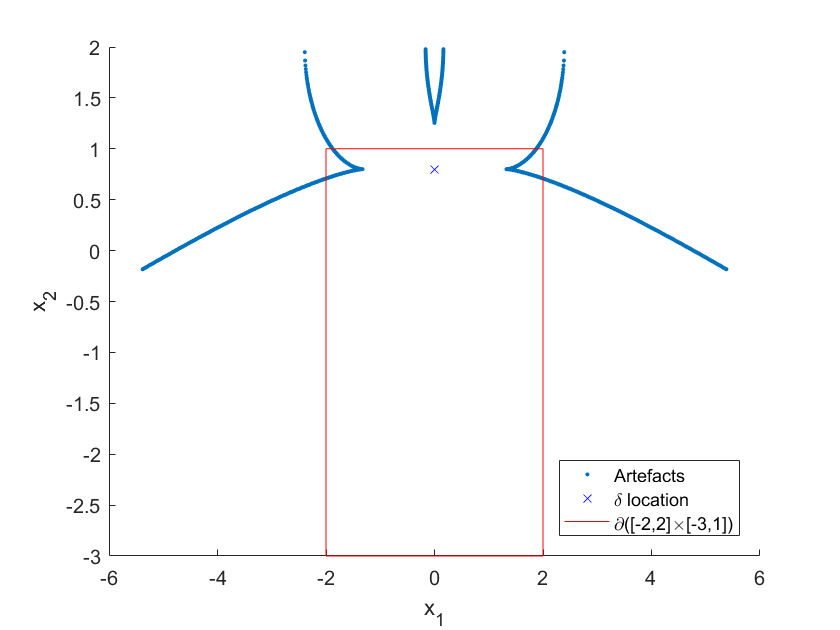}
\subcaption{$\Lambda_{ij}$ artefacts.}\label{FC2:E}
\end{subfigure}
\begin{subfigure}{0.32\textwidth}
\includegraphics[width=0.9\linewidth, height=4cm]{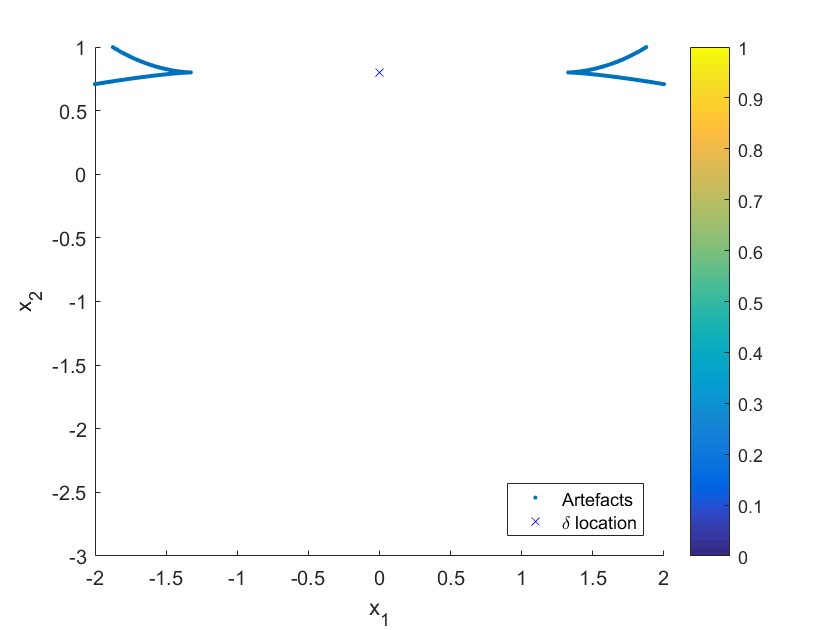}
\subcaption{$\Lambda_{ij}$ artefacts on $[-2,2]\times [-3,1]$.}\label{FC2:F}
\end{subfigure}
\caption{$\Tc^*\Tc\delta$ (the $\delta$ function is centered at
$(0,0.9)$) images with the predicted artefacts due to the limited data
backprojection (on $S_{12}\cup S_{21}$) and those induced by
$\Lambda_{12}$ and $\Lambda_{21}$.} \label{FC2}
\end{figure}

To simulate a $\delta$ function discretely we assign a value of 1 to
nine neighbouring pixels in a 200--200 grid (which will represent
$[-2,2]\times [-3,1]$) and set all other pixel values to zero. Letting our
discrete delta function be denoted by $x_{\delta}$, we approximate
$\mathcal{T}^*\mathcal{T}\delta\approx A^TAx_{\delta}$, where $A$ is
the discrete form of $\mathcal{T}$. For comparison we show images of
$$(\mathcal{T}_1^*\mathcal{T}_2+\mathcal{T}_2^*\mathcal{T}_1)\delta\approx (A_1^TA_2+A_2^TA_1)x_{\delta},$$
a characteristic function on the set $S_{12}\cup S_{21}$, and the
artefacts induced by $\Lambda_{12}$ and $\Lambda_{21}$. Here $A_j$ is the discrete
form of $\mathcal{T}_j$, for $j=1, 2$.  See figure \ref{FC1}. For example, in figure \ref{FC1:B} we see a
butterfly wing type artefact in $(A_1^TA_1+A_2^TA_2)x_{\delta}$. This is
due to the limited $r$ and $x_0$ data inherent to our acquisition geometry
(there are unresolved wavefront directions).
In the $(A_1^TA_2+A_1^TA_2)x_{\delta}$ image of figure \ref{FC1:C} we see artefacts
appearing on the set $S_{12}\cup S_{21}$ as shown in figure \ref{FC1:D}. This is as predicted by our theory.
The artefacts induced by the $\Lij$ in this case lie outside
the scanning region ($[-2,2]\times [-3,1]$), and hence they are not observed in
the reconstruction. See figures \ref{FC1:E} and \ref{FC1:F}. In figure \ref{FC2} the artefact curves intersect $[-2,2]\times [-3,1]$ in the top left and
right-hand corners respectively. See figures \ref{FC2:E} and \ref{FC2:F}. In this case the artefacts are
observed faintly in the reconstructions (their \emph{magnitude} is small compared to the delta function), and it is unclear whether they align with our predictions.
\begin{figure}[!h]
\begin{subfigure}{0.32\textwidth}
\includegraphics[width=0.9\linewidth, height=4cm]{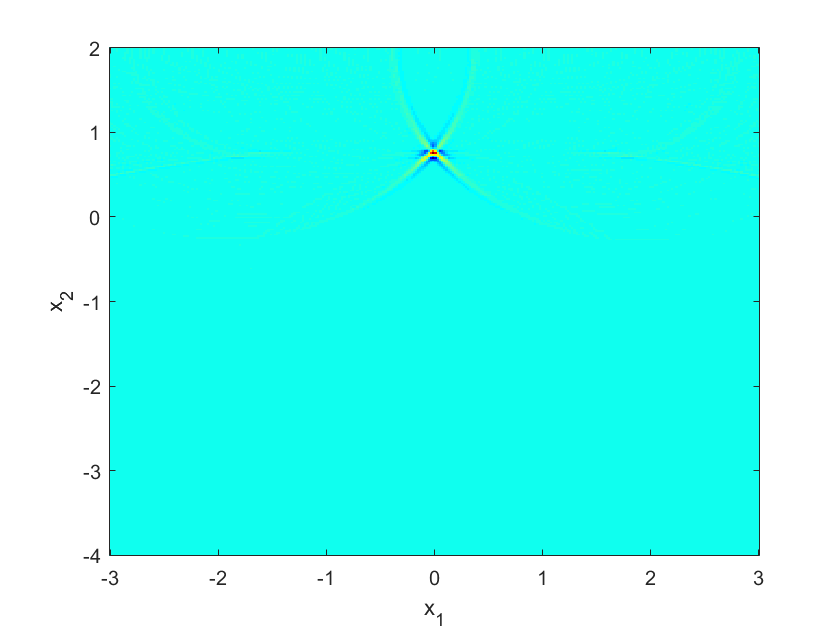}
\subcaption{$\Tc^*\Phi\Tc\delta$ (location 1).}\label{FC3:A}
\end{subfigure}
\begin{subfigure}{0.32\textwidth}
\includegraphics[width=0.9\linewidth, height=4cm]{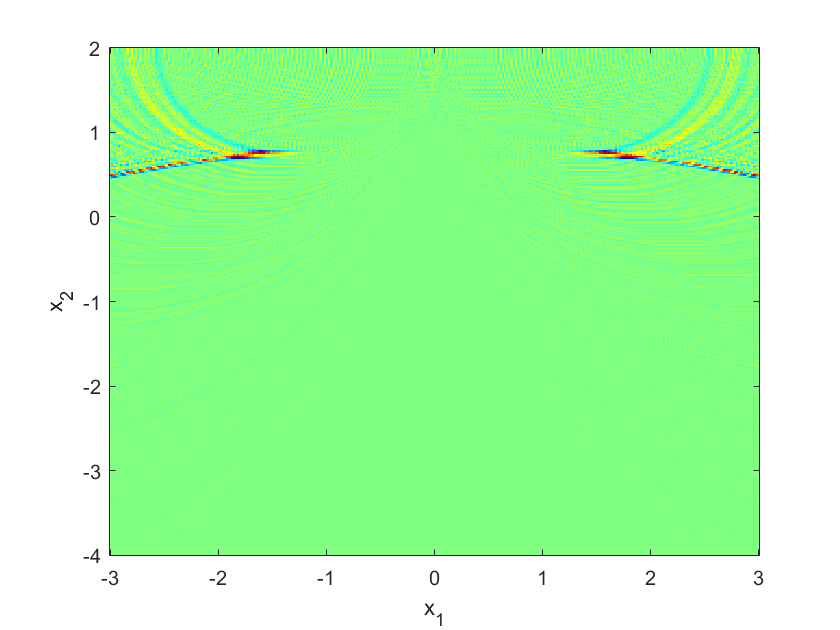}
\subcaption{$(\Tc_1^*\Phi\Tc_2+\Tc_2^*\Phi\Tc_1)\delta$.}\label{FC3:B}
\end{subfigure}
\begin{subfigure}{0.32\textwidth}
\includegraphics[width=0.9\linewidth, height=4cm]{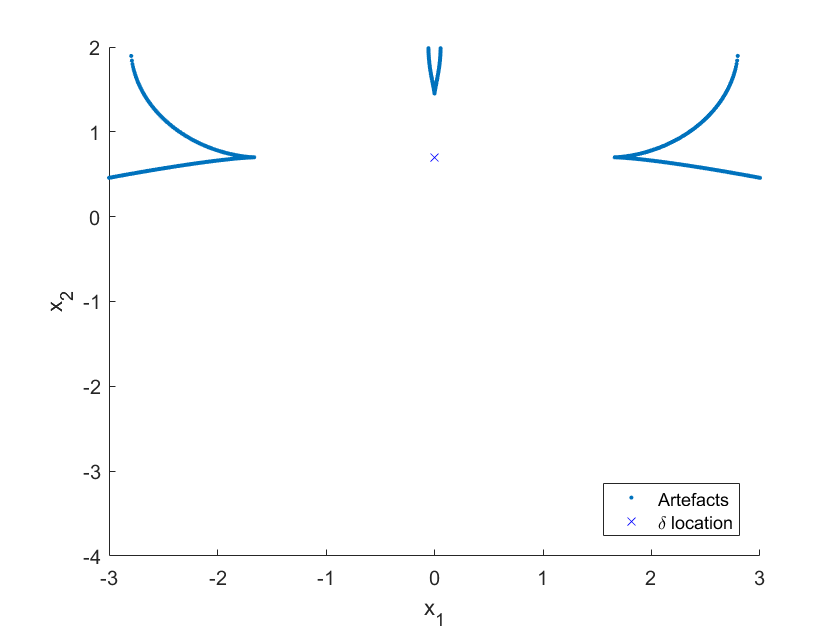}
\subcaption{$\Lij$ artefacts.}\label{FC3:C}
\end{subfigure}
\begin{subfigure}{0.32\textwidth}
\includegraphics[width=0.9\linewidth, height=4cm]{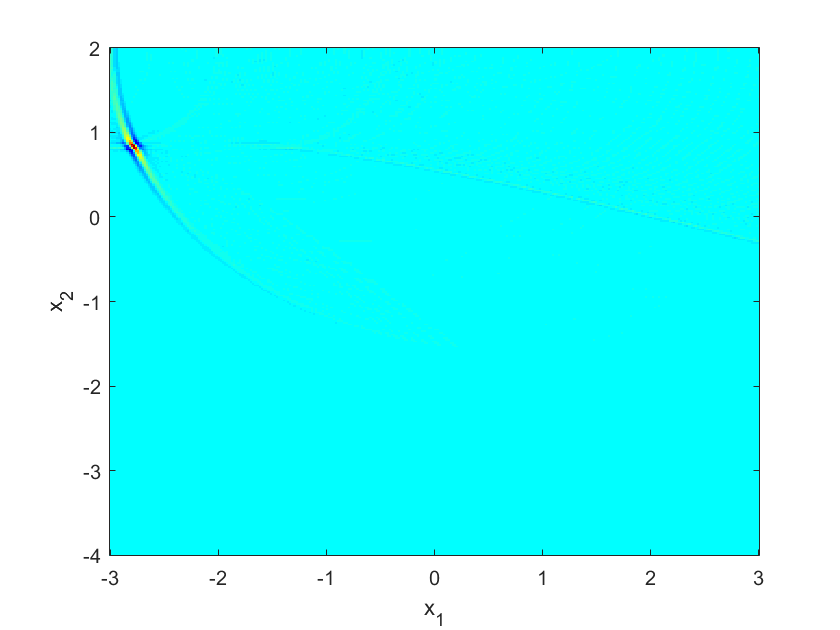}
\subcaption{$\Tc^*\Phi\Tc\delta$ (location 2).}\label{FC3:D}
\end{subfigure}
\begin{subfigure}{0.32\textwidth}
\includegraphics[width=0.9\linewidth, height=4cm]{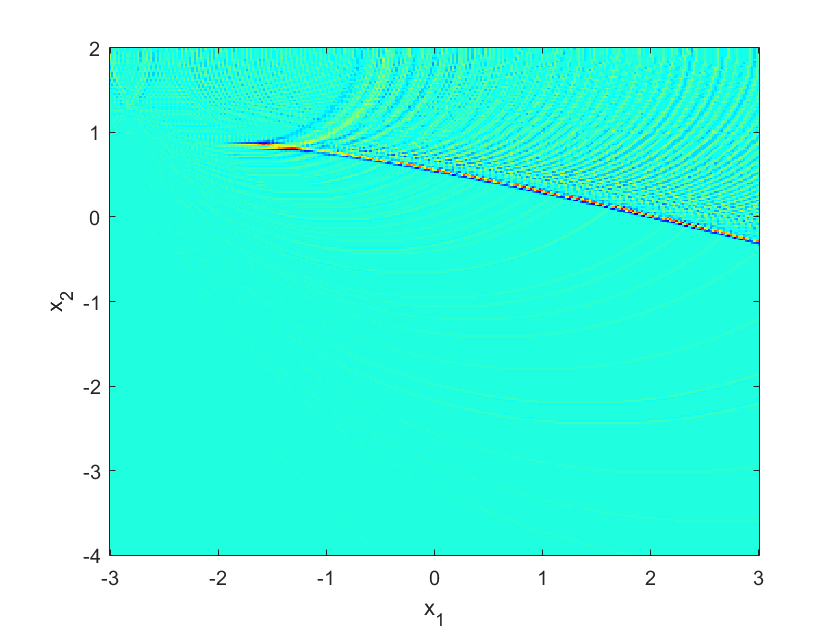}
\subcaption{$(\Tc_1^*\Phi\Tc_2+\Tc_2^*\Phi\Tc_1)\delta$.}\label{FC3:E}
\end{subfigure}
\begin{subfigure}{0.32\textwidth}
\includegraphics[width=0.9\linewidth, height=4cm]{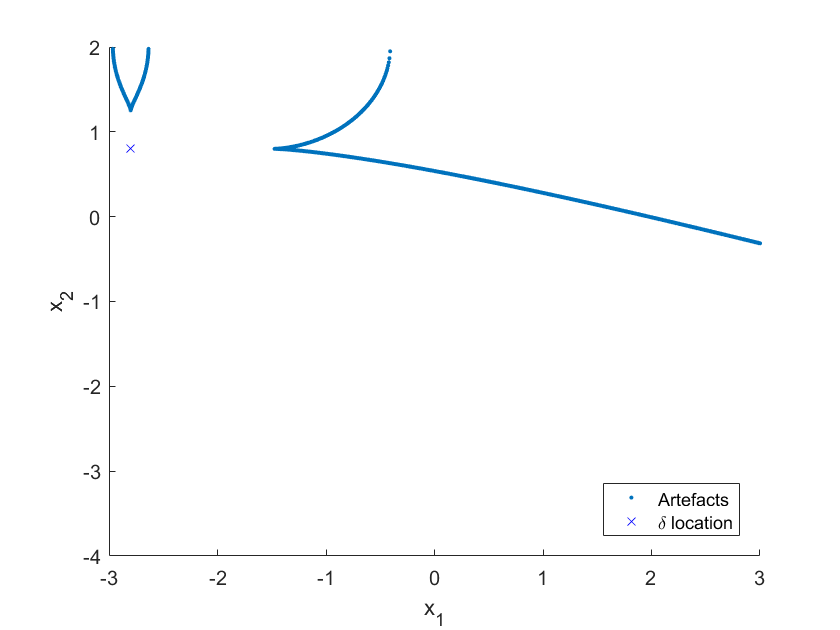}
\subcaption{$\Lambda_{ij}$ artefacts.}\label{FC3:F}
\end{subfigure}
\caption{$\Tc^*\Phi\Tc\delta$ and $(\Tc_1^*\Phi\Tc_2+\Tc_2^*\Phi\Tc_1)\delta$ images with the predicted artefacts induced by $\Lambda_{12}$ and $\Lambda_{21}$. We give examples for two $\delta$ function locations. Location 1 is (0,0.85) and location 2 is (-2.8,0.9).} \label{FC3}
\end{figure}

To show the artefacts induced by the $\Lij$ more clearly, we repeat
the analysis above using filtered backprojection, and a second derivative filter $\Phi=\frac{\mathrm{d}^2}{\mathrm{d}r^2}$. That
is we show images of
$\Tc^*\Phi\Tc\delta$. Note that $\Phi$ is applied in the variable $r$ (the torus
radius). The application of derivative filters is a common
idea in lambda tomography \cite{FFRS,FRS}, and is known to highlight
the image contours (singularities or edges) in the
reconstruction \cite[Theorem 3.5]{rigaud20183d}. See figure \ref{FC3}.
As the artefacts induced by $\Lij$ appeared to be largely outside the
scanning region ($[-2,2]\times [-3,1]$) in our previous simulations, we have
increased the scanning region size to $[-3,3]\times [-4,2]$, to show more the
effects of the $\Lij$ in the observed reconstruction. Here $\Phi$ suppresses the artefacts due to limited data, and the $\Lij$
artefacts appear as additional contours in the
reconstruction. The observed artefacts appear most clearly in figures \ref{FC3:B} and \ref{FC3:E}, and align exactly with our predictions in figures \ref{FC3:C} and \ref{FC3:F}.


\begin{remark}\label{rem:scanner size} With precise knowledge of the locations of the artefacts induced by
the $\Lij$ we can assist in the design of the proposed parallel line
scanner. That is we can choose $a$, $r_M$ and the scanning tunnel size
to minimize the presence of the nonlocal artefacts in the
reconstruction (i.e., those from $\Lij(f)$). Such advice would be of
benefit to our industrial partners in airport screening to remove the
concern for nonlocal artefacts in the image reconstruction of baggage.
Indeed the machine dimensions we have chosen seem to be suitable as
the artefacts appear largely outside the reconstruction space (see
figures \ref{FC1} and \ref{FC2}).
\end{remark}

\section{The transmission artefacts} \label{microsec2} The detector row
$D_C$ collects Compton (back) scattered data, which determines
$\mathcal{T}f(s,x_0)$ for a range of $s$ and $x_0$, where $f=n_e$ is the
electron charge density. The forward detector array $D_A$ collects
transmission (standard X-ray CT) data, which determine a set of straight
line integrals over the attenuation coefficient $f=\mu_E$, for some photon
energy $E$.  The data is limited to the set of lines which intersect $S$ (the source array) and $D_A$. This limited data can cause artefacts in the X-ray
reconstruction, and we will analyze these artefacts using the theory in
\cite{borg2018analyzing}.  Let
$L_{s,\theta}=\{\vx\in\mathbb{R}^2 : \vx\cdot \Theta=s\}$ be the line
parameterized by a rotation $\theta\in [0,\pi]$ and a directed distance
from the origin $s\in \mathbb{R}$. Here $\Theta=(\cos\theta,\sin\theta)$
and $L_{s,\theta}$ is the line containing $s\Theta$ and perpendicular to
$\Theta$.

In the scanning geometry of this article, the set $S$ of X-ray
transmitters is the segment between $(-4,3)$ and $(4,3)$ and the set
of X-ray detectors, $D_A$, is the segment between $(-4,-5)$ and
$(4,-5)$ as in figure \ref{fig1}.  For this reason, the cutoff in the
sinogram space is described by the set
\begin{equation}\label{def:H}
H=\{(s,\theta)\in \mathbb{R}\times [-\pi/2,\pi/2] : L_{s,\theta}\cap
S\neq \emptyset\ \ \text{and}\ \ L_{s,\theta}\cap D_A\neq \emptyset\}.
\end{equation} The characteristic function of $H$ appears as a 
skewed diamond shape in sinogram space.


To illustrate the added artifacts inherent in this incomplete data
problem, we simulate reconstructions of delta functions with
transmission CT data on $H$.  That is, we apply the
backprojection operator $R_L^*R_L$ to $\delta$ functions, where $R_Lf$ denotes $Rf$ for $(s,\theta)\in H$.  See figure \ref{FA1}.  \begin{figure}[!h]
\begin{subfigure}{0.32\textwidth}
\includegraphics[width=0.9\linewidth, height=4cm]{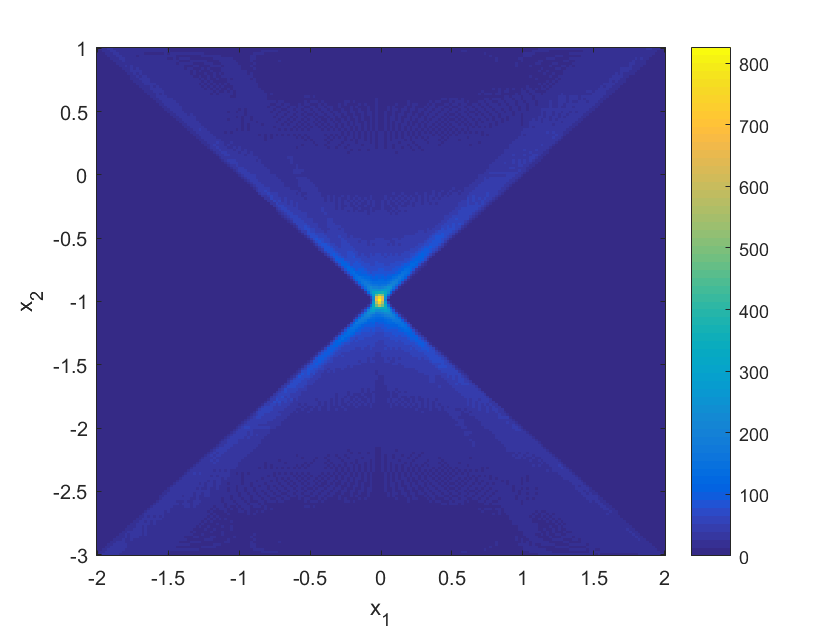}
\end{subfigure}
\begin{subfigure}{0.32\textwidth}
\includegraphics[width=0.9\linewidth, height=4cm]{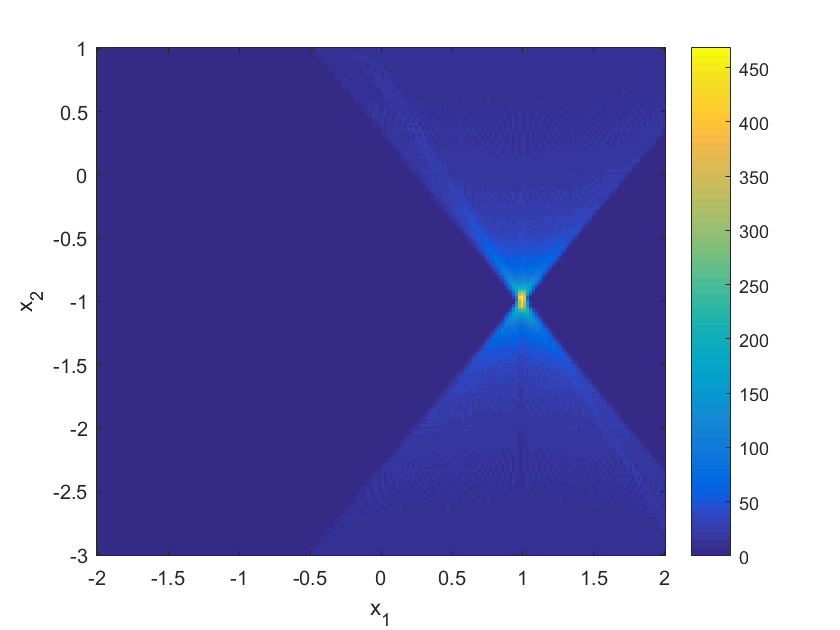} 
\end{subfigure}
\begin{subfigure}{0.32\textwidth}
\includegraphics[width=0.9\linewidth, height=4cm]{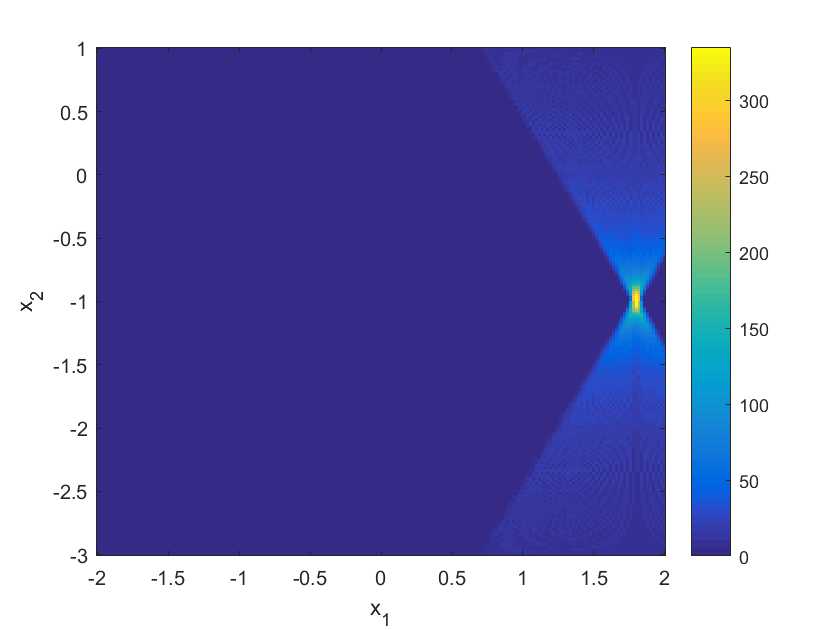}
\end{subfigure}
\caption{$R_L^TR_L\delta_{(t,-1)}$ images at varying $\delta$
  function translations along the line $x_2=-1$.} \label{FA1} \end{figure}
  By the theory
in \cite{borg2018analyzing}, artefacts caused by the incomplete data
occur on lines $L_{s,\theta}$ for $(s,\theta)$
in the boundary of $H$.  Each delta function in Figure \ref{FA1} is at
a point $(t,-1)$ for some $t$ with $0<t<2$, so the lines that meet the
support of the delta function, $(t,-1)$ that are in the boundary of
$H$ must also contain either $(4,3)$ or $(4,-5)$.  This is true
because $S$ and $D_A$ are mirror images about the line $y=-1$ and
 $t\in (0,2)$.  

Furthermore, by symmetry (the $\delta$ functions are on the
center line of the scanning region), these artifact lines will be
reflections of each other in the vertical line $x_1=t$.  This is
illustrated in our reconstructions in Figure \ref{FA1}.  The opening
angle of the cone in the delta reconstructions decreases (fewer
wavefront set directions are stably resolved) as we translate $\delta$
to the right on the line $x_1=-1$.

 \begin{example}\label{ex:joint} We now use these ideas to analyze the
visible wavefront directions for the joint problem.  Let
$\Sc=[-4,4]\times [-5,3]$ be the square between $S$ and $D_A$ and let
$\vO=(0,-1)$, the center of $\Sc$.  We consider wavefronts at points
$(x_1,x_2)\in [-2,2]\times [-3,1]$ which is a square centered at $\vO$
and the region in which our simulated reconstructions are done.

  By \eqref{def:H}, lines in the data set must intersect both $S$ and
$D_A$, so lines through $\vO$ in the data set are all lines through
$\vO$ that are more vertical than the diagonals of $\Sc$.  Because
visible wavefront directions are normal to lines in the X-ray CT data
set \cite{Q1993sing}, the wavefront directions which are resolved lie
in the horizontal open cone between normals to these diagonals.
Therefore, they are in the cone
 $$C_R=\{\pm(\cos\alpha,\sin\alpha) : -\pi/4<\alpha<\pi/4\},$$ which
 is shown in figure \ref{fig1}.
 
 An analysis of the singularities that are visible by the Compton data
was done in section \ref{sect:artifactsTc}.  For the point $\vO$, the
angle defined by \eqref{def:betam} gives $\beta_m=1.23$ and the cone
of visible directions given by \eqref{def:C_T} is the vertical cone
with angles from the vertical between $-\beta_m$ and $\beta_m$ since
the parameter $r_M=9$.  A calculation shows that $2(\pi/4
+\beta_m)>\pi$ and this implies $ C_R\cup C_T=S^1$ and we have a full
resolution of the image singularities at $\vO$. 
\begin{figure}[h!]
\includegraphics[width = 0.4\linewidth]{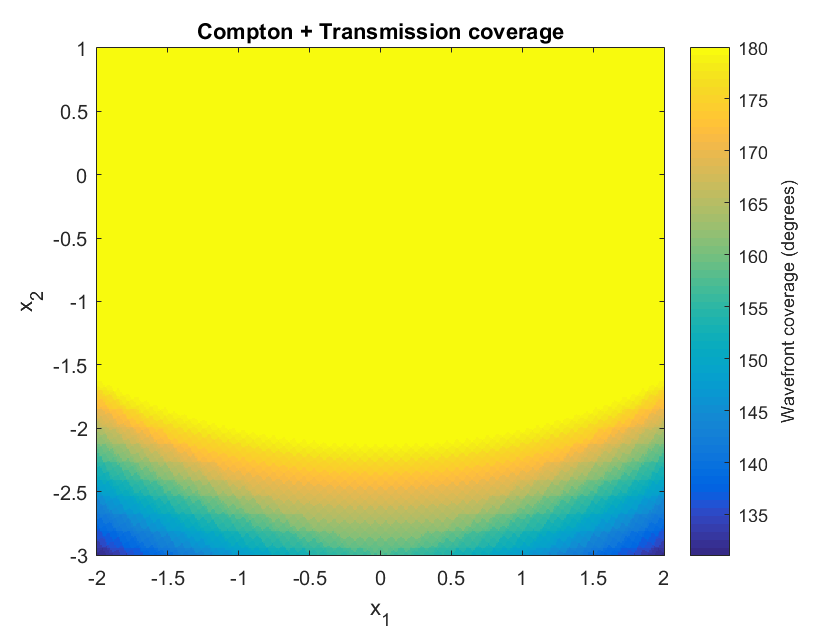} \caption{Picture of
the range of wavefront directions that are visible at points in $[-2,2]\times [-3,1]$ from the joint data. Angles are measured from $0^\circ$ = no coverage to
$360^\circ$ = full coverage. We let $\Gamma$ denote the solid light-colored (yellow) region
(roughly the top 3/4 of the figure) in
which all wavefront directions are recovered.} \label{fig:directions}\end{figure}

However, for points near the bottom $x_2=-3$, there are invisible
singularities that are not visible from either the Compton or X-ray
data.  For example, the vertical direction $(0,1)$ is not normal to
any circle in the Compton data set at any point $(0,x_2)$ for $x_2\in
[-2.5,-3]$.  Figure \ref{fig:directions} shows the points for which
all wavefront directions are visible at those points (yellow
color--roughly for points $(x_1,x_2)$ for $x_2>-2$) and near the
bottom of the reconstruction region, there are more missing
directions.  
\end{example}


\section{A joint reconstruction approach and results}
\label{results}
In this section we detail our joint reconstruction scheme and lambda tomography regularization technique, and show the effectiveness of our methods in combatting the artefacts observed and predicted by our microlocal theory. We first explain the physical relationship between $\mu_E$ and $n_e$, which will be needed later in the formulation of our regularized inverse problem.
\subsection{Relating $\mu_E$ and $n_e$}
The attenuation coefficient and electron density satisfy the formula \cite[page 36]{wadeson2011modelling}
\begin{equation}
\label{mu-n}
\mu_E(Z)=n_e(Z)\sigma_E(Z),
\end{equation}
where $\sigma_E$ denotes the electron cross section, at energy $E$.
Here $Z$ denotes the effective atomic number. In the proposed
application in airport baggage screening (among many other
applications such as medical CT) we are typically interested in the
materials with low effective $Z$. Hence we consider the materials with
$Z<20$ in this paper.  For large enough $E$ and $Z<20$, $\sigma_E(Z)$
is approximately constant as a function of $Z$.  Equivalently $\mu_E$
and $n_E$ are approximately proportional for low $Z$ and high $E$ by
equation \ref{mu-n}. See figure \ref{fig2}. We see a strong
correlation between $n_e$ and $\mu_E$ when $E=100$keV and $Z<20$, and
even more so when $E$ is increased to $E=1$MeV. The sample of
materials considered consists of 153 compounds (e.g. wax, soap, salt,
sugar, the elements) taken from the NIST database
\cite{hubbell1995tables}. In this case $\sigma_E\approx\nu$ for some
$\nu\in\mathbb{R}$ is approximately constant and we have
$\mu_E\approx\nu n_e$. 
\begin{figure}[!h]
\centering
\begin{subfigure}{0.32\textwidth}
\includegraphics[width=0.9\linewidth, height=4cm]{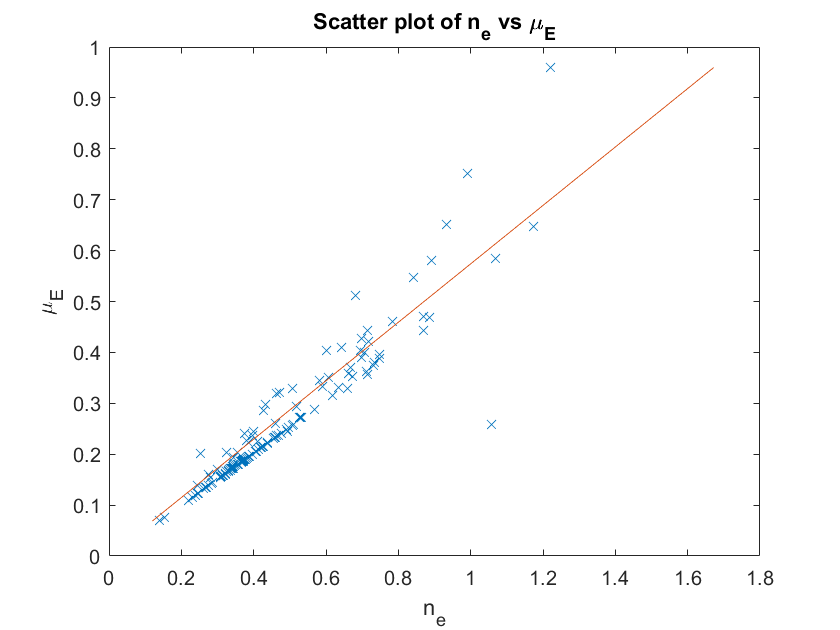} 
\end{subfigure}\hspace{5mm}
\begin{subfigure}{0.32\textwidth}
\includegraphics[width=0.9\linewidth, height=4cm]{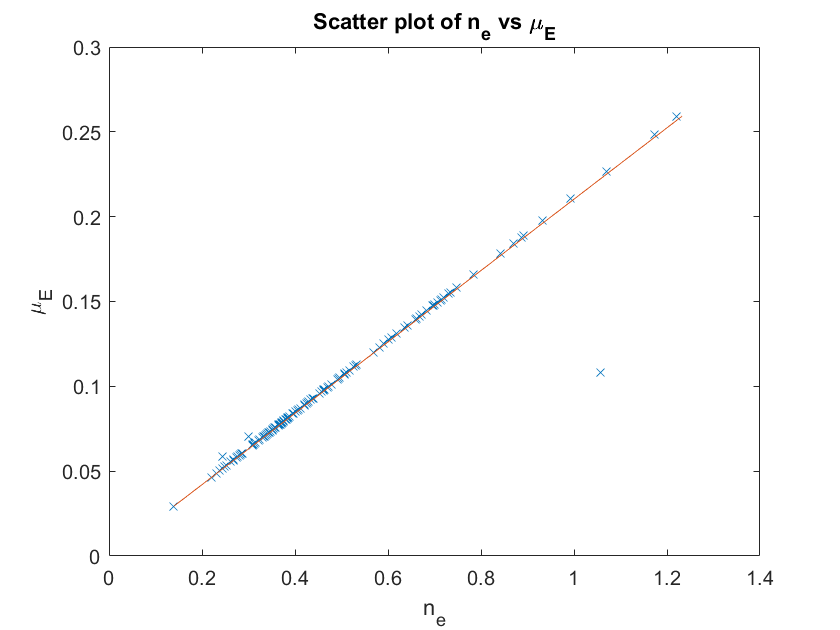} 
\end{subfigure}
\caption{Scatter plot of $n_e$ vs $\mu_E$ for $E=100$keV (left) and $E=1$MeV (right), for 153 compounds with effective $Z<20$ taken from the NIST \cite{hubbell1969photon} database. The correlation is R=0.93 (left) and R=0.98 (right).}
\label{fig2}
\end{figure}
For a given energy $E$, $\nu$ is the slope of
the straight line fit as in figure \ref{fig2}. Throughout the rest of
this paper, we set $\nu$ as the slope of the straight line in the left
hand of figure \ref{fig2} (i.e. $\nu\approx 0.57$), and present
reconstructions of $\mu_E$ for $E=100$keV. 

\subsection{Lambda regularization; the idea}
In sections \ref{microsec1} and \ref{microsec2} we discovered that the
$R_L\mu _E$ and $\Tc n_e$ data provide complementary information regarding
the detection and resolution of edges in an image. More specifically the
line integral data resolved singularities in an open cone $C_R$ with
central axis $x_1$ and the toric section integral data resolved
singularities in a cone $C_T$ with central axis $x_2$. So the overlapping
cones $C_R\cup C_T$ give a greater coverage of $S^1$ than when considered
separately. In figures \ref{FC1}, \ref{FC2} and \ref{FA1}, this theory was
later verified through reconstructions of a delta functions by (un)filtered
backprojection. 

For a further example, let us consider a more complicated phantom than a
delta function, one which is akin to densities considered later for testing
our joint reconstruction and lambda regularization method. In figure
\ref{Fcont} we have presented reconstructions of an image phantom $f$ (with
no noise) from $R_Lf$ (transmission data--middle figure) using FBP, and
from $\Tc f$ (Compton data--right figure) by an application of
$\Tc^*\frac{\mathrm{d}^2}{\mathrm{d}r^2}$ (a contour reconstruction). In
the reconstruction from Compton data, we see that the image singularities
are well resolved in the vertical direction ($x_2$), and conversely in the
horizontal direction ($x_1$) in the reconstruction from transmission data.
In the middle picture (reconstruction from $R_L$), the visible
singularities of the object are tangent to lines in the data set (normal
wavefront set) and the artifacts are along lines at the end of the data set
that are tangent to boundaries of the objects.  In the right-hand
reconstruction from Compton data, the visible boundaries are tangent to
circles in the data set and the streaks are along circles at the end of the
data set.  Note that the visible boundaries in each picture complement each
other and together, image the full objects.  This is all as predicted by
the theory of sections \ref{microsec1} and \ref{microsec2} (and is
consistent with the theory in \cite{borg2018analyzing} and \cite{FrQu2015})
and highlights the complementary nature of the Compton and transmission
data in their ability to detect and resolve singularities.


Given the complementary edge resolution capabilities of $R_L\mu_E$ and $\Tc n_e$, and given the approximate linear relationship between $\mu_E$ and $n_e$, we can devise a joint linear least squares reconstruction scheme with the aim to recover the image singularities stably in all directions in the $n_e$ and $\mu_E$ images simultaneously. 
\begin{figure}[!h]
\begin{subfigure}{0.32\textwidth}
\includegraphics[width=0.9\linewidth, height=4cm]{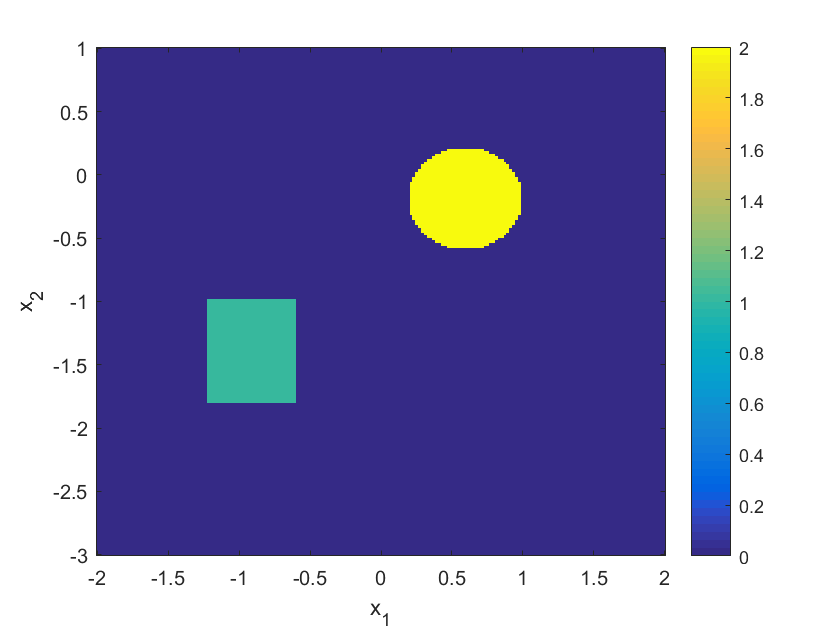}
\end{subfigure}
\begin{subfigure}{0.32\textwidth}
\includegraphics[width=0.9\linewidth, height=4cm]{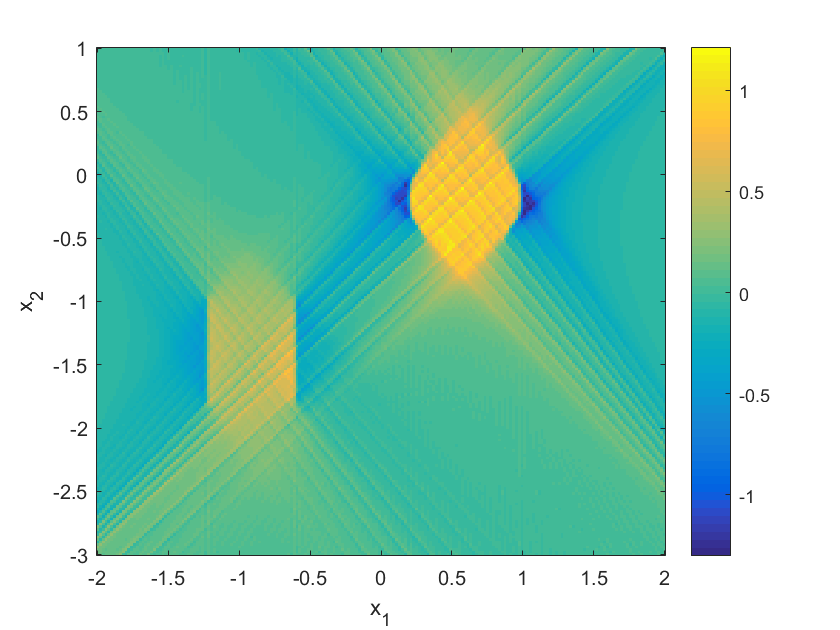} 
\end{subfigure}
\begin{subfigure}{0.32\textwidth}
\includegraphics[width=0.9\linewidth, height=4cm]{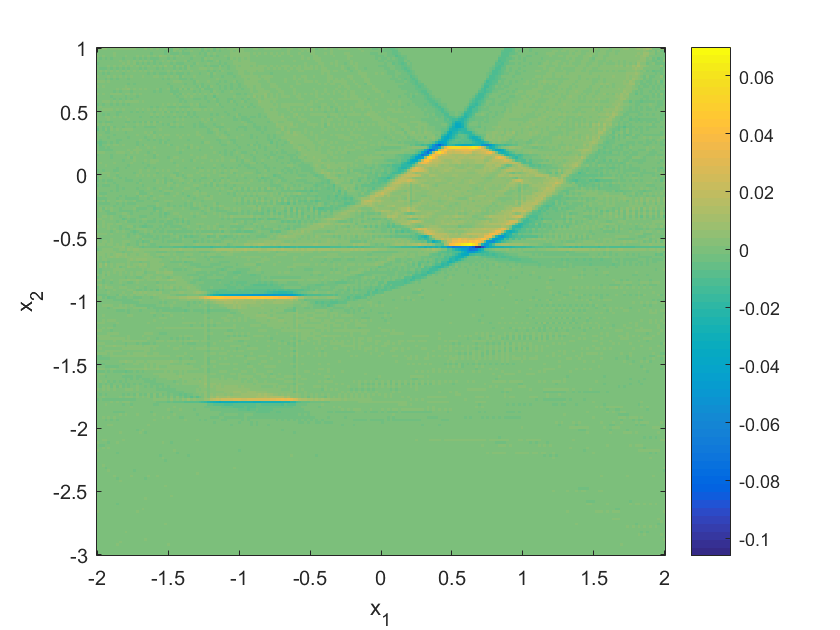}
\end{subfigure}
\caption{Image phantom $f$ (left), a reconstruction from $R_Lf$ using FBP (middle) and $\Tc^*\frac{\mathrm{d}^2}{\mathrm{d}r^2}\Tc f$ (right).}
\label{Fcont}
\end{figure}
To this end we employ ideas in lambda tomography and microlocal analysis. 

Let $f\in\mathcal{E}(\mathbb{R}^n)$ and let
$Rf(s,\theta)=Rf_{\theta}(s)=\int_{L_{s,\theta}}f\mathrm{d}l$ denote
the hyperplane Radon transform of $f$, where $L_{s,\theta}$ is as
defined in section \ref{microsec2}. The Radon projections
$Rf_{\theta}$ detect singularities in $f$ in the direction
$\Theta=(\cos\theta,\sin\theta)$ (i.e. the elements
$(\vx,\Theta)\in\text{WF}(f)$). Applying a derivative filter
$\frac{\mathrm{d}^m}{\mathrm{d}s^m}R_{\theta}$, for some $m\geq1$,
increases the strength of the singularities in the $\Theta$ direction
by order $m$ in Sobolev scale. 
\begin{figure}[!h]
\centering
\begin{subfigure}{0.35\textwidth}
\includegraphics[width=1\linewidth, height=4.8cm]{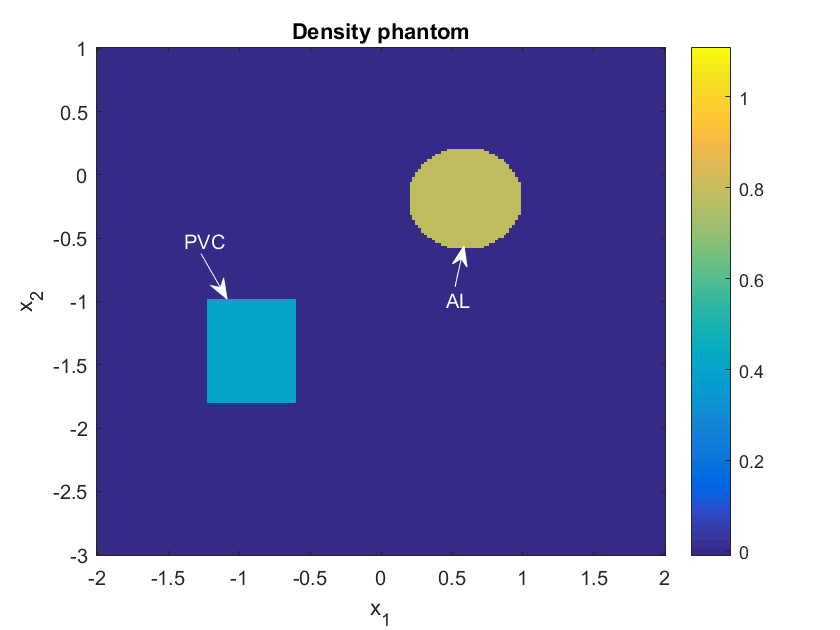} 
\end{subfigure}
\begin{subfigure}{0.35\textwidth}
\includegraphics[width=1\linewidth, height=4.8cm]{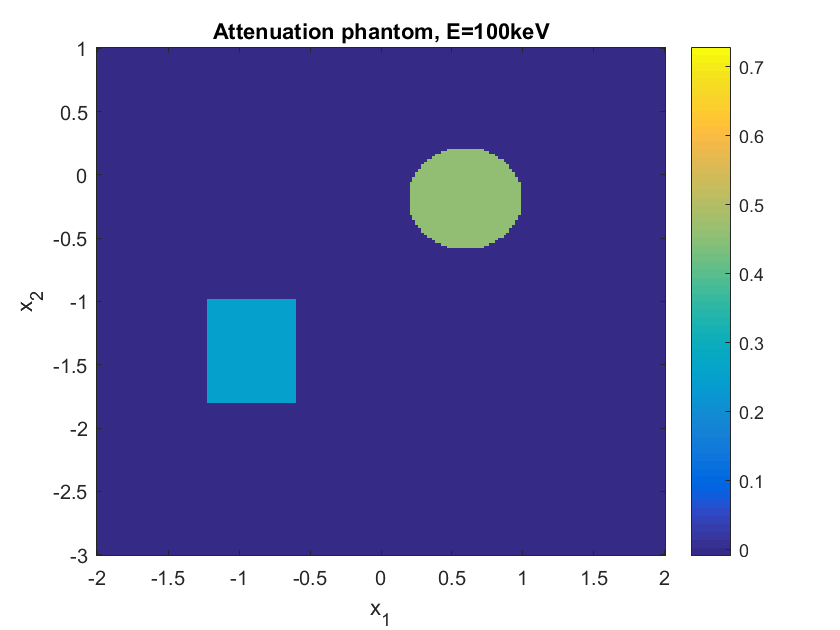}
\end{subfigure}
\begin{subfigure}{0.35\textwidth}
\includegraphics[width=1\linewidth, height=4.8cm]{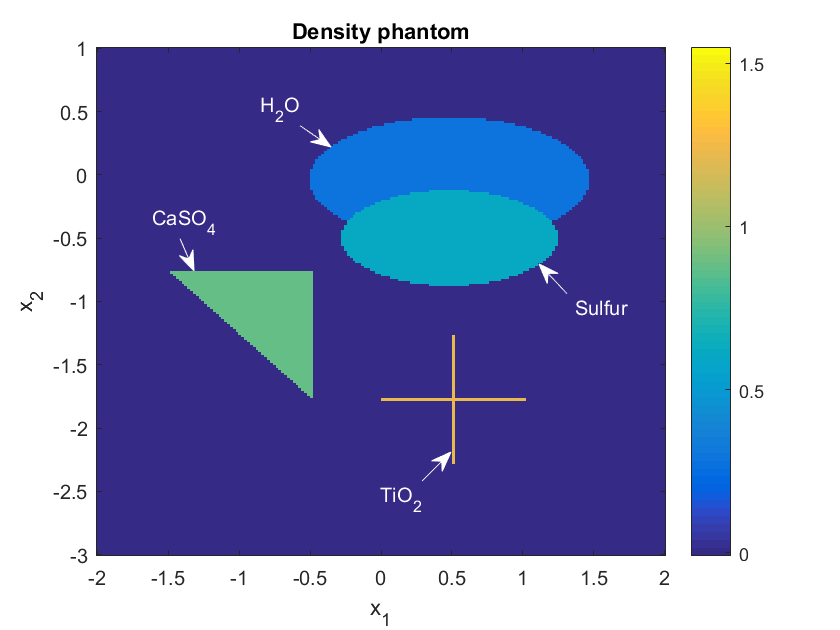} 
\end{subfigure}
\begin{subfigure}{0.35\textwidth}
\includegraphics[width=1\linewidth, height=4.8cm]{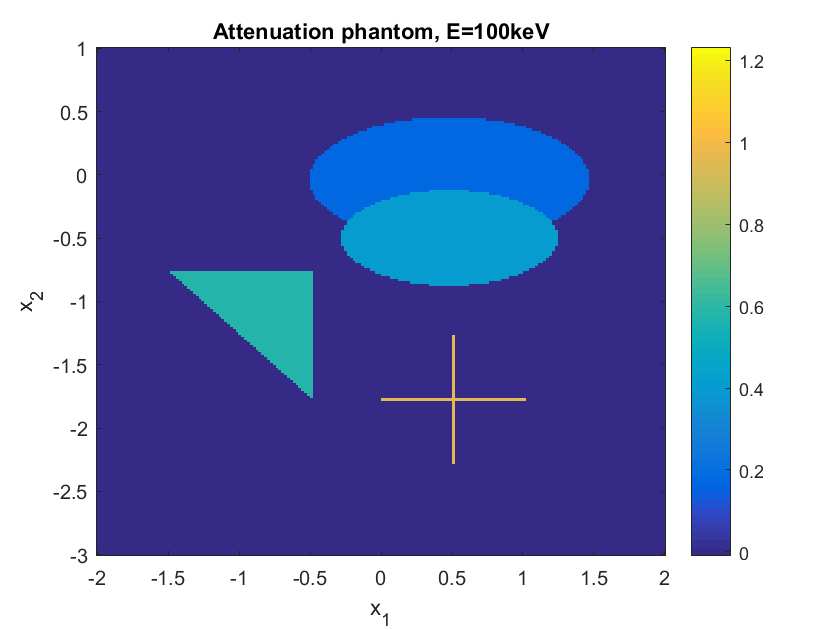}
\end{subfigure}
\caption{Top row -- Simple density (left) and attenuation (right) phantoms. Bottom row -- Complex density (left) and attenuation (right) phantoms. The associated materials are labelled in each case.}
\label{figPhan}
\end{figure}Given
$f,g\in\mathcal{E}(\mathbb{R}^n)$, we aim to enforce a similarity in
$\text{WF}(f)$ and $\text{WF}(g)$ through the addition of the penalty
term
$\|\frac{\mathrm{d}^m}{\mathrm{d}s^m}R(f-g)\|_{L^2(\mathbb{R}\times
S^{1})}$ to the least squares solution. Note that we are integrating
over all directions in $S^{1}$ to enforce a full directional
similarity in $\text{WF}(f)$ and $\text{WF}(g)$. Specifically in our
case $f=\mu_E$, $g=n_e$ and we aim to minimize the quadratic
functional
\begin{equation}
\label{JLAM}
\argmin_{\mu_E,n_e}\left\|\begin{pmatrix}
wR_L & 0\\
0 & \mathcal{T}\\
\alpha[\frac{\mathrm{d}^m}{\mathrm{d}s^m}R & -\nu\frac{\mathrm{d}^m}{\mathrm{d}s^m}R]
\end{pmatrix}\begin{pmatrix}
\mu_E\\
n_e\\
\end{pmatrix}-\begin{pmatrix}
wb_1\\
b_2\\
0
\end{pmatrix}\right\|^2_2,
\end{equation}
where $R_L$ denotes a discrete, limited data Radon operator, $R$ is
the discrete form of the full data Radon operator, $\mathcal{T}$ is
the discrete form of the toric section transform, $b_1$ is known
transmission data and $b_2$ is the Compton scattered data. Here
$\alpha$ is a regularization parameter which controls the level of similarity in the image wavefront sets. The lambda regularizers enforce the soft constraint that $\mu_E=\nu n_e$ (since $\frac{\mathrm{d}^m}{\mathrm{d}s^m}Rf=0\iff f=0$ for $f\in L^2_c(X)$), but with emphasis on the location, direction and magnitude of the image singularities in the comparison.
\begin{table}[!h]
\hspace{-1.05cm}
\begin{subtable}{.49\linewidth}\centering
{\begin{tabular}{| c | c | c | c | c | c | c | c |}
\hline
$\epsilon$ & TV  & JLAM   & JTV & LPLS  \\ \hline
$n_e$     & $.26$ & $.14$ & $.05$ & $.03$ \\ 
$\mu_E$  & $.40$ & $.15$ & $.05$ & $.03$  \\ \hline
\end{tabular}}
\end{subtable}
\begin{subtable}{.49\linewidth}\centering
{\begin{tabular}{| c | c | c | c | c | c | c | c |}
\hline
$F$-score & TV  & JLAM   & JTV & LPLS  \\ \hline
$\text{supp}(n_e)$     & $.81$ & $.99$ & $.99$ & $\sim 1$ \\ 
$\nabla n_e$  & $.76$ & $.87$ & $.82$ & $.87$  \\ 
  $\text{supp}(\mu_E)$   & $.78$ & $\sim 1$ & $\sim 1$ & $\sim 1$ \\ 
$\nabla\mu_E$  & $.49$ & $.87$ & $.82$ & $.86$  \\\hline
\end{tabular}}
\end{subtable}
\caption{Simple phantom $\epsilon$ and $F$-score comparison using TV, JLAM, JTV and LPLS.}
\label{T1}
\end{table}
\begin{table}[!h]
\hspace{-1.05cm}
\begin{subtable}{.49\linewidth}\centering
{\begin{tabular}{| c | c | c | c | c | c | c | c |}
\hline
$\epsilon_{\pm}$ & JLAM   & JTV & LPLS  \\ \hline
$n_e$     & $.15\pm.01$ & $.05\pm.005$ & $.06\pm.002$ \\ 
$\mu_E$  & $.16\pm.02$ & $.05\pm.005$ & $.06\pm.03$ \\ \hline
\end{tabular}}
\end{subtable}
\begin{subtable}{.49\linewidth}\centering
{\begin{tabular}{| c | c | c | c | c | c | c | c |}
\hline
$F_{\pm}$ & JLAM   & JTV & LPLS  \\ \hline
$\text{supp}(n_e)$     & $.99\pm.01$ & $.99\pm.004$ & $\sim 1\pm.005$ \\ 
$\nabla n_e$  & $.86\pm.01$ & $.84\pm.02$ & $.86\pm.04$ \\
$\text{supp}(\mu_E)$    & $.98\pm.04$ & $.99\pm.005$ & $\sim1\pm.005$ \\ 
$\nabla\mu_E$  & $.86\pm.01$ & $.85\pm.02$ & $.87\pm.04$ \\ \hline
\end{tabular}}
\end{subtable}
\caption{Randomized simple phantom $\epsilon_{\pm}$ and $F_{\pm}$ comparison over 100 runs using JLAM, JTV and LPLS.}
\label{T1new}
\end{table}
Further we expect the lambda regularizers to have a smoothing effect given the nature of $\frac{\mathrm{d}^m}{\mathrm{d}s^m}R$ as a differential operator (i.e. the inverse is a smoothing operation). Hence we expect $\alpha$ to also act as a smoothing parameter. The weighting $w=\|\mathcal{T}\|_2/\|R_L\|_2$ is included so as to give equal weighting to the transmission and scattering datasets in the inversion. We denote the joint reconstruction method using lambda tomography regularizers as ``JLAM". A common choice for $m$ in lambda tomography applications is $m=2$ \cite{denisyuk1994inversion,rigaud20183d} (hence the name ``lambda regularizers").  
\begin{figure*}
\setlength{\tabcolsep}{5pt}
\begin{tabular}{ c|cc }  
  &  Density $n_e$ & Attenuation $\mu_E$, $E=100$keV \\ \hline \\[-0.4cm] 
 \rotatebox{90}{\hspace{2.05cm} TV}  & 
   \includegraphics[ width=0.4\linewidth, height=0.4\linewidth, keepaspectratio]{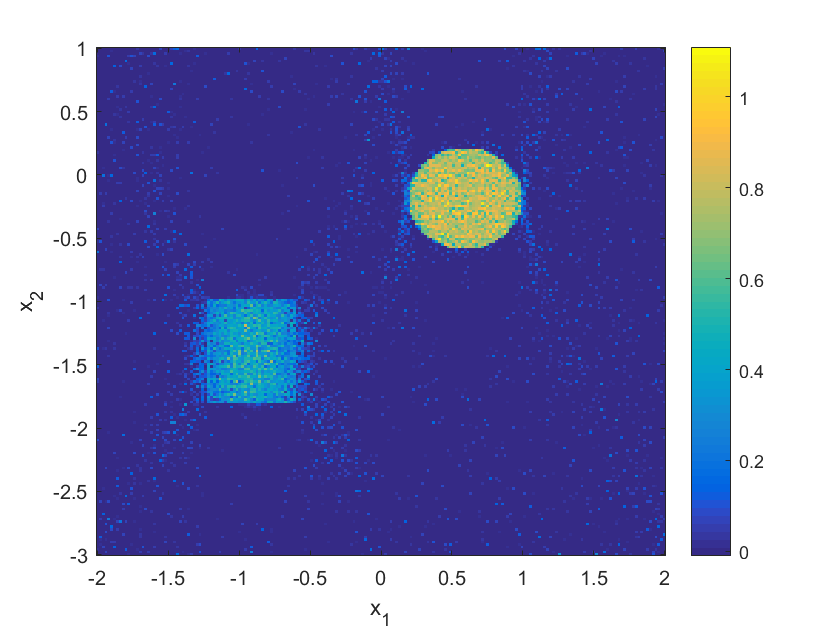} &
   \includegraphics[ width=0.4\linewidth, height=0.4\linewidth, keepaspectratio]{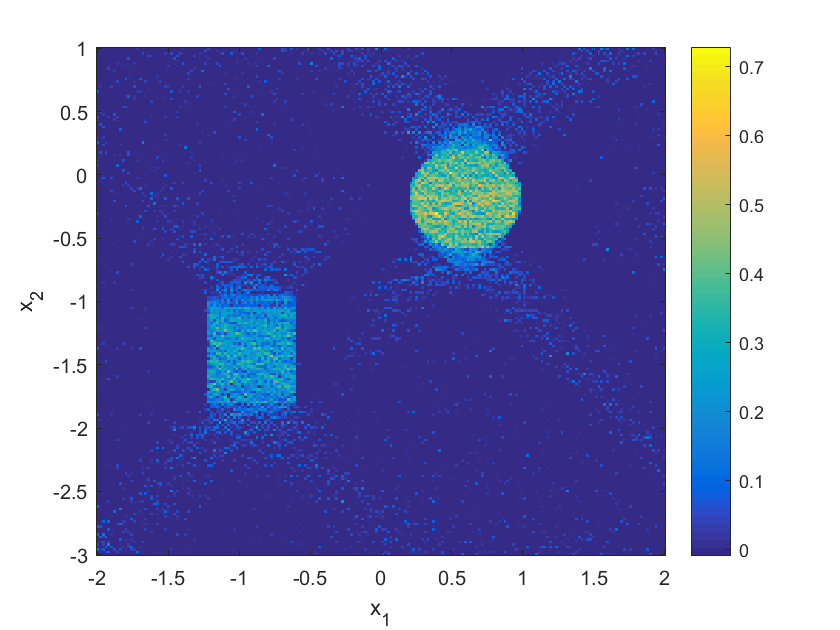} \\ 
 \rotatebox{90}{\hspace{1.8cm} JLAM} &
  \includegraphics[ width=0.4\linewidth, height=0.4\linewidth, keepaspectratio]{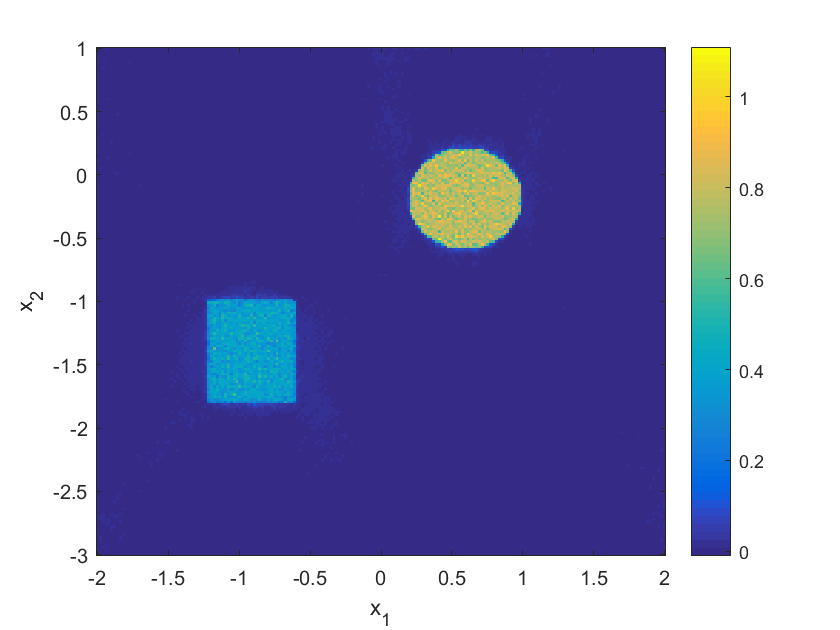} &
   \includegraphics[ width=0.4\linewidth, height=0.4\linewidth, keepaspectratio]{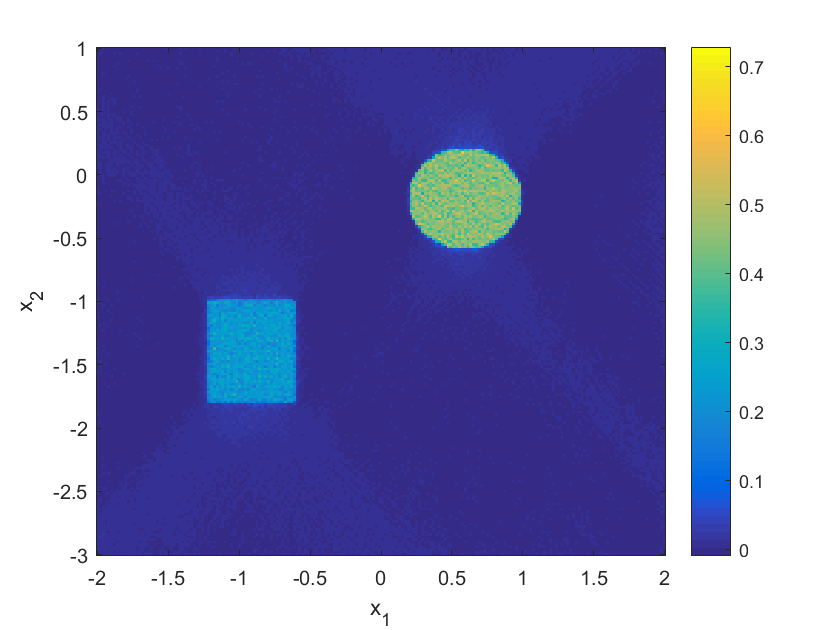} \\ 
 \rotatebox{90}{\hspace{2cm} JTV} &
   \includegraphics[ width=0.4\linewidth, height=0.4\linewidth, keepaspectratio]{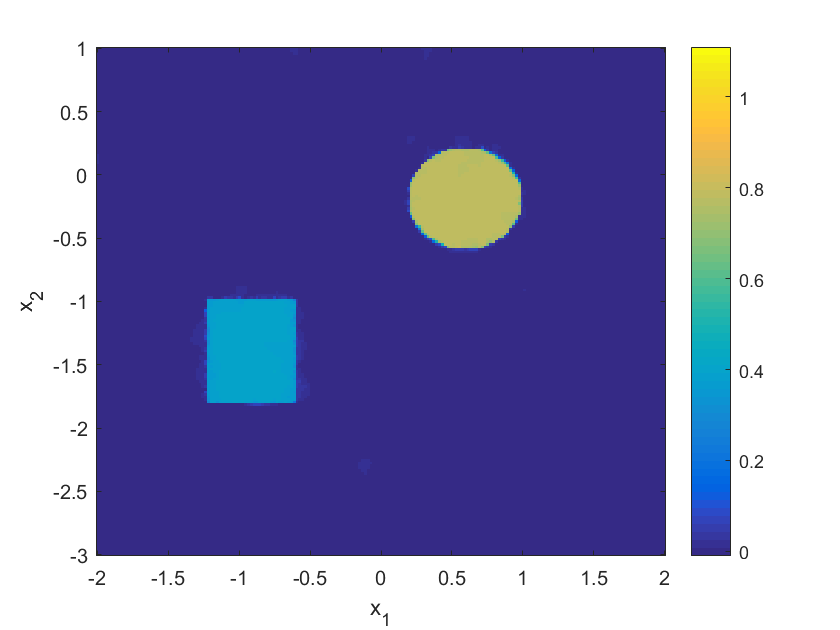} &
   \includegraphics[ width=0.4\linewidth, height=0.4\linewidth, keepaspectratio]{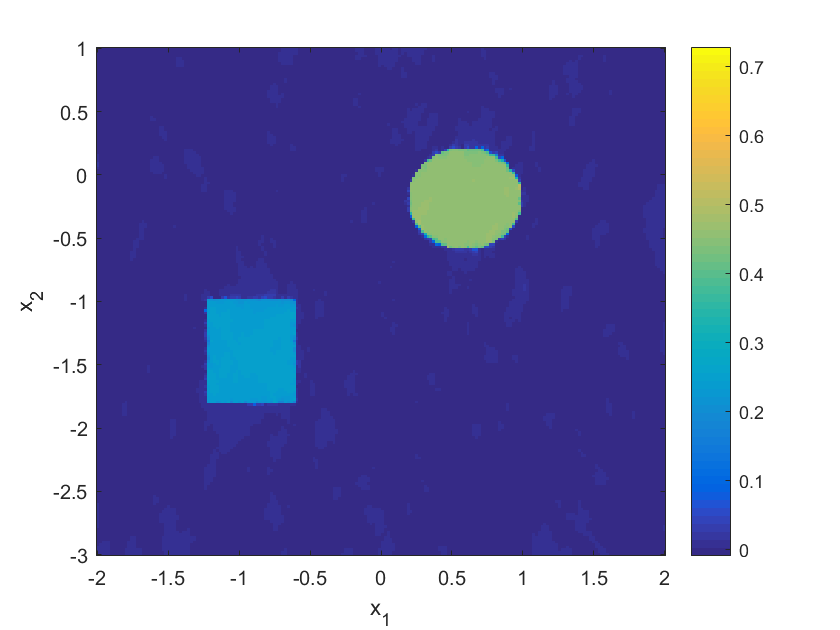} \\ 
 \rotatebox{90}{\hspace{1.8cm} LPLS} &
   \includegraphics[ width=0.4\linewidth, height=0.4\linewidth, keepaspectratio]{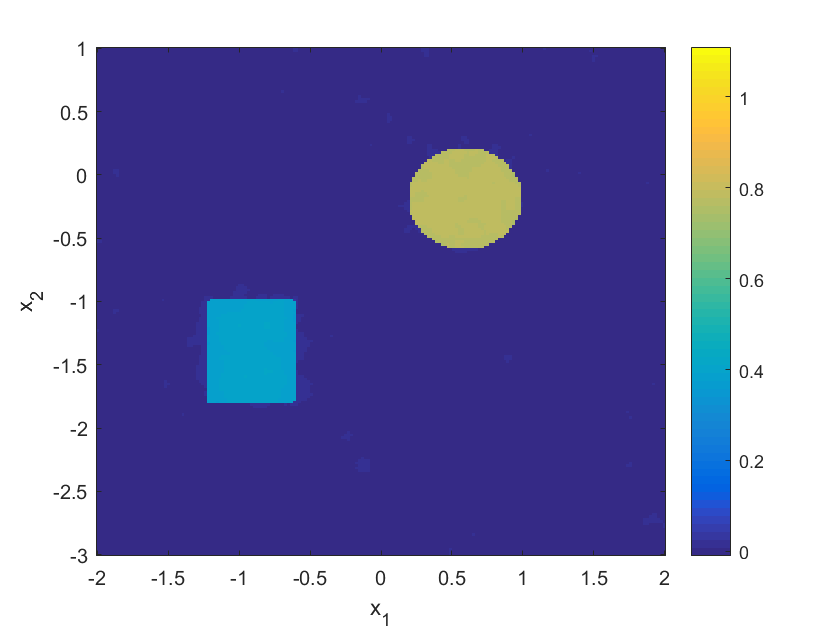} &
   \includegraphics[ width=0.4\linewidth, height=0.4\linewidth, keepaspectratio]{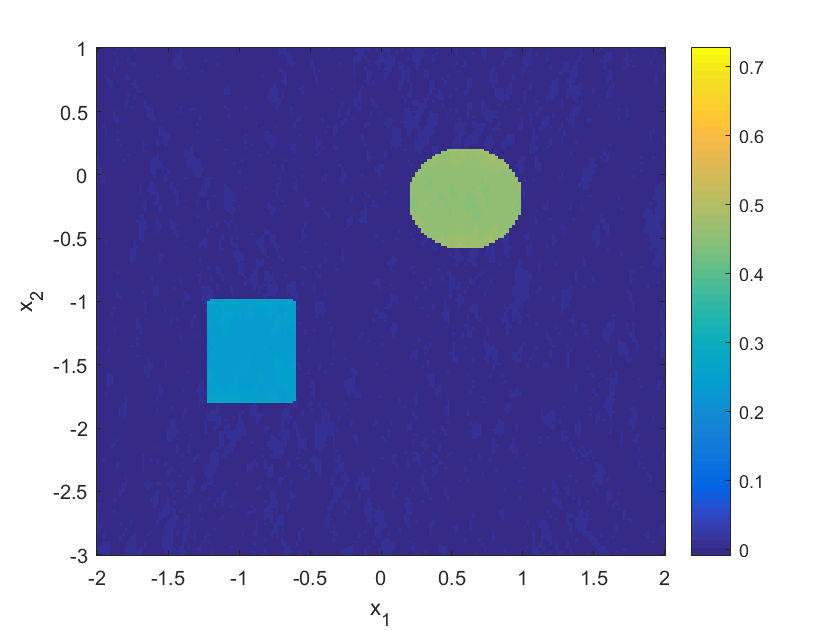} \\ 
\end{tabular}
\caption{Simple phantom reconstructions, noise level $\eta=0.1$. Comparison of methods TV, JLAM, JTV and LPLS.}
\label{F4}
\end{figure*}
With complete X-ray data, the application of a Lambda term yields this
$R^* \frac{d^2}{ds^2} R f = {-4\pi}\sqrt{-\Delta} f$ \cite[Example
9]{krishnan2014microlocal}, so the singularities of $f$ are preserved
and emphasized by order $1$ in Sobolev scale, so they will dominate
the Lambda reconstruction.  Hence choosing $m=2$ is sufficient for a
full recovery of the image singularities.  Since the singularities are
dominant in the lambda term, they are matched accurately in
\eqref{JLAM}. Indeed we have already seen the effectiveness of such a
filtering approach in recovering the image contours earlier in the
right hand of figure \ref{Fcont}.  We find that setting $m=2$ here
works well as a regularizer on synthetic image phantoms and simulated
data with added pseudo random noise, as we shall now demonstrate. We
note that the derivative filters for $m\neq 2$ are also worth
exploration but we leave such analysis for future work.



\subsection{Proposed testing and comparison to the literature}
To test our reconstruction method, we first consider two
test phantoms, one simple and one complex (as in
\cite{webber2020microlocal}). See figure \ref{F4}. The phantoms considered are supported
on $\Gamma$, the region in figure \ref{fig:directions} in which there
is full wavefront coverage from joint X-ray and Compton scattered
data. The simple density phantom consists of a Polyvinyl Chloride (PVC) cuboid and an Aluminium sphere with an approximate density ratio of 1:2 (PVC:Al). The complex density phantom consists of a water ellipsoid, a Sulfur ellipsoid, a Calcium sulfate ($\text{CaSO}_4$)
right-angled-triangle and a thin film of Titanium dioxide ($\text{TiO}_2$) in the shape of a cross. The density ratio of the materials which compose the complex phantom is approximately 1:2:3:4 ($\text{H}_2\text{O}$:S:$\text{CaSO}_4$:$\text{TiO}_2$). The density values used are those of figure \ref{fig2} taken from the NIST database \cite{hubbell1969photon}, and the background densities ($\approx 0$) were set to the density of dry air (near sea level). The corresponding attenuation coefficient
phantoms are simulated similarly. The materials considered are widely used in practice. For example $\text{CaSO}_4$ is used in the production of plaster of Paris and stucco (a common construction material) \cite{wirsching2000calcium}, and $\text{TiO}_2$ is used in the making of decorative thin films (e.g. topaz) and in pigmentation \cite[page 15]{winkler2014titanium}.

To simulate data we set
\begin{equation}
\label{data} b=\begin{pmatrix} b_1\\
b_2\\
\end{pmatrix}=\begin{pmatrix}
R_L\mu_E\\
\mathcal{T}n_e,\\
\end{pmatrix}
\end{equation}
and add a Gaussian noise
\begin{equation}
\label{noise}
b_{\eta}=b+\eta\|b\|_2\frac{v_{G}}{\sqrt{l}},
\end{equation}
for some noise level $\eta$, where $l$ is the length of $b$ and $v_{G}$ is a vector of length $l$ of draws from $\mathcal{N}(0,1)$. For comparison we present separate reconstructions of $\mu_E$ and $n_e$ using Total Variation (TV regularizers). That is we will find
\begin{equation}
\label{Sep1}
\argmin_{\mu_E}\left\|
R_L \mu_E-b_1\right\|^2_2+\alpha\text{TV}(\mu_E)
\end{equation}
to reconstruct $\mu_E$ and 
\begin{equation}
\label{Sep2}
\argmin_{n_e}\left\|
\mathcal{T} n_e-b_2\right\|^2_2+\alpha\text{TV}(n_e)
\end{equation}
for $n_e$, where $\text{TV}(f)=\|\nabla f\|_1$ and $\alpha>0$ is a regularization parameter. We will denote this method as ``TV". In addition we present reconstructions using the state-of-the-art joint reconstruction and regularization techniques from the literature, namely the Joint Total Variation (JTV) methods of \cite{JR2} and the Linear Parallel Level Sets (LPLS) methods of \cite{JR1}. To implement JTV we minimize
\begin{equation}
\label{JTV}
\argmin_{\mu_E,n_e}\left\|\begin{pmatrix}
wR_L & 0\\
0 & \mathcal{T}\\
\end{pmatrix}\begin{pmatrix}
\mu_E\\
n_e\\
\end{pmatrix}-\begin{pmatrix}
wb_1\\
b_2\\
\end{pmatrix}\right\|^2_2+\alpha\text{JTV}_{\beta}(\mu_E,n_e),
\end{equation}
where $w=\|\mathcal{T}\|_2/\|R_L\|_2$ as before, and 
\begin{equation}
\text{JTV}_{\beta}(\mu_E,n_e)=\int_{[-2,2]\times [-3,1]}\paren{\|\nabla\mu_E(\vx)\|_2^2+\|\nabla n_e(\vx)\|_2^2+\beta^2}^{\frac{1}{2}}\mathrm{d}\vx,
\end{equation}
where $\beta>0$ is an additional hyperparameter included so that the gradient of $\text{JTV}_{\beta}$ is defined at zero. This allows one to apply techniques in smooth optimization to solve \eqref{JTV}.
\begin{figure*}
\setlength{\tabcolsep}{5pt}
\begin{tabular}{ c|cc }  
  &  Density $n_e$ & Attenuation $\mu_E$, $E=100$keV \\ \hline \\[-0.4cm] 
 \rotatebox{90}{\hspace{2.05cm} TV}  & 
   \includegraphics[ width=0.4\linewidth, height=0.4\linewidth, keepaspectratio]{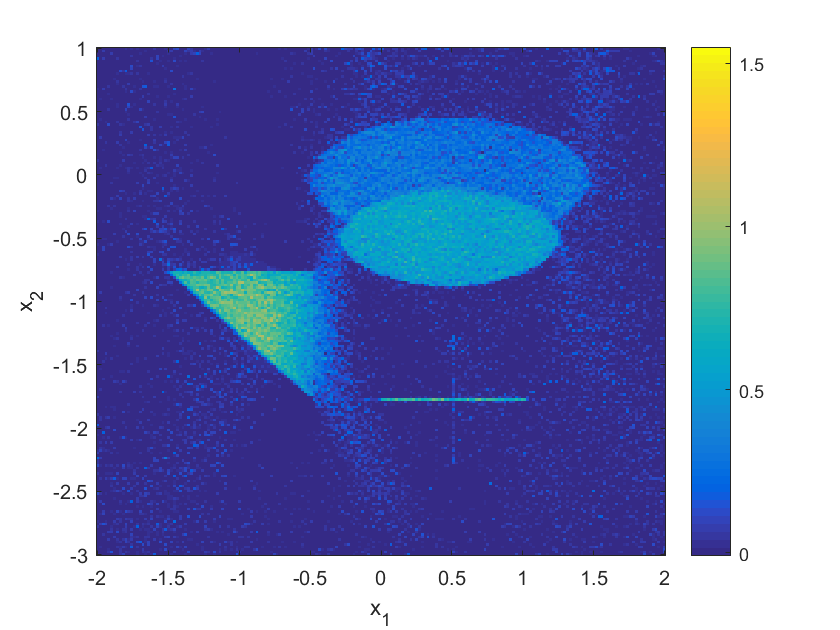} &
   \includegraphics[ width=0.4\linewidth, height=0.4\linewidth, keepaspectratio]{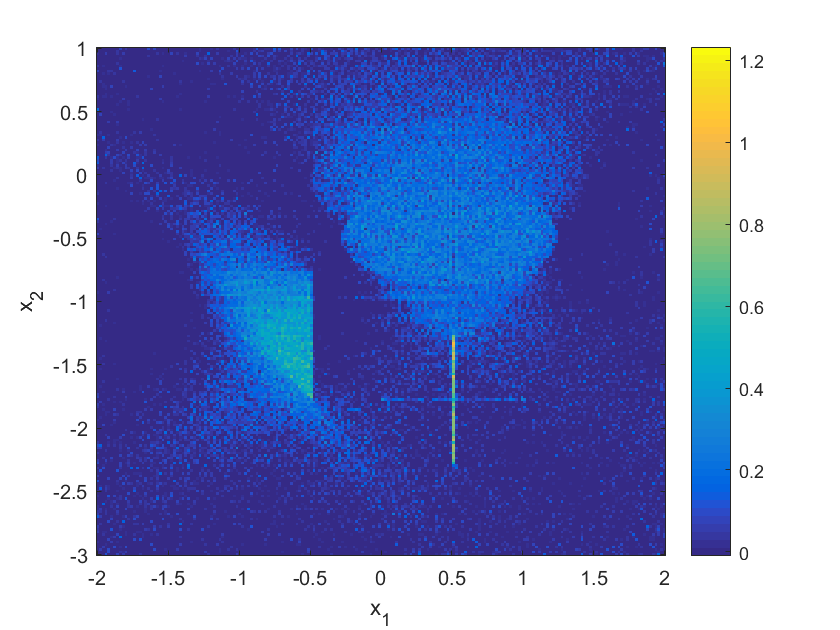} \\ 
 \rotatebox{90}{\hspace{1.8cm} JLAM} &
  \includegraphics[ width=0.4\linewidth, height=0.4\linewidth, keepaspectratio]{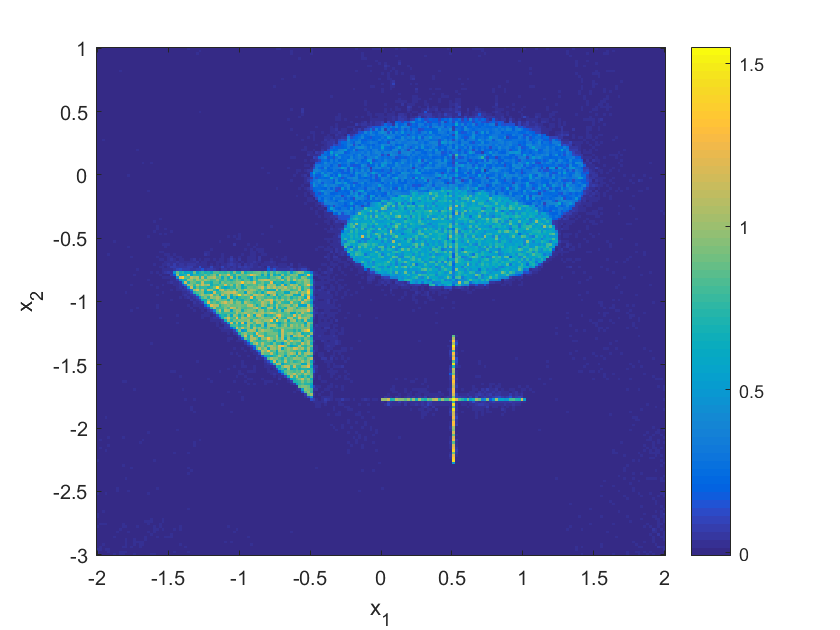} &
   \includegraphics[ width=0.4\linewidth, height=0.4\linewidth, keepaspectratio]{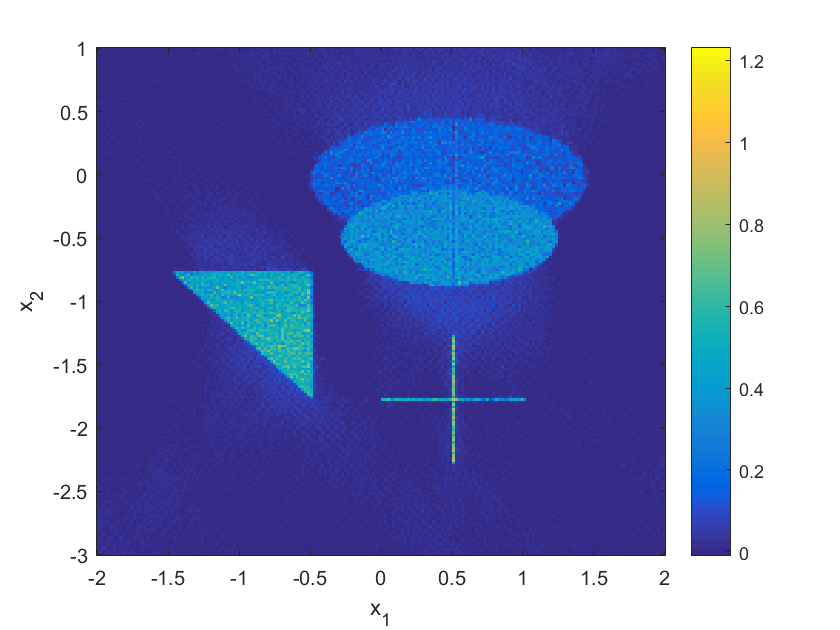} \\ 
 \rotatebox{90}{\hspace{2cm} JTV} &
   \includegraphics[ width=0.4\linewidth, height=0.4\linewidth, keepaspectratio]{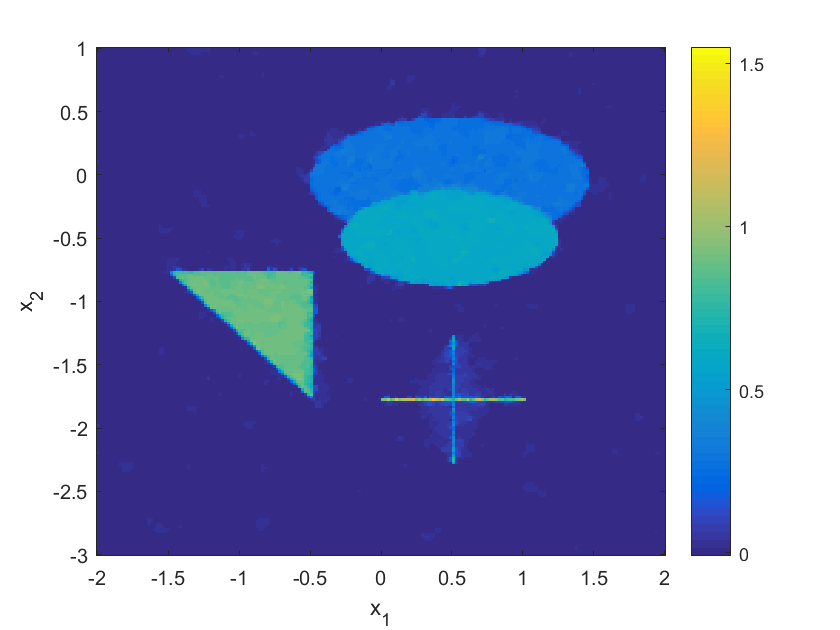} &
   \includegraphics[ width=0.4\linewidth, height=0.4\linewidth, keepaspectratio]{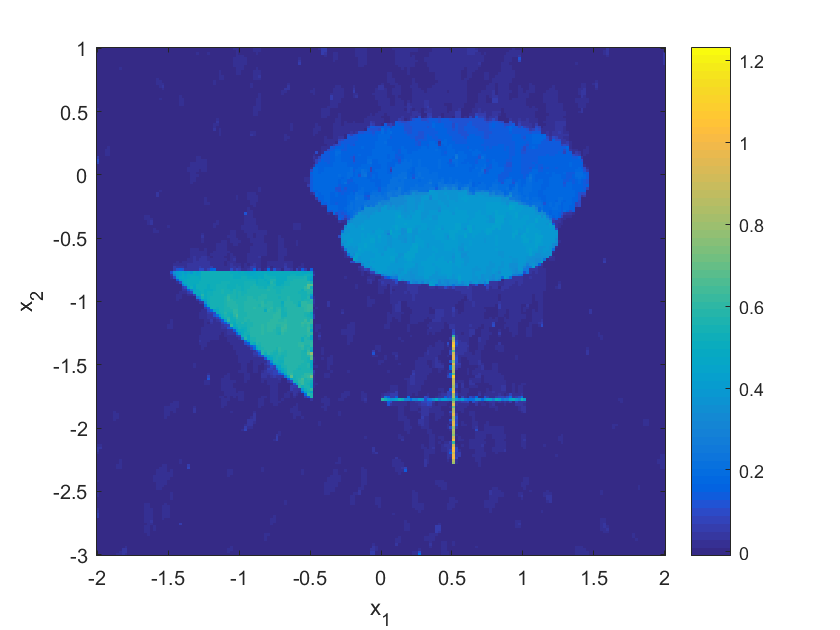} \\ 
 \rotatebox{90}{\hspace{1.8cm} LPLS} &
   \includegraphics[ width=0.4\linewidth, height=0.4\linewidth, keepaspectratio]{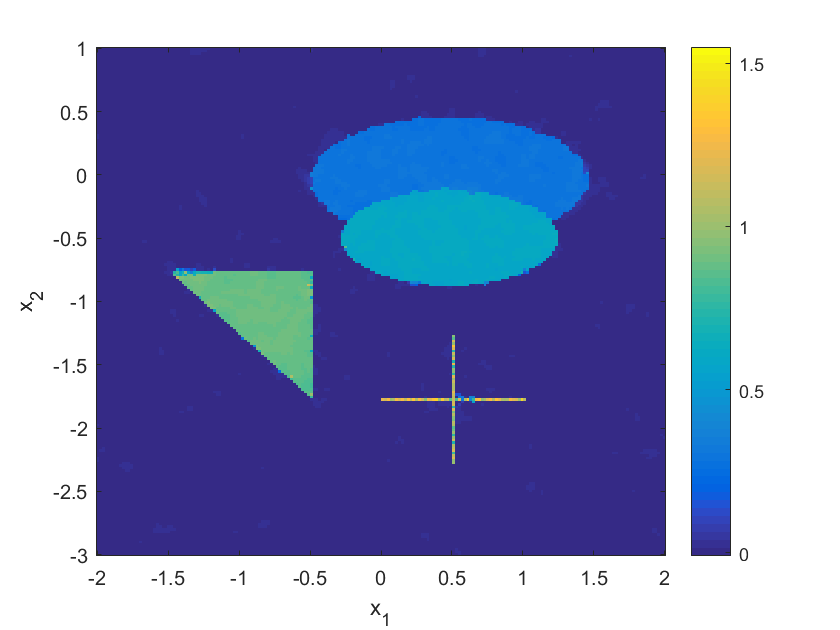} &
   \includegraphics[ width=0.4\linewidth, height=0.4\linewidth, keepaspectratio]{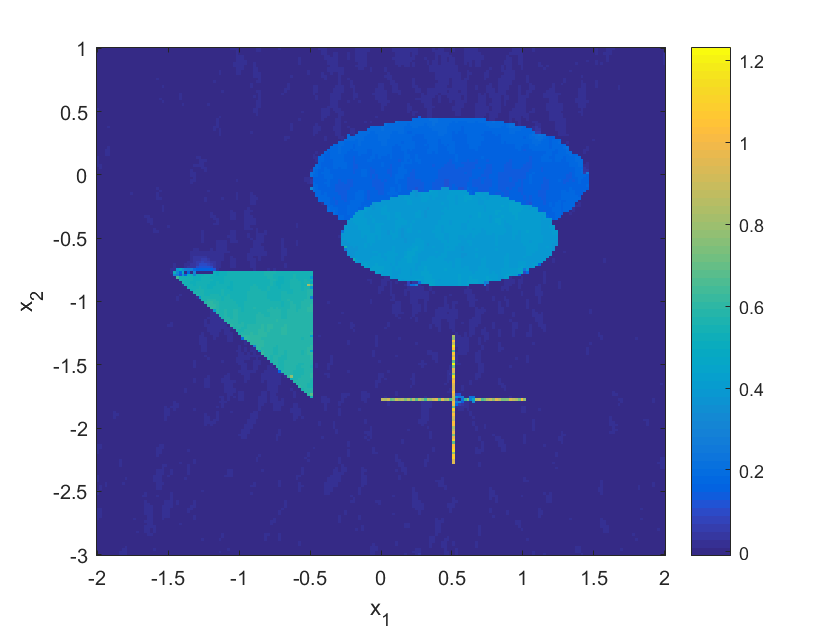} \\ 
\end{tabular}
\caption{Complex phantom reconstructions, noise level $\eta=0.1$. Comparison of methods TV, JLAM, JTV and LPLS.}
\label{CF1}
\end{figure*}

To implement LPLS we minimize
\begin{equation}
\label{LPLS}
\argmin_{\mu_E,n_e}\left\|\begin{pmatrix}
wR_L & 0\\
0 & \mathcal{T}\\
\end{pmatrix}\begin{pmatrix}
\mu_E\\
n_e\\
\end{pmatrix}-\begin{pmatrix}
wb_1\\
b_2\\
\end{pmatrix}\right\|^2_2+\alpha\text{LPLS}_{\beta}(\mu_E,n_e),
\end{equation}
where
\begin{equation}
\text{LPLS}_{\beta}(\mu_E,n_e)=\int_{[-2,2]\times [-3,1]}\|\nabla\mu_E(\vx)\|_{\beta}\|\nabla n_e(\vx)\|_{\beta}-|\nabla\mu_E(\vx)\cdot\nabla n_e(\vx)|_{\beta^2}\mathrm{d}\vx,
\end{equation}
where $|\vx|_{\beta}=\sqrt{|\vx|^2+\beta^2}$ and $\|\vx\|_{\beta}=\sqrt{\|\vx\|^2_2+\beta^2}$ for $\beta>0$. The JTV and LPLS penalties seek to impose soft constraints on the equality of the image wavefront sets of $\mu_E$ and $n_e$. For example setting $\beta=0$ in the calculation of $\text{LPLS}_{\beta}$ yields
\begin{equation}
\label{DP}
\begin{split}
\text{LPLS}_{\beta}(\mu_E,n_e)&=\|\nabla\mu_E\|_2\|\nabla n_e\|_2-|\nabla\mu_E\cdot\nabla n_e|\\
&=\|\nabla\mu_E\|_2\|\nabla n_e\|_2(1-|\cos\theta|),
\end{split}
\end{equation}
where $\theta$ is the angle between $\nabla n_e$ and $\nabla \mu_E$. Hence \eqref{DP} is minimized for the gradients which are parallel (i.e. when $\theta=0,\pi$), and thus using $\text{LPLS}_{\beta}$ as regularization serves to enforce equality in the image gradient direction and location (i.e. when $\text{LPLS}_{\beta}$ is small, the gradient directions are approximately equal). 

We wish to stress that the comparison with JTV and LPLS is included
purely to illustrate the potential advantages (and disadvantages) of
the lambda regularizers when compared to the state-of-the-art
regularization techniques. Namely is the improvement in image quality
due to joint data, lambda regularizers or are they both beneficial?
We are not claiming a state-of-the-art performance using JLAM, but our
results show JLAM has good performance, and it is numerically easier
to implement, requiring only least squares solvers. There are two
hyperparameters ($\alpha$ and $\beta$) to be tuned in order to
implement JTV and LPLS, which is more numerically intensive (e.g.
using cross validation) in contrast to JLAM with only one
hyperparameter ($\alpha$). Moreover the LPLS objective is non-convex
\cite[appendix A]{JR1}, and hence there are potential local minima to
contend with, which is not an issue with JLAM, being a simple
quadratic objective.

\begin{table}[!h]
\hspace{-1.05cm}
\begin{subtable}{.49\linewidth}\centering
{
\begin{tabular}{| c | c | c | c | c | c | c | c |}
\hline
$\epsilon$ & TV  & JLAM   & JTV & LPLS  \\ \hline
$n_e$     & $.36$ & $.24$ & $.16$ & $.09$ \\ 
$\mu_E$  & $.63$ & $.30$ & $.19$ & $.13$  \\ \hline
\end{tabular}
}
\end{subtable}
\begin{subtable}{.49\linewidth}\centering
{\begin{tabular}{| c | c | c | c | c | c | c | c |}
\hline
$F$-score & TV  & JLAM   & JTV & LPLS  \\ \hline
$\text{supp}(n_e)$     & $.78$ & $.98$ & $.97$ & $.99$ \\ 
$\nabla n_e$  & $.73$ & $.83$ & $.84$ & $.84$  \\ 
  $\text{supp}(\mu_E)$   & $.65$ & $.94$ & $.98$ & $.99$ \\ 
$\nabla\mu_E$  & $.39$ & $.83$ & $.84$ & $.85$  \\\hline
\end{tabular}}
\end{subtable}
\caption{Complex phantom $\epsilon$ and $F$-score comparison using TV, JLAM, JTV and LPLS.}
\label{T2}
\end{table}
\begin{table}[!h]
\hspace{-1.05cm}
\begin{subtable}{.49\linewidth}\centering
{\begin{tabular}{| c | c | c | c | c | c | c | c |}
\hline
$\epsilon_{\pm}$  & JLAM   & JTV & LPLS  \\ \hline
$n_e$     & $.25\pm.02$ & $.13\pm.03$ & $.13\pm.03$ \\ 
$\mu_E$  & $.31\pm.03$ & $.16\pm.02$ & $.17\pm.04$ \\ \hline
\end{tabular}}
\end{subtable}
\begin{subtable}{.49\linewidth}\centering
{\begin{tabular}{| c | c | c | c | c | c | c | c |}
\hline
$F_{\pm}$ & JLAM   & JTV & LPLS  \\ \hline
$\text{supp}(n_e)$     & $.98\pm.01$ & $.98\pm.005$ & $.99\pm.004$ \\ 
$\nabla n_e$  & $.76\pm.06$ & $.78\pm.05$ & $.75\pm.05$ \\
$\text{supp}(\mu_E)$    & $.90\pm.06$ & $.96\pm.03$ & $.98\pm.007$ \\ 
$\nabla\mu_E$  & $.73\pm.07$ & $.77\pm.05$ & $.74\pm.05$ \\ \hline
\end{tabular}}
\end{subtable}
\caption{Randomized complex phantom $\epsilon_{\pm}$ and $F_{\pm}$ comparison over 100 runs using JLAM, JTV and LPLS.}
\label{T2new}
\end{table}

To minimize \eqref{JLAM}, we store the discrete forms of $R_L$, $R$
and $\mathcal{T}$ as sparse matrices and apply the Conjugate Gradient
Least Squares (CGLS) solvers of \cite{hansen,AIRtools} (specifically
the ``IRnnfcgls" code) with non-negativity constraints (since the
physical quantities $n_e$ and $\mu_E$ are known a-priori to be
nonnegative). To solve equations \eqref{Sep1} and \eqref{Sep2} we
apply the heuristic least squares solvers of \cite{hansen,AIRtools}
(specifically the ``IRhtv" code) with TV penalties and non-negativity
constraints. To solve \eqref{JTV} and \eqref{LPLS} we apply the
codes of \cite{JR1}, modified so as to suit a Gaussian noise model (a
Poisson model is used in \cite[equation 3]{JR1}). The relative
reconstruction error $\epsilon$ is calculated as
$\epsilon=\|\vx-\vy\|_2/\|\vx\|_2$, where $\vx$ is the ground truth
image and $\vy$ is the reconstruction. For all methods compared
against we simulate data and added noise as in equations \eqref{data}
and \eqref{noise}, and the noise level added for each simulation is
$\eta=0.1$ ($10\%$ noise). We choose $\alpha$ for each method
such that $\epsilon$ is minimized for a noise level of $\eta=0.1$.
That is we are comparing the best possible performance of each method.
We set $\beta$ for JTV and LPLS to the values used on the ``lines2"
data set of \cite{JR1}. We do not tune $\beta$ to the best performance
(as with $\alpha$) so as to give fair comparison between TV, JLAM, JTV
and LPLS. After the optimal hyperparameters were selected, we
performed 100 runs of TV, JLAM, JTV and LPLS on both phantoms for 100
randomly selected sets of materials. That is, for 100 randomly chosen
sets of values from figure \ref{fig2} and the NIST database, with the
NIST values corresponding to the nonzero parts of the phantoms. We
present the mean ($\mu_{\epsilon}$) and standard deviation
($\sigma_{\epsilon}$) relative errors over 100 runs in the left-hand
of tables \ref{T1new} and \ref{T2new} for the simple and complex
phantom respectively. The results are given in the form
$\epsilon_{\pm}=\mu_{\epsilon}\pm\sigma_{\epsilon}$ for each method.
In addition to the relative error $\epsilon$, we also provide metrics
to measure the structural accuracy of the results. Specifically we
will compare $F$-scores on the image gradient and support, as is done
in \cite{DICE1,DICE2}. The gradient $F$-score \cite{DICE2} measures
the wavefront set reconstruction accuracy, and the support $F$-score
\cite[page 5]{DICE1} (see DICE score) is a measure of the geometric
accuracy. That is, the support $F$-score checks whether the
reconstructed phantoms are the correct shape and size. The $F$-score
takes values on $[0,1]$. For this metric, values close to one indicate
higher performance, and conversely for values close to zero. Similarly
to $\epsilon$, we present the $F$-scores of the randomized tests in
the form $F_{\pm}=\mu_{F}\pm\sigma_{F}$, where $\mu_F$ and $\sigma_F$
are the mean and standard deviation $F$-scores respectively. In all
tables, the support $F$-scores are labelled by $\text{supp}(n_e)$ and
$\text{supp}(\mu_E)$, and by $\nabla n_e$ and $\nabla\mu_E$ for the
gradient $F$-scores.

\subsection{Results and discussion}
\label{RnD}
See figure \ref{F4} for image reconstructions of the simple phantom
using TV, JLAM, JTV and LPLS, and see table \ref{T1} for the corresponding $\epsilon$ and $F$-score values. See table \ref{T1new} for the $\epsilon_{\pm}$ and $F_{\pm}$ values calculated over 100 randomized simple phantom reconstructions. For the
complex phantom, see figure \ref{CF1} for image reconstructions, and table \ref{T2} for the $\epsilon$ and $F$-score values. See table \ref{T2new} for $\epsilon_{\pm}$ and $F_{\pm}$. In the
separate reconstruction of $n_e$ (using method TV) we see a blurring
of the ground truth image edges (wavefront directions) in the
horizontal direction and there are artefacts in the reconstruction due to limited data, as predicted by our microlocal theory. In the TV reconstruction of $\mu_E$ we see a similar effect, but in this case we fail to resolve the wavefront directions in the vertical direction due to limited line integral data. This is as predicted by the theory of section \ref{microsec2} and \cite{borg2018analyzing}.
\begin{figure}[!h]
\centering
\begin{subfigure}{0.42\textwidth}
\includegraphics[width=1.0\linewidth, height=5.5cm]{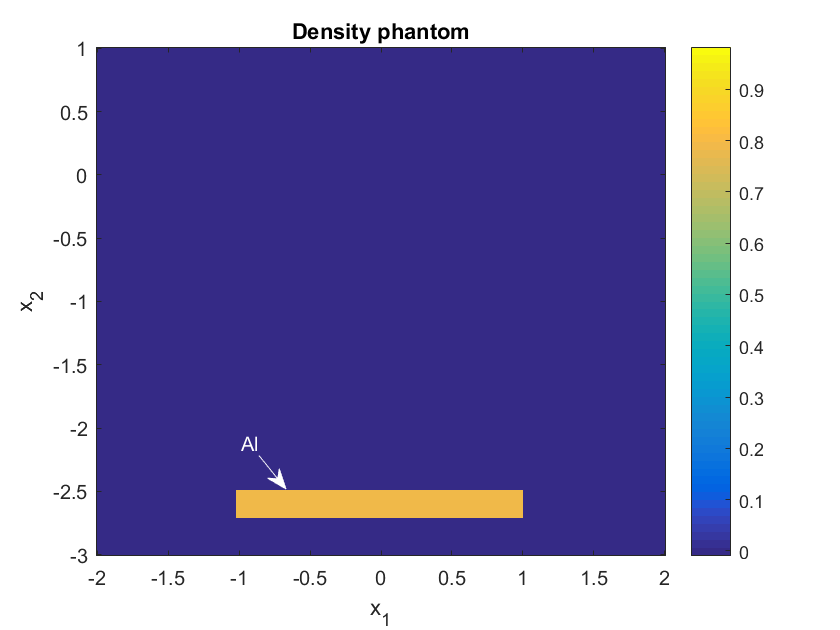} 
\end{subfigure}\hspace{5mm}
\begin{subfigure}{0.42\textwidth}
\includegraphics[width=1.0\linewidth, height=5.5cm]{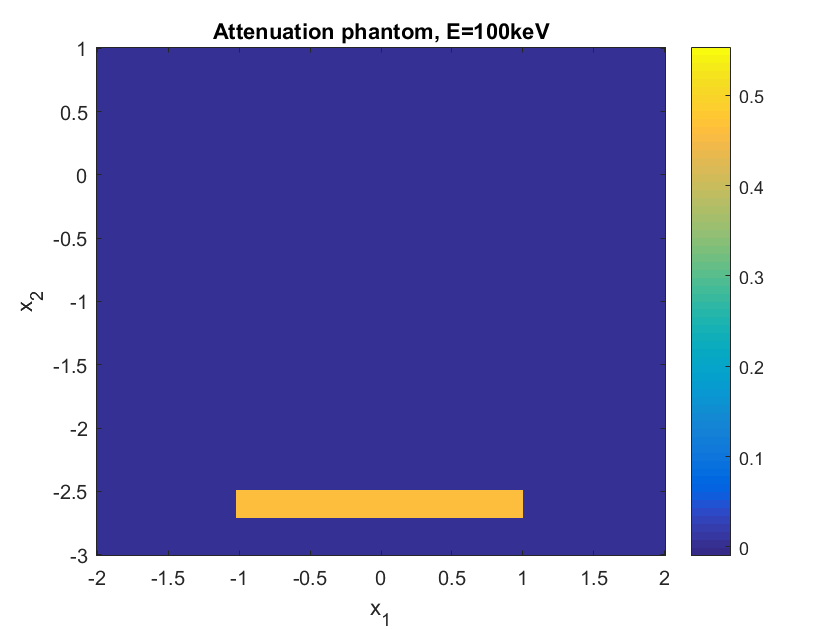} 
\end{subfigure}
\caption{Horizontal Al bar density (left) and attenuation (right) phantoms.}
\label{Barphan}
\end{figure}
In section \ref{microsec1} we discovered the existence also of
nonlocal artefacts in the $n_e$ reconstruction, which were induced
by the mappings $\lambda_{ij}$. However these were found to lie largely outside the imaging space unless the singularity in question $(\vx,\boldsymbol{\xi})\in \text{WF}(n_e)$ were such that $\vx$ is close to the detector array (see figures \ref{FC1} and \ref{FC2}). Hence why we do not see the effects of the $\lambda_{ij}$ in the phantom reconstructions, as the phantoms are bounded sufficiently away from the detector array. The added regularization may smooth out such artefacts also, which was found to be the case in \cite{webber2020microlocal}. 

Using the joint reconstruction methods
(i.e. JLAM, JTV and LPLS) we see a large reduction in the image
artefacts in $n_e$ and $\mu_E$, since with joint data we are able to
stably resolve the image singularities in all directions. The
improvement in $\epsilon$ and the $F$-score is also significant, particularly in the $\mu_E$ reconstruction. Thus it seems that
the joint data is the greater contributor (over the
regularization) to the improvement in the image quality, as the approaches with joint
data each perform well. 
\begin{figure*}
\setlength{\tabcolsep}{5pt}
\begin{tabular}{ c|cc }  
  &  Density $n_e$ & Attenuation $\mu_E$, $E=100$keV \\ \hline \\[-0.4cm] 
 \rotatebox{90}{\hspace{2.05cm} TV}  & 
   \includegraphics[ width=0.4\linewidth, height=0.4\linewidth, keepaspectratio]{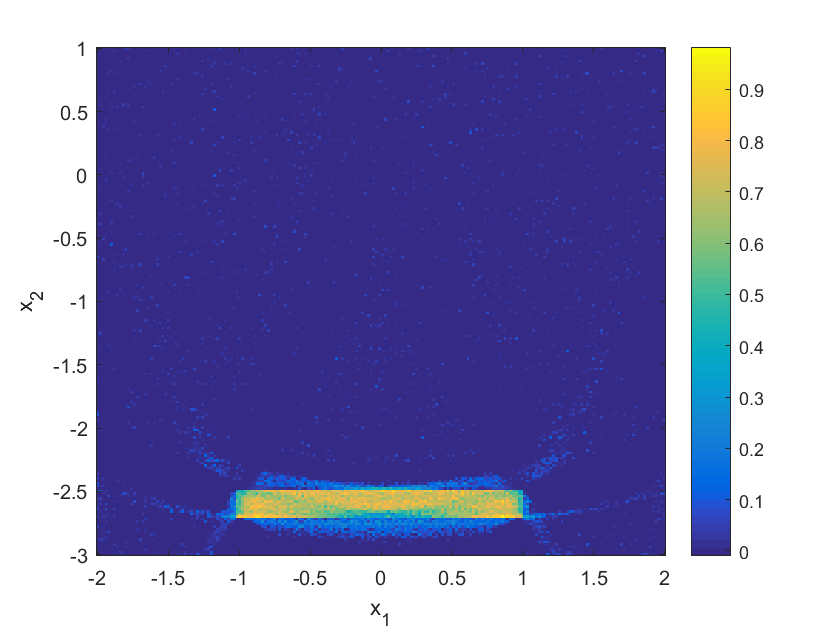} &
   \includegraphics[ width=0.4\linewidth, height=0.4\linewidth, keepaspectratio]{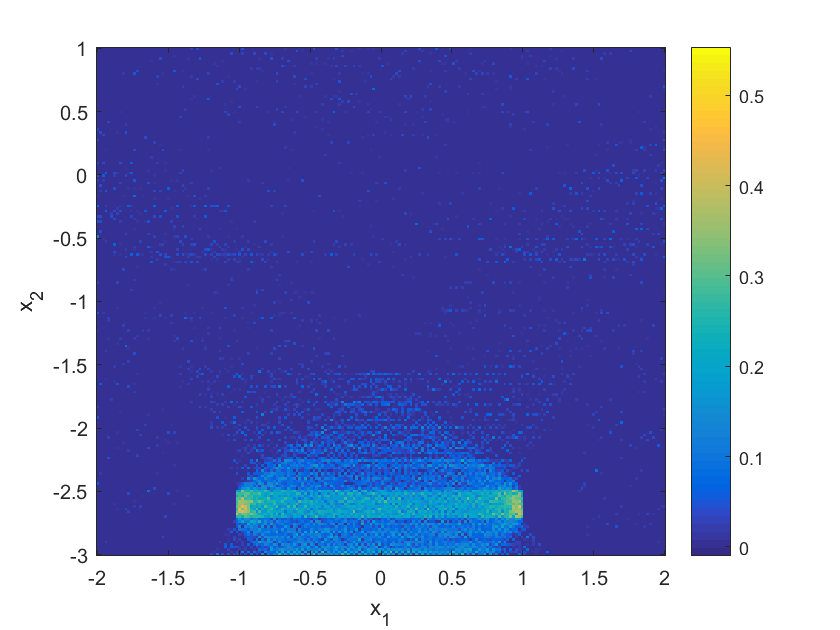} \\ 
 \rotatebox{90}{\hspace{1.8cm} JLAM} &
  \includegraphics[ width=0.4\linewidth, height=0.4\linewidth, keepaspectratio]{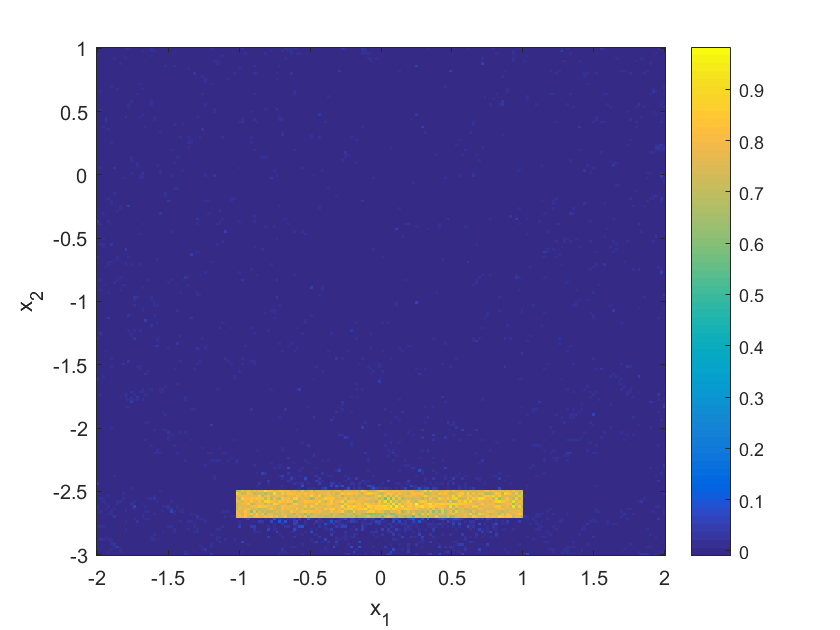} &
   \includegraphics[ width=0.4\linewidth, height=0.4\linewidth, keepaspectratio]{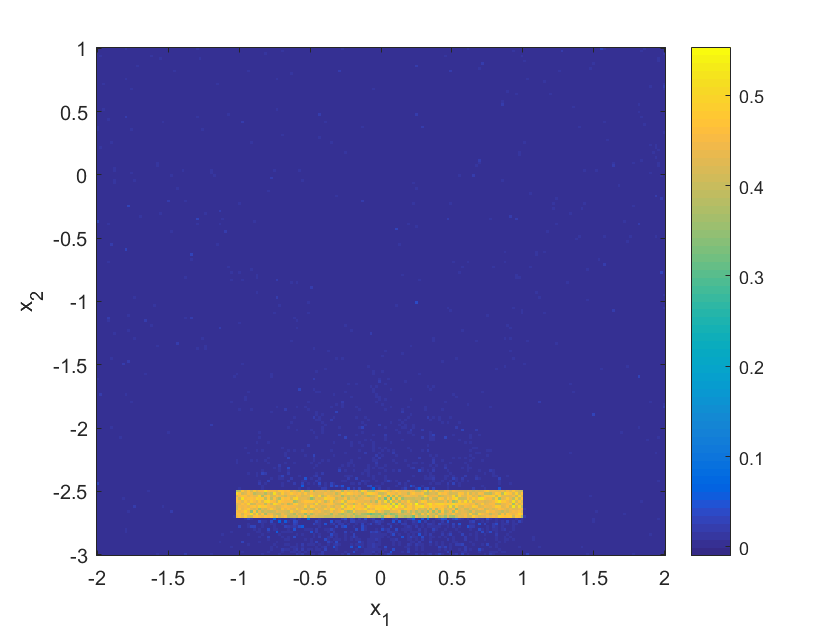} \\ 
 \rotatebox{90}{\hspace{2cm} JTV} &
   \includegraphics[ width=0.4\linewidth, height=0.4\linewidth, keepaspectratio]{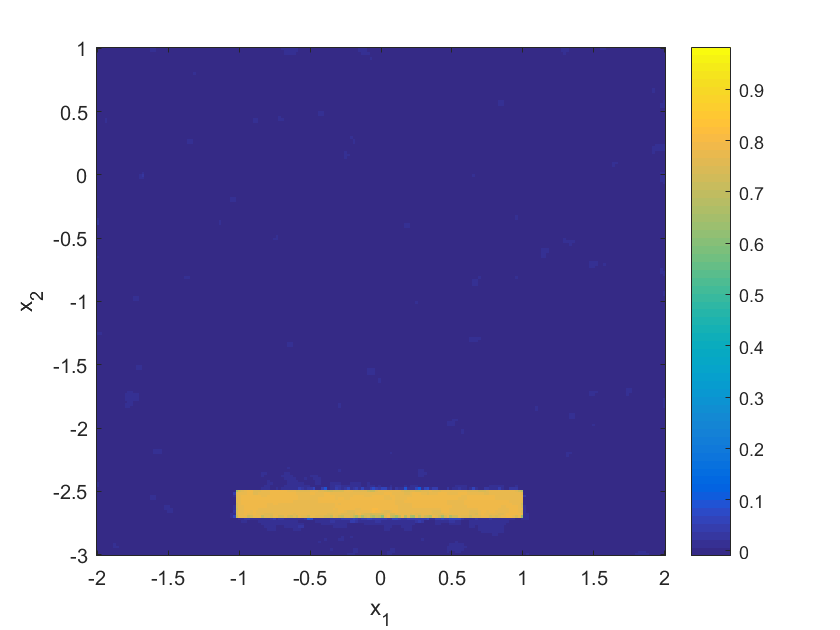} &
   \includegraphics[ width=0.4\linewidth, height=0.4\linewidth, keepaspectratio]{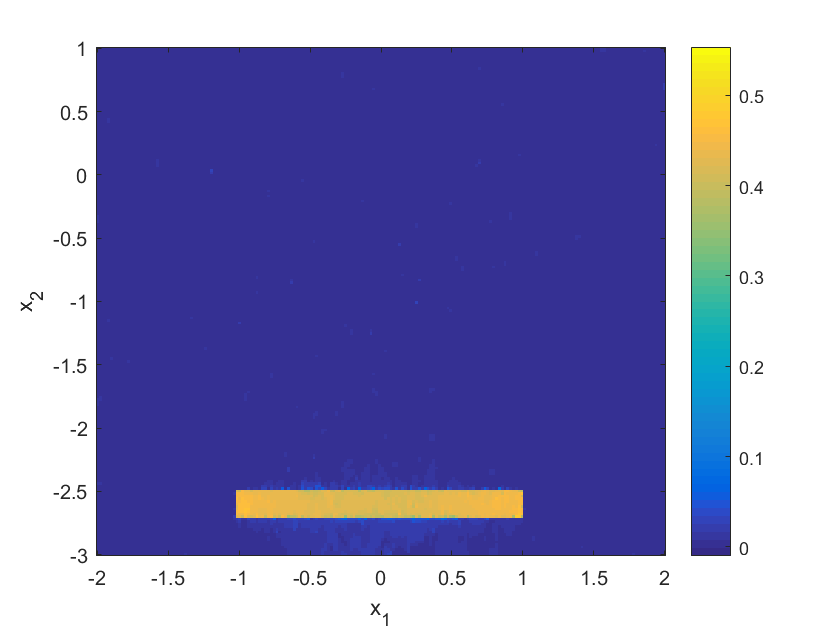} \\ 
 \rotatebox{90}{\hspace{1.8cm} LPLS} &
   \includegraphics[ width=0.4\linewidth, height=0.4\linewidth, keepaspectratio]{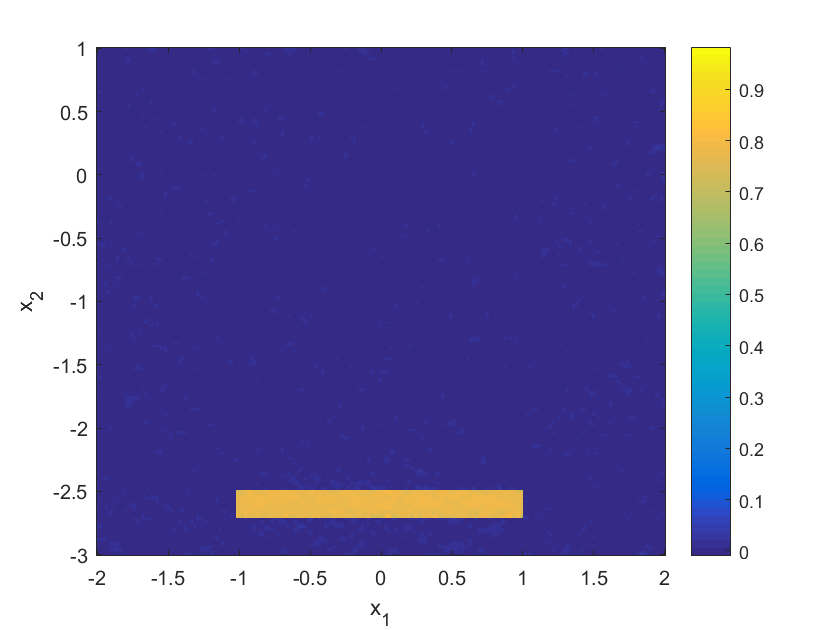} &
   \includegraphics[ width=0.4\linewidth, height=0.4\linewidth, keepaspectratio]{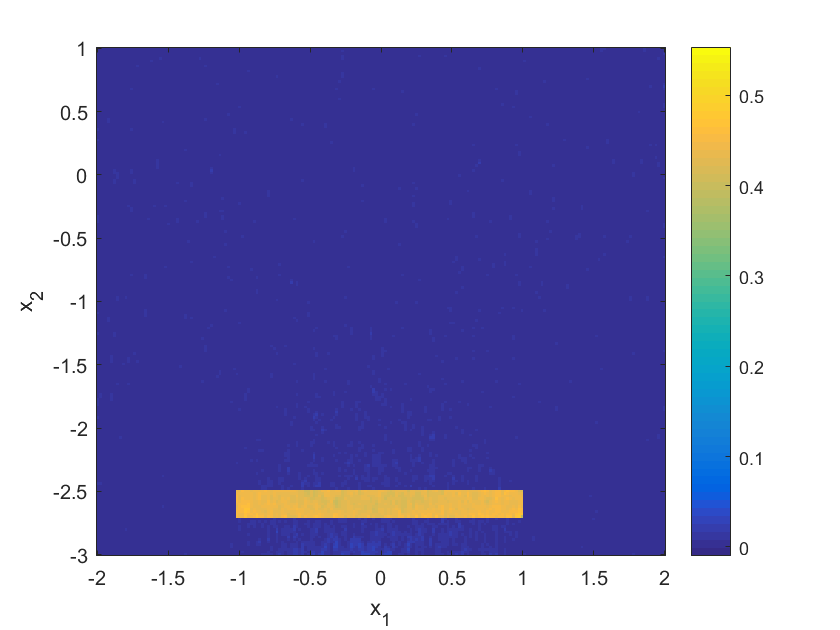} \\ 
\end{tabular}
\caption{Horizontal bar phantom reconstructions, noise level $\eta=0.1$. Comparison of methods TV, JLAM, JTV and LPLS.}
\label{BF1}
\end{figure*}
Upon comparison of JLAM, JTV and LPLS, the $\epsilon$ metrics are
significantly improved when using JTV and LPLS over JLAM, but the
image quality and $F$-scores are highly comparable. This indicates
that, while the noise in the reconstruction is higher using JLAM, the
recovery of the image edges and support is similar using JLAM, JTV and
JLAM. As theorized, the lambda regularizers were successful in
preserving the wavefront sets of $\mu_E$ and $n_e$. However there is a
distortion present in the nonzero parts of the JLAM reconstruction.
This is the most notable difference in JLAM and JTV/LPMS, and is
likely the cause of the $\epsilon$ discrepancy. So while the edge
preservation and geometric accuracy of JLAM is of a high quality (and
this was our goal), the smoothing properties of JLAM are not up to par
with the state-of-the-art currently. We note however that the JTV and
LPLS objectives are nonlinear (with LPLS also non-convex) and require
significant additional machinery (e.g. in the implementation of the
code of \cite{JR1} used here) in the inversion when compared to JLAM,
which is a straight forward implementation of linear least squares
solvers.
\begin{table}[!h]
\hspace{-1.05cm}
\begin{subtable}{.49\linewidth}\centering
{
\begin{tabular}{| c | c | c | c | c | c | c | c |}
\hline
$\epsilon$ & TV  & JLAM   & JTV & LPLS  \\ \hline
$n_e$     & $.28$ & $.09$ & $.04$ & $.02$ \\ 
$\mu_E$  & $.68$ & $.11$ & $.09$ & $.07$  \\ \hline
\end{tabular}}
\end{subtable}
\begin{subtable}{.49\linewidth}\centering
{\begin{tabular}{| c | c | c | c | c | c | c | c |}
\hline
$F$-score & TV  & JLAM   & JTV & LPLS  \\ \hline
$\text{supp}(n_e)$     & $.71$ & $.99$ & $\sim 1$ & $\sim 1$ \\ 
$\nabla n_e$  & $.64$ & $.82$ & $.79$ & $.83$  \\ 
  $\text{supp}(\mu_E)$   & $.54$ & $\sim 1$ & $\sim 1$ & $\sim 1$ \\ 
$\nabla\mu_E$  & $.53$ & $.82$ & $.80$ & $.90$  \\\hline
\end{tabular}}
\end{subtable}
\caption{Al bar phantom $\epsilon$ and $F$-score comparison using TV, JLAM, JTV and LPLS.}
\label{TB1}
\end{table}
\begin{table}[!h]
\hspace{-1.05cm}
\begin{subtable}{.49\linewidth}\centering
{
\begin{tabular}{| c | c | c | c | c | c | c | c |}
\hline
$\epsilon_{\pm}$  & JLAM  & JTV & LPLS  \\ \hline
$n_e$     & $.10\pm.02$ & $.04\pm.004$ & $.03\pm.02$ \\ 
$\mu_E$  & $.12\pm.03$ & $.12\pm.03$ & $.08\pm.03$ \\ \hline
\end{tabular}}
\end{subtable}
\begin{subtable}{.49\linewidth}\centering
{\begin{tabular}{| c | c | c | c | c | c | c | c |}
\hline
$F_{\pm}$ & JLAM   & JTV & LPLS  \\ \hline
$\text{supp}(n_e)$     & $.99\pm.02$ & $\sim 1\pm.003$ & $\sim 1\pm.001$ \\ 
$\nabla n_e$  & $.83\pm.02$ & $.79\pm.02$ & $.84\pm.02$ \\
$\text{supp}(\mu_E)$    & $.98\pm.06$ & $.99\pm.02$ & $\sim 1\pm.008$ \\ 
$\nabla\mu_E$  & $.81\pm.03$ & $.77\pm.02$ & $.85\pm.02$ \\ \hline
\end{tabular}}
\end{subtable}
\caption{Randomized bar phantom $\epsilon_{\pm}$ and $F_{\pm}$ comparison over all NIST materials considered  (153 runs) using JLAM, JTV and LPLS.}
\label{TB2}
\end{table}

\subsection{Reconstructions with limited data}
The simple and complex phantoms considered thus far are supported within $\Gamma$ (the yellow region of figure \ref{fig:directions}) so as to allow for a full wavefront coverage in the reconstruction. To investigate what happens when the object is supported outside of $\Gamma$, we present additional reconstructions of an Aluminium bar phantom with support towards the bottom (close to $x_2=-3$) of the reconstruction space. See figure \ref{Barphan}. In this case we have limited data and the full wavefront coverage is not available with the combined X-ray and Compton data sets. Image reconstructions of the Al bar phantom are presented in figure \ref{BF1}, and the corresponding $\epsilon$ and $F$-score values are displayed in table \ref{TB1}. See table \ref{T2new} for the $\epsilon_{\pm}$ and $F_{\pm}$ values corresponding to the randomized bar phantom reconstructions. In this case $\epsilon_{\pm}$ and $F_{\pm}$ were calculated from reconstructions of 153 bar phantoms (we used 100 runs previously), replacing the Al density value of figure \ref{Barphan} with one of each NIST value considered (153 in total). The reconstruction processes and hyperparameter selection applied here were exactly the same as for the simple and complex phantom. In this case we see artefacts in the Compton reconstruction along curves which follow the shape of the boundary of $\Gamma$, and the X-ray artefacts constitute a vertical blurring as before. The $\epsilon$ error when using JLAM, JTV and LPLS is more comparable in this example (compared to tables \ref{T1} and \ref{T2}), particularly in the case of the $\mu_E$ phantom. The image quality and $F$-scores are again similar as with the simple and complex phantom examples. All joint reconstruction methods were successful in removing the image artefacts observed in the separate reconstructions, and thus can offer satisfactory image quality under the constraints of limited data. However this is only a single test of the capabilities of JLAM, JTV and LPLS with limited data and we leave future work to conclude such analysis.

\section{Conclusions and further work}\label{conclusion}
Here we have introduced a new joint reconstruction method ``JLAM" for low effective $Z$ imaging ($Z<20$), based on ideas in lambda tomography. We considered primarily the ``parallel line segment" geometry of \cite{webber2019compton}, which is motivated by system architectures for airport security screening applications. In section \ref{microsec1} we gave a microlocal analysis of the toric section transform $\Tc$, which was first proposed in \cite{webber2019compton} for a CST problem. Explicit expressions were provided for the microlocal artefacts and verified through simulation. Section \ref{microsec2} explained the X-ray CT artefacts using the theory of \cite{borg2018analyzing}. Following the theory of sections \ref{microsec1} and \ref{microsec2}, we detailed the JLAM algorithm in section \ref{results}. Here we conducted simulation testing and compared JLAM to separate reconstructions using TV, and to the nonlinear joint inversion methods, JTV \cite{JR2} and LPLS \cite{JR1} from the literature. The joint inversion methods considered (i.e. JLAM, JTV and LPLS) were successful in preserving the image contours in the reconstruction, as predicted. However the smoothing applied by JLAM was not as effective as JTV and LPLS, and we saw a distortion in the JLAM reconstruction (see figures \ref{F4} and \ref{CF1}). JTV and LPLS were thus shown to offer better performance than JLAM, with LPLS producing the best results overall. The advantages of JLAM over JTV and LPLS are in the fast, linear inversion, and the reduction in tuning parameters (one for JLAM, two for JTV/LPLS). Given the linearity of JLAM, the ideas of JTV and LPLS can be easily integrated with lambda regularization to modify the objectives of the literature and improve further the edge resolution of the reconstruction. To preserve the linearity of JLAM we could also combine JLAM with a Tikhonov regularizer. This may help smooth out the distortion observed in the JLAM reconstruction. We leave such ideas for future work.

 \section*{Acknowledgements} We would like to thank the journal reviewers for their helpful comments and insight towards this article, particularly in regards to the simulation study. This material is based upon work
supported by the U.S.\ Department of Homeland Security, Science and
Technology Directorate, Office of University Programs, under Grant
Award 2013-ST-061-ED0001. The views and conclusions contained in this
document are those of the authors and should not be interpreted as
necessarily representing the official policies, either expressed or
implied, of the U.S. Department of Homeland Security.  The work of the
second author was partially supported by U.S.\ National Science
Foundation grant DMS 1712207.  The authors thank the referees for
thorough reviews and thoughtful comments that improved the article.
\bibliographystyle{abbrv} \bibliography{referencesMicCom}

\appendix \section{Generating the plots of figure \ref{fig2}}
\label{plots} The generation of the plots of figure \ref{fig2} is
explained in more detail here. We will explain the generation of the
plot for $E=100$keV. Refer to figure \ref{Fplots}. We first plotted
$\mu_E$ for $E=100$keV against $n_e$ for all materials in the NIST
database \cite{hubbell1969photon} with effective $Z$ less than 20.
This is the left hand plot of figure \ref{Fplots}. The set of
materials with effective $Z<20$ was
$$Z_{\text{eff}}=\{Z : \sigma_E(Z)<\sigma_E(20), E=100\text{keV}\},$$
where $\sigma_E$ is the electron cross section. We noticed a large outlier (coal, or amorphous Carbon) which corrupts the correlation in our favour, and hence we chose to remove the material from consideration in simulation. The outlier is highlighted in the left hand plot. After the outlier was removed we noticed a number of materials located at the origin (with negligible attenuation coefficient and density, such as air) in the middle scatter plot of figure \ref{Fplots}. As such materials again bias the correlation and plot standard deviation in our favour, these were removed to produce the left hand plot of figure \ref{fig2} in the right hand of figure \ref{Fplots}. The same points were removed in the generation of the right hand plot of figure \ref{fig2} also, for $E=1$MeV.
\begin{figure}[!h]
\begin{subfigure}{0.32\textwidth}
\includegraphics[width=0.9\linewidth, height=4cm]{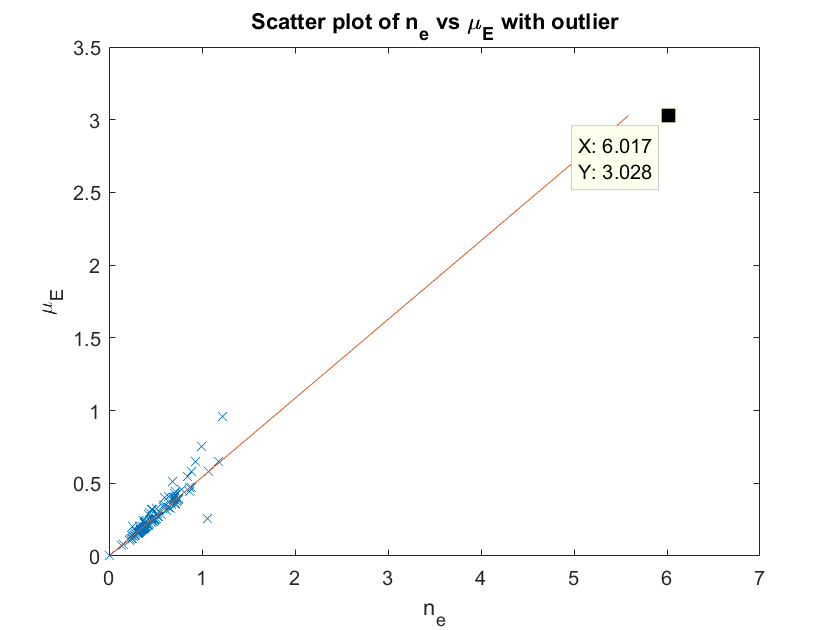}
\end{subfigure}
\begin{subfigure}{0.32\textwidth}
\includegraphics[width=0.9\linewidth, height=4cm]{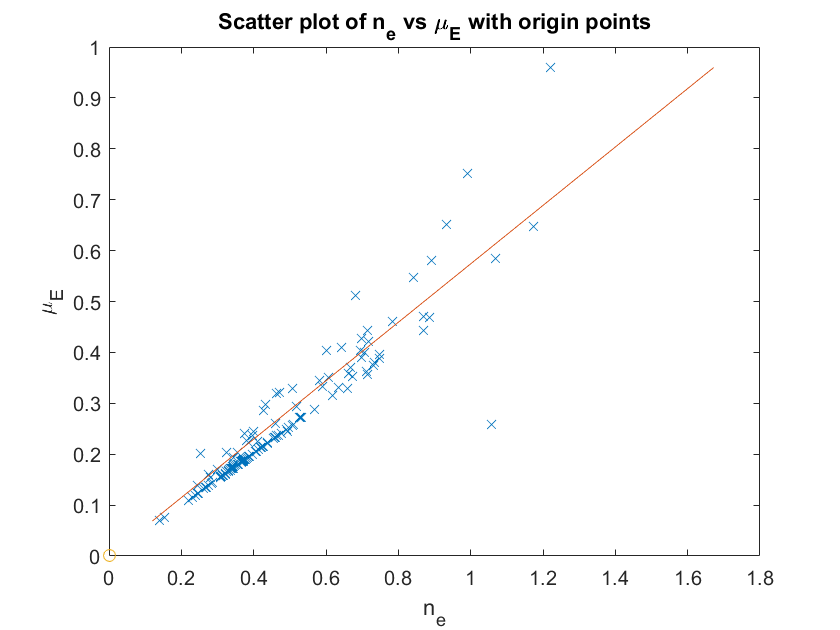} 
\end{subfigure}
\begin{subfigure}{0.32\textwidth}
\includegraphics[width=0.9\linewidth, height=4cm]{scat.png}
\end{subfigure}
\caption{Scatter plot with outlier and origin points included (left, R=0.98), scatter plot with the outlier removed and origin points included, the origin points highlighted by an orange circle (middle, R=0.95), and the scatter plot of figure \ref{fig2} with outliers and origin points removed (right, R=0.93).}
\label{Fplots}
\end{figure}

\end{document}